\renewcommand{\tocsection}[3]{%
  \indentlabel{\@ifnotempty{#2}{\bfseries\ignorespaces#1 #2\quad}}\bfseries#3}
\renewcommand{\tocsubsection}[3]{%
  \indentlabel{\@ifnotempty{#2}{\ignorespaces#1 #2\quad}}#3}
\newcommand\@dotsep{4.5}
\def\@tocline#1#2#3#4#5#6#7{\relax
  \ifnum #1>\c@tocdepth 
  \else
    \par \addpenalty\@secpenalty\addvspace{#2}%
    \begingroup \hyphenpenalty\@M
    \@ifempty{#4}{%
      \@tempdima\csname r@tocindent\number#1\endcsname\relax
    }{%
      \@tempdima#4\relax
    }%
    \parindent\z@ \leftskip#3\relax \advance\leftskip\@tempdima\relax
    \rightskip\@pnumwidth plus1em \parfillskip-\@pnumwidth
    #5\leavevmode\hskip-\@tempdima{#6}\nobreak
    \leaders\hbox{$\m@th\mkern \@dotsep mu\hbox{.}\mkern \@dotsep mu$}\hfill
    \nobreak
    \hbox to\@pnumwidth{\@tocpagenum{\ifnum#1=1\bfseries\fi#7}}\par
    \nobreak
    \endgroup
  \fi}
\renewcommand\csname r@tocindent0\endcsname{0pt}
\newtheorem{thm}{Theorem}[section]
\newtheorem{defn}[thm]{Definition}
\newtheorem{lem}[thm]{Lemma}
\newtheorem{cor}[thm]{Corollary}
\newtheorem{prop}[thm]{Proposition}
\newtheorem{oss}[thm]{Remark}
\begin{document}

\title[Nonexistence of closed timelike geodesics in Kerr spacetimes]{Nonexistence of closed timelike geodesics\\ in Kerr spacetimes}

\author{Giulio Sanzeni}

\address{Ruhr-Universit\"at Bochum,  Fakult\"at f\"ur Mathematik,  Universit\"atsstra\ss e 150,  44801,  Bochum, Germany}
\email{ giulio.sanzeni@rub.de}

\maketitle

\vspace{0.5cm}

 \begin{abstract}
The Kerr-star spacetime is the extension over the horizons and in the negative radial region of the Kerr spacetime.  Despite the presence of closed timelike curves below the inner horizon,  we prove that the timelike geodesics cannot be closed in the Kerr-star spacetime.  Since the existence of closed null geodesics was ruled out by the author in [G.  Sanzeni,\cite{sanzeni2024non} (2024)],   this result shows the absence of closed causal geodesics in the Kerr-star spacetime.
\end{abstract} 

\vspace{0.5cm}

\textbf{Keywords:}  closed timelike geodesics,  closed timelike curves,  Kerr-star spacetime,  Kerr spacetime,  elliptic integrals

\setcounter{tocdepth}{2}
\setcounter{tocdepth}{3}

\section{\textbf{Introduction}}

\subsection{The Kerr solution and its chronology violations}

The Kerr spacetime is a stationary,  axisymmetric and asymptotically flat black hole solution of Einstein’s vacuum field equations found by R. P.  Kerr \cite{Kerr-paper}.  This spacetime depends on a \textit{mass parameter}  $M$ and a \textit{rotation parameter} $a$ (angular momentum per unit mass).  The static spherically symmetric Schwarzschild solution \cite{Schw_paper} is obtained from the Kerr solution in the limit case $a=0$. The slowly rotating ($|a|<M$) Kerr spacetime have two horizons,  an outer \textit{event horizon} and an inner \textit{causality horizon}.  If the Kerr spacetime is analytically extended over the horizons and in the negative radial region \cite{Boyer-Lindquist_paper,  KBH_book},  from now on called the Kerr-star spacetime,  through every point below the causality horizon there exists a closed timelike curve \cite{Carter_causality}.  In this paper,  we prove that despite chronology violations,  the timelike geodesics cannot be closed in the Kerr-star spacetime.  This work follows the strategy adopted in \cite{sanzeni2024non} in which we proved the absence of closed null geodesics.  Therefore as the G\"odel spacetime \cite{Godel},  the Kerr-star spacetime is not causal but it does not contain closed causal geodesics,  see \cite{Kundt,  Chandr_Wright,  Nolan_godel}.


\subsection{Result}\label{subsection Result}
Consider a spacetime $\big(\mathcal{M},\mathbf{g}\big)$,  \textit{i.e.} a time-oriented connected Lorentzian manifold,  and a geodesic curve $\gamma:I=[a,b]\rightarrow \mathcal{M}$.   $\gamma$ is called \textit{closed geodesic}  if $\gamma(a)=\gamma(b)$ and $\gamma'(a)=\lambda\gamma'(b)\neq 0$,  for some real number $\lambda\neq 0$.  If $\gamma$ is timelike,  then $\lambda=1$.
The purpose of this paper is to prove the nonexistence of closed timelike geodesics in the Kerr-star extension of the slowly rotating ($|a|<M$) Kerr black hole,  described in detail in \S \ref{definiton of Kerr}. 

\begin{thm}\label{main theorem}
Let $K^*$ be the Kerr-star spacetime.  Then there are no closed timelike geodesics in $K^*$.
\end{thm}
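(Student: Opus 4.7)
The strategy mirrors the author's treatment of the null case in \cite{sanzeni2024non}, with the extra datum $\mu^2 = 1$ coming from the timelike normalisation $-g(\gamma',\gamma') = \mu^2$. I would work in Boyer-Lindquist coordinates $(t,r,\theta,\varphi)$ on $K^*$ and exploit the four first integrals: energy $E = -g(\gamma',\partial_t)$, axial angular momentum $L = g(\gamma',\partial_\varphi)$, Carter constant $K$, and $\mu^2 = 1$. Reparametrising by Mino time $\lambda$, defined by $\Sigma\, d\lambda = d\tau$ with $\Sigma = r^2 + a^2\cos^2\theta$, the radial and polar motions decouple as
\begin{equation*}
\left(\tfrac{dr}{d\lambda}\right)^2 = R(r), \qquad \left(\tfrac{d\theta}{d\lambda}\right)^2 = \Theta(\theta),
\end{equation*}
where $R(r)$ is a quartic with leading coefficient $E^2 - \mu^2$ and $\Theta(\theta)$ is a quadratic in $\cos^2\theta$, both depending on $(E,L,K)$. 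Since $\gamma$ is closed and timelike we have $\lambda = 1$, so all four coordinate functions must return to their initial values (up to a $2\pi$-shift in $\varphi$) after some Mino-period $\Lambda$, and the tangent vector must return exactly.

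\textbf{Step 1 (localisation).} A closed timelike curve in $K^*$ necessarily meets the chronology-violating region, which is contained in $\{r < 0\}\cap\{g_{\varphi\varphi} < 0\}$. I would first show that a closed timelike geodesic $\gamma$ must actually enter $\{r<0\}$, and that the radial motion is either trapped in a bounded oscillation between two simple roots of $R$, or pinned at a double root (spherical orbit). The spherical case and the case of constant $\theta$ (orbits along the symmetry axis or in the equatorial plane) are degenerate and should be dispatched by a direct inspection of the geodesic equations.

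\textbf{Step 2 (closure as elliptic identities).} In the generic case, writing $\Lambda_r,\Lambda_\theta$ for the Mino-periods of $r$ and $\theta$, closure forces $\Lambda_r/\Lambda_\theta \in \mathbb{Q}$ and, over the common period $\Lambda$,
\begin{align*}
\Delta t &= \oint\!\left(\frac{(r^2+a^2)\bigl[E(r^2+a^2)-aL\bigr]}{\Delta(r)} + aL - a^2 E\sin^2\theta\right)\!d\lambda = 0,\\
\Delta\varphi &= \oint\!\left(\frac{a\bigl[E(r^2+a^2)-aL\bigr]}{\Delta(r)} - aE + \frac{L}{\sin^2\theta}\right)\!d\lambda \in 2\pi\mathbb{Z}.
\end{align*}
Each integrand splits into a term depending only on $r$ and a term depending only on $\theta$, so each condition reduces to a relation between complete elliptic integrals in the roots of $R$ and $\Theta$.

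\textbf{Step 3 (sign obstruction).} The heart of the argument is to show that, when $r$ oscillates in the chronology-violating region and $\mu^2 = 1$, the $r$-part of $\Delta t$ has a strict sign that the $\theta$-part cannot compensate. The presence of the $\mu^2 r^2$ contribution in $R$ — absent in \cite{sanzeni2024non} — is what makes this sign definite: it controls the behaviour of $R$ at $r \to -\infty$ and pins down the location of its real roots relative to $\Delta(r) = 0$. I would estimate the radial elliptic integral from below and the polar one from above (or vice versa) to conclude $\Delta t \neq 0$, contradicting closure.

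\textbf{Main obstacle.} The principal difficulty is the uniformity of the sign analysis across the admissible parameter space $(E,L,K)$ compatible with $\mu^2 = 1$ and with radial motion reaching $\{r<0\}$. Two sub-cases are expected to be delicate: the vanishing-energy case $E = 0$, in which $\Delta t$ degenerates and the obstruction must be extracted instead from $\Delta\varphi \notin 2\pi\mathbb{Z}$; and configurations where $R$ develops multiple roots close to a root of $\Delta$, making the elliptic integrals nearly singular. Producing monotonicity or convexity estimates for the relevant elliptic integrals that survive uniformly through these regimes will be the hardest step.
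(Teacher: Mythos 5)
Your overall framework --- separated first-order equations, the four constants of motion, and a closure condition $\Delta t=0$ tested against elliptic integrals --- is the right one and matches the paper's. But the plan misplaces the difficulty, and this is not a cosmetic issue: it changes which case actually has to be fought. The ``generic'' situation you set up in Steps 2--3, a radial oscillation between two simple roots of $R$ in $\{r<0\}$ coupled to a $\theta$-oscillation with $\Lambda_r/\Lambda_\theta\in\mathbb{Q}$, does not survive the root analysis. After localising any closed timelike geodesic to $\{r<r_-\}$ (causality of blocks I, II and a transversality argument at the horizons) and excluding $\{0<r<r_-\}$ by the spacelike foliation $\{t=\mathrm{const}\}$, the remaining cases split by the sign of the Carter constant $Q$: for $E=0$ one is forced to $Q\ge 0$ and the orbit is confined to $0\le r<r_-$ (so no $\Delta\varphi$ argument is needed --- your expectation that $E=0$ is delicate is unfounded); for $E\neq 0$ with $Q\ge 0$ one has $R(0)\le 0$, so bounded motion at negative $r$ is impossible (Descartes/fly-by arguments), and positive-$r$ motion is again killed by the foliation. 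The only survivor is $E\neq0$, $Q<0$, and there $R$ is convex with at most two negative roots, so the \emph{only} bounded radial behaviour is $r\equiv\mathrm{const}<0$. In other words, the doubly-periodic resonant case you plan to estimate never occurs, and the spherical orbits you dismiss as ``degenerate \ldots dispatched by a direct inspection'' are precisely the hard case.

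Concretely, what is missing is the entire analysis of spherical timelike geodesics with $Q<0$: one must first prove a lower bound $r>-M$ (again via a spacelike foliation of $\{t=\mathrm{const}\}\cap\{r\le -M\}$), then solve $\mathcal{R}(r)=\mathcal{R}'(r)=0$ to get the explicit two-parameter families $(\Phi_\pm,\mathcal{Q}_\pm)$ of admissible constants of motion, and finally show that over one full $\theta$-oscillation
\begin{align*}
\Delta t \;=\; \frac{E}{\sqrt{E^2+q}}\left[\,2|a|\sqrt{u_-}\,\mathcal{E}(x)+\frac{2B(r,a,\Phi)}{|a|\sqrt{u_+}}\,\mathcal{K}\!\left(\frac{x}{x-1}\right)\right]
\end{align*}
is strictly positive. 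That sign is \emph{not} definite at the level of the raw prefactors; it requires a Pfaff transformation of $\mathcal{K}$ to rebalance the coefficients, the inequality $\mathcal{E}(x)>\mathcal{K}(x/(x-1))$ for $x<0$, and then a genuinely computational verification that $-a^2\mathcal{Q}E^2/(E^2+q)>B^2(r,a,\Phi)$ for both families on the admissible range of $(r,q)$. Your Step 3 heuristic --- that the $\mu^2 r^2$ term in $R$ pins the radial roots and makes the radial part of $\Delta t$ dominate --- has no purchase here because there is no radial integral at all; the competition is between two $\theta$-integrals. As written, the proposal would either get stuck analysing a non-occurring case or wave through the one case where the theorem could actually fail.
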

The nonexistence of closed null geodesics in the Kerr-star spacetime was proved in the following result.

\begin{thm}[Theorem $1.1$, \cite{sanzeni2024non}]\label{thm no closed null}
Let $K^*$ be the Kerr-star spacetime.  Then there are no closed null geodesics in $K^*$.
\end{thm}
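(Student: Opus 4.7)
The plan is to apply Carter's separation of variables to the null geodesic equations. Introducing the Mino time $\tau$ via $d\tau = d\lambda/\Sigma$ (with $\Sigma = r^2 + a^2\cos^2\theta$), the radial and angular motions decouple:
\begin{equation*}
\left(\frac{dr}{d\tau}\right)^2 = R(r), \qquad \left(\frac{d\theta}{d\tau}\right)^2 = \Theta(\theta),
\end{equation*}
with $R$ a quartic in $r$ and $\Theta$ a quadratic in $\cos^2\theta$. A null geodesic is parametrized by three conserved quantities: the energy $E=-p_t$, the axial angular momentum $L_z=p_\varphi$, and the Carter constant $Q$, subject to $g^{\mu\nu}p_\mu p_\nu = 0$ (which fixes the coefficients of $R$ and $\Theta$ through the null condition). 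Equally crucial, the $t$- and $\varphi$-equations separate: in Mino time, $dt/d\tau$ splits as $T_r(r)+T_\theta(\theta)$, where $T_r(r)=(r^2+a^2)P(r)/\Delta(r)$, $P(r)=E(r^2+a^2)-aL_z$, and $T_\theta(\theta)=aL_z-a^2E\sin^2\theta$, with $\Delta(r)=r^2-2Mr+a^2$.

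\textbf{Reduction to a sign problem.} A closed null geodesic forces $r(\tau)$ and $\theta(\tau)$ to be either constant (spherical null orbits) or to oscillate between simple real turning points of $R$ (resp.\ $\Theta$) bounding a connected component of positivity. Let $\Lambda_r, \Lambda_\theta$ be the corresponding Mino-periods. Closure of the full spacetime curve then requires commensurability $n_r\Lambda_r=n_\theta\Lambda_\theta$ for some positive integers $n_r,n_\theta$, together with the azimuthal condition $\Delta\varphi \in 2\pi\mathbb{Z}$ and the temporal-closure condition
\begin{equation*}
\Delta t \;=\; n_r\!\oint_r T_r(r)\,d\tau \;+\; n_\theta\!\oint_\theta T_\theta(\theta)\,d\tau \;=\; 0.
\end{equation*}
The entire proof reduces to showing that this equation admits no solutions for admissible parameters and integers.

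\textbf{Sign analysis.} The plan is to convert each loop integral into an elliptic integral using $d\tau = dr/\sqrt{R(r)}$ and $d\tau = d\theta/\sqrt{\Theta(\theta)}$, and to perform a partial-fraction decomposition of $T_r$ against $\Delta(r)=(r-r_+)(r-r_-)$. This produces complete elliptic integrals of the first, second, and third kind in $r$, plus elementary integrals in $\theta$. One then aims to extract a definite sign for $\Delta t$ by combining (i) positivity of period integrals, (ii) explicit monotonicity bounds on $T_r/\sqrt{R}$ and $T_\theta/\sqrt{\Theta}$ inside the positivity intervals, and (iii) identities relating the elliptic pieces coming from different roots of $R$. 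The null condition $\mu^2=0$ introduces crucial cancellations that one would exploit to turn the algebraic identities into a strict-sign inequality.

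\textbf{Main obstacle.} The principal difficulty is controlling the sign of the radial integrand across the horizons: an oscillating orbit may lie entirely in the chronology-violating region $r<r_-$ (where $\Delta>0$), or straddle one or both of $r_\pm$, and $(r^2+a^2)P(r)/\Delta$ changes sign at each crossing. Several degenerate regimes must also be handled separately: spherical null orbits (where $R$ has a double root), orbits in the principal null congruence ($Q=0$, for which the Carter separation degenerates), and orbits whose $\theta$-motion reaches the rotation axis $\sin\theta=0$. Ensuring a uniform sign of $\Delta t$ across all these cases, and ruling out that the choice of the integers $(n_r,n_\theta)$ can produce a cancellation between the radial and angular loop integrals, constitutes the technical core of the proof and is where the detailed analysis of elliptic integrals advertised in the abstract enters.
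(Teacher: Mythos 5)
Your proposal is a strategy outline rather than a proof, and the step you defer --- ``extract a definite sign for $\Delta t$'' and ``rule out that the choice of the integers $(n_r,n_\theta)$ can produce a cancellation'' --- is precisely the content that has to be supplied; in the generality you set up it cannot be supplied. If both $r$ and $\theta$ oscillate and the orbit is allowed to straddle the horizons, the radial piece $(r^2+a^2)\mathbb{P}(r)/\Delta(r)\,d r/\sqrt{R(r)}$ changes sign and its loop integral is not even convergent across $\Delta(r)=0$, and for orbits confined to one block the radial and angular contributions to $\Delta t$ can have opposite signs, so no uniform inequality of the type you want holds for arbitrary commensurability integers. The idea your proposal is missing is that one never has to face this two-frequency cancellation problem, because the \emph{causal structure} of the Kerr-star extension does most of the work first: the region $\{r\ge r_-\}$ is causal, and the horizons are totally geodesic null hypersurfaces which, by the choice of time orientation $-\partial_r^*$, any future-directed causal curve can cross at most once per component; hence a closed causal geodesic must lie entirely in $\{r<r_-\}$. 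A spacelike foliation by the level sets of $t$ then excludes $0<r<r_-$ (and, with a short computation on $\mathbf{g}(\partial_\phi,\partial_\phi)$, also $r\le -M$), geodesics tangent to the axis are excluded separately, and sign analysis of the quartic $R(r)$ (Descartes' rule for $Q\ge 0$, convexity with only negative roots for $Q<0$) forces the only surviving candidates to be \emph{spherical} orbits with $r=\mathrm{const}\in(-M,0)$ and $Q<0$. Only at that point do elliptic integrals enter, and they do so for a single $\theta$-oscillation with no radial loop integral and no commensurability condition: one shows $\Delta t>0$ outright by comparing $\mathcal{E}$ and $\mathcal{K}$ after a Pfaff transformation and bounding the prefactors using the explicit two-parameter family $(\Phi_\pm,\mathcal{Q}_\pm)$ of spherical orbits.

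Two smaller points. For a closed \emph{null} geodesic the closure condition is only $\gamma'(a)=\lambda\gamma'(b)$ with $\lambda\neq 0$, so ``same tangent vector'' must be replaced by proportionality when you formulate the periodicity of $r,\theta,t,\phi$; this does not change the structure of the argument but should be stated. And the principal null congruence is characterized by $K=0$, hence $Q=-(L-aE)^2\le 0$, not by $Q=0$; the degenerate cases that actually require separate treatment are the axis ($L=0$, handled because the axis is totally geodesic) and the horizon-tangent case (handled because the restphotons, the integral curves of $(r_\pm^2+a^2)\partial_t+a\partial_\phi$, are not closed).
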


\begin{cor}
Let $K^*$ be the Kerr-star spacetime.  Then there are no closed causal geodesics in $K^*$.
\end{cor}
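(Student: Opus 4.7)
The plan is to observe that the corollary follows immediately by combining Theorem \ref{main theorem} with Theorem \ref{thm no closed null}, once one notes that the causal character of a geodesic is preserved along the curve. Concretely, for any geodesic $\gamma$ in a Lorentzian manifold $(\mathcal{M},\mathbf{g})$, the function $s \mapsto \mathbf{g}(\gamma'(s),\gamma'(s))$ is constant, since
\[
\frac{d}{ds}\mathbf{g}(\gamma',\gamma') \;=\; 2\,\mathbf{g}(\nabla_{\gamma'}\gamma',\gamma') \;=\; 0
\]
by the geodesic equation and metric compatibility. Hence every causal geodesic is either timelike at every point of its domain or null at every point of its domain; there is no way to switch between the two classes.

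With this observation in hand, suppose toward contradiction that $\gamma:[a,b]\to K^*$ were a closed causal geodesic. By the dichotomy above, $\gamma$ is either timelike or null throughout $[a,b]$. In the first case it would be a closed timelike geodesic in $K^*$, contradicting Theorem \ref{main theorem}; in the second case it would be a closed null geodesic in $K^*$, contradicting Theorem \ref{thm no closed null}. Either way we reach a contradiction, and so no such $\gamma$ exists. The only thing that really needs to be said in the write-up is this dichotomy; there is no genuine obstacle, since all the substantive work has already been carried out in the main theorem and in \cite{sanzeni2024non}.
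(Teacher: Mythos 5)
Your proposal is correct and is exactly the argument the paper intends: the corollary is stated without proof as the immediate combination of Theorem \ref{main theorem} and Theorem \ref{thm no closed null}, and the constancy of $q=\mathbf{g}(\gamma',\gamma')$ along geodesics (which justifies the timelike/null dichotomy) is already recorded in the paper as the Lorentzian energy constant of motion. Nothing further is needed.
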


\subsection{Geodesic motion in Kerr spacetimes}
The Kerr spacetimes are completely integrable systems.  Indeed for any geodesic there exist four independent constants of motion: the \textit{energy} (associated to a timelike Killing vector field),  the \textit{angular momentum} (associated to a spacelike Killing vector field),  the \textit{Lorentzian energy} (the causal character of the geodesic) and the \textit{Carter constant} (associated to a Killing $2$-tensor) \cite{Carter_causality}.  Therefore one can study the geodesic motion solving a system of four coupled first-order differential equations \cite{Carter_causality,  KBH_book} .  Geodesics restricted on submanifolds were firstly studied.  Boyer and Price \cite{Boyer_Price_1965},  then Boyer and Lindquist \cite{Boyer-Lindquist_paper},  and hence de Felice \cite{deFelice_1968} considered geodesic motion in the equatorial hyperplane $Eq=\{\theta=\pi/2\}$.  Geodesics in the axis of symmetry $A=\{\theta=0,\pi\}$ of the black hole were analysed by Carter \cite{Carter_1966_Axis}.  Wilkins instead studied trapped orbits,  namely geodesics running over a finite radial interval \cite{Wilkins}.  The most exhaustive references about geodesic motion in Kerr spacetimes are the text-books by Chandrasekhar \cite{Chandrasekhar} and O'Neill \cite{KBH_book}.  In this paper,  we first ruled out the existence of closed timelike geodesics strictly contained in $\{0<r<r_{-}\}$ (Prop.  \ref{prop spacelike foliation}),  intersecting the horizons (Prop.  \ref{oss event horizon}) and tangent to the axis (Prop.  \ref{no closed geodesics in axis}).  Starting from \S \ref{steps of other cases},  the remaining timelike geodesics are analyzed.  It turned out that the most difficult ones to investigate are those with non-vanishing energy and negative Carter constant.  Firstly,  we observed that if a geodesic of such kind is closed,  it must have constant $r$-coordinate,  so it must be a spherical geodesic (Prop.  \ref{geod Q<0}).  Secondly,  we obtained a lower bound on the negative constant $r$-coordinate (Prop.  \ref{spherical timelike geodesics have r>-M}).   Finally,  arguing by contradiction we proved that the variation of the $t$-coordinate (see eq.  \eqref{last espression for delta t}) on a full $\theta$-oscillation must be positive (Prop.  \ref{last proposition}) for any spherical timelike geodesic with negative Carter constant.  Therefore it is shown that the timelike geodesics cannot be closed in the Kerr-star spacetime.

\subsection{Organization of the paper}

In \S \ref{definiton of Kerr},  we introduce the Kerr metric and  discuss the definition and properties of the Kerr-star spacetime.  In \S \ref{study of geodesic equations} we recall the set of first order differential equations satisfied by geodesic orbits.  In \S \ref{section: properites of timelike geodesics},  we study the properties of timelike geodesics required to prove the main theorem.  In \S\ref{section: main theorem},  we give the proof of Thm.  \ref{main theorem} split into several cases.  The overall structure of the proof is detailed in \ref{strategy of the proof},  \ref{steps of other cases} and Fig.  \ref{figure steps of proof}.

\section{\textbf{ The Kerr-star spacetime}}\label{definiton of Kerr}

Consider $\mathbb{R}^2\times S^2$ with coordinates $(t,r)\in\mathbb{R}^2$ and $(\theta,\phi)\in S^2$.  Fix two real numbers $a\in\mathbb{R}\setminus \{0\}$,  $M\in \mathbb{R}_{>0}$  and define the functions 
\[
 \rho(r,\theta):= \sqrt{r^2+a^2\cos^2\theta}
\]
and 
\[
\Delta(r):=r^2-2Mr+a^2.
\]

We study the case $|a|<M$ called \textit{slow Kerr},  for which $\Delta(r)$ has two positive roots
\begin{align*}
r_{\pm}=M\pm \sqrt{M^2-a^2}>0
\end{align*}
and define two sets 
\begin{itemize}
\item[(1)] the \textit{horizons} $\mathscr{H}:=\{\Delta(r)=0\}=\{r=r_{\pm}\}:=\mathscr{H}_{-}\,\sqcup \mathscr{H}_{+}$, 
\item[(2)] the \textit{ring singularity} $\Sigma:=\{\rho(r,\theta)=0\}=\{r=0,\;\theta=\pi/2\}$.
\end{itemize}

The {\it Kerr metric}  \cite{Kerr-paper} in {\it Boyer--Lindquist coordinates} is

\begin{align}\label{kerr metric}
\mathbf{g}=-dt\otimes dt + \frac{2Mr}{\rho^2(r,\theta)}(dt-a\sin^2\theta\; d\phi)^2 + \frac{\rho^2(r,\theta)}{\Delta(r)}dr\otimes dr + a^2\sin^4(\theta) d\phi\otimes d\phi  + \rho^2(r,\theta) d\sigma^2  ,
\end{align}
where $d\sigma^2=d\theta\otimes d\theta + \sin^2\theta d\phi\otimes d\phi$ is the $2$-dimensional  (Riemannian) metric of constant unit curvature on the unit sphere $S^2\subset\mathbb{R}^3$ written in spherical coordinates. 

\begin{oss}
The components of $\mathbf{g}$ in Boyer--Lindquist coordinates can be read off the common expression

\begin{align}
\mathbf{g}=-&\bigg(1-\frac{2 M r}{\rho^2(r,\theta)} \bigg)\:dt\otimes dt-\frac{4Mar\sin^2\theta}{\rho^2(r,\theta)}\: dt\otimes d\phi + \nonumber\\+ &\bigg(r^2+a^2+\frac{2Mra^2\sin^2\theta}{\rho^2(r,\theta)} \bigg)\sin^2\theta\:  d\phi\otimes d\phi + \frac{\rho^2(r,\theta)}{\Delta(r)}\: dr\otimes dr + \rho^2(r,\theta)\: d\theta\otimes d\theta.
\end{align}

Nevertheless this last expression does not cover the subsets $\{\theta=0,\pi\}$.

\end{oss}

\begin{lem}
The metric \eqref{kerr metric} is a Lorentzian metric on $\mathbb{R}^2\times S^2\setminus (\Sigma\,\cup \mathscr{H})$.
\end{lem}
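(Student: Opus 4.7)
My plan is to verify the two defining properties of a Lorentzian metric separately: smoothness on the stated domain, and non-degeneracy with Lorentzian signature. Smoothness is the easier half. Both displayed expressions for $\mathbf{g}$ have coefficients that are rational in $r$, $\sin\theta$, $\cos\theta$, and the only denominators that appear are powers of $\rho^2(r,\theta)=r^2+a^2\cos^2\theta$ and of $\Delta(r)=r^2-2Mr+a^2$. These vanish exactly on $\Sigma$ and on $\mathscr{H}$, respectively, so every coefficient is smooth on $\mathbb{R}^2\times S^2\setminus(\Sigma\cup\mathscr{H})$. The role of the first form with $d\sigma^2$ is that the coordinate singularity of $d\phi$ at the polar axis $\theta=0,\pi$ is absorbed by the $\sin^4\theta$ prefactor in $a^2\sin^4\theta\,d\phi\otimes d\phi$ and by the smooth round metric $\rho^2\,d\sigma^2$, so smoothness extends across the axis as well.

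For the signature, I would compute $\det\mathbf{g}$ in the Boyer--Lindquist chart on the open set $\{\sin\theta\neq 0\}$. Since the only off-diagonal entries are $g_{t\phi}=g_{\phi t}$, the matrix factors into a $(t,\phi)$-block and a diagonal $(r,\theta)$-block, giving
\[
\det\mathbf{g}=g_{rr}\,g_{\theta\theta}\,\bigl(g_{tt}g_{\phi\phi}-g_{t\phi}^2\bigr).
\]
A direct expansion of the $(t,\phi)$ block, using $\rho^2=r^2+a^2\cos^2\theta$ to simplify all $\rho^{-2}$ and $\rho^{-4}$ contributions, collapses to $(r^2+a^2)-2Mr=\Delta(r)$ and yields the compact identity $g_{tt}g_{\phi\phi}-g_{t\phi}^2=-\Delta\sin^2\theta$, hence $\det\mathbf{g}=-\rho^4\sin^2\theta$. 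This is strictly negative on the domain away from the axis, so $\mathbf{g}$ is non-degenerate there; a negative determinant in four dimensions forces an odd number of timelike directions, and comparison with the Minkowski limit as $r\to+\infty$ on the asymptotic exterior component fixes the signature as $(-,+,+,+)$. Since the determinant never vanishes on any connected component of the domain, this signature is preserved on every component by continuity.

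To deal with the polar axis, I would switch back to the first form of $\mathbf{g}$ and specialize to $\sin\theta=0$: all $\sin^2\theta$ terms drop out, and the metric becomes block-diagonal between the $(t,r)$-plane and the round factor $\rho^2\,d\sigma^2$ on $S^2$. The $S^2$-block is positive definite because $\rho^2=r^2+a^2>0$ on the axis, while the $(t,r)$-block has entries $g_{tt}=-\Delta/\rho^2$ and $g_{rr}=\rho^2/\Delta$ whose product is $-1$, so it is a $(1,1)$-block of the correct signature. This completes the check that $\mathbf{g}$ is a smooth non-degenerate Lorentzian metric on the full claimed domain. The only genuinely computational step, and the one I expect to be the main obstacle if any, is the algebraic identity $g_{tt}g_{\phi\phi}-g_{t\phi}^2=-\Delta\sin^2\theta$; the $\rho^{-2}$ and $\rho^{-4}$ cancellations that produce the factor $\Delta$ are the heart of the argument, and the rest is bookkeeping.
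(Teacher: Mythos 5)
The paper states this lemma without proof --- it is recorded as a known fact (the standard reference being O'Neill's book) and the text moves straight on to the failure of the coordinates on $\mathscr{H}$ and $\Sigma$ --- so there is no argument of the author's to compare yours against. On its own merits, your computation is the standard and correct one: the coefficients are rational in $r,\sin\theta,\cos\theta$ with denominators only $\rho^2$ and $\Delta$, the first displayed form of $\mathbf{g}$ shows smoothness across the axis (since $a\sin^2\theta\,d\phi$ and $d\sigma^2$ are smooth tensors on $S^2$), and the identity $g_{tt}g_{\phi\phi}-g_{t\phi}^2=-\Delta\sin^2\theta$, hence $\det\mathbf{g}=-\rho^4\sin^2\theta$, checks out.

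The one step that does not work as written is the final continuity claim. The domain $\mathbb{R}^2\times S^2\setminus(\Sigma\cup\mathscr{H})$ has three connected components (the BL blocks I, II, III), and a signature determined by the Minkowski limit on the asymptotic component cannot be ``preserved on every component by continuity'': continuity only propagates the signature within a single component, and a negative determinant by itself allows signature $(3,1)$ as well as $(1,3)$ on the inner blocks. Your own axis computation repairs this: the axis $\{\theta=0,\pi\}$ meets every block, and there you exhibit the signature directly as one minus and three pluses (positive definite $S^2$-block, $(t,r)$-block of determinant $-1$), so nonvanishing of $\det\mathbf{g}$ on each connected block then propagates the Lorentzian signature throughout that block. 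You should restructure the last step so that the axis points, rather than the $r\to+\infty$ limit, serve as the base points for the continuity argument on blocks II and III.
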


The Boyer--Lindquist coordinates or the metric tensor fail on the sets $\mathscr{H}$ and $\Sigma$.  In order to extend the metric tensor to the horizons,  one has to introduce a new set of coordinates.  No change of coordinates can be found in order to extend the metric across the ring singularity.  For a detailed study of the nature of the ring singularity,  see for instance \cite{Chrusc_singularity}.

\begin{defn}
The subsets 
\[
\textrm{I}:=\{r>r_{+}\},\;  \textrm{II}:=\{r_{-}<r<r_{+}\},\; \textrm{III}:=\{r<r_{-}\}\subset \{(t,r)\in\mathbb{R}^2,\; (\theta,\phi)\in S^2 \}\setminus (\Sigma\,\cup \mathscr{H})
\] 
are called the Boyer--Lindquist (BL) blocks.
\end{defn}

\begin{oss}
The BL blocks I,  II and III are the connected components of $\mathbb{R}^2\times S^2\setminus (\Sigma\,\cup \mathscr{H})$.  Each block with the restriction of the metric tensor \eqref{kerr metric} is a connected Lorentzian $4$-manifold.  To get spacetimes,  one has to choose a time orientation on each block.
\end{oss}

\begin{figure}[H]
\centering
\includegraphics[scale=0.4]{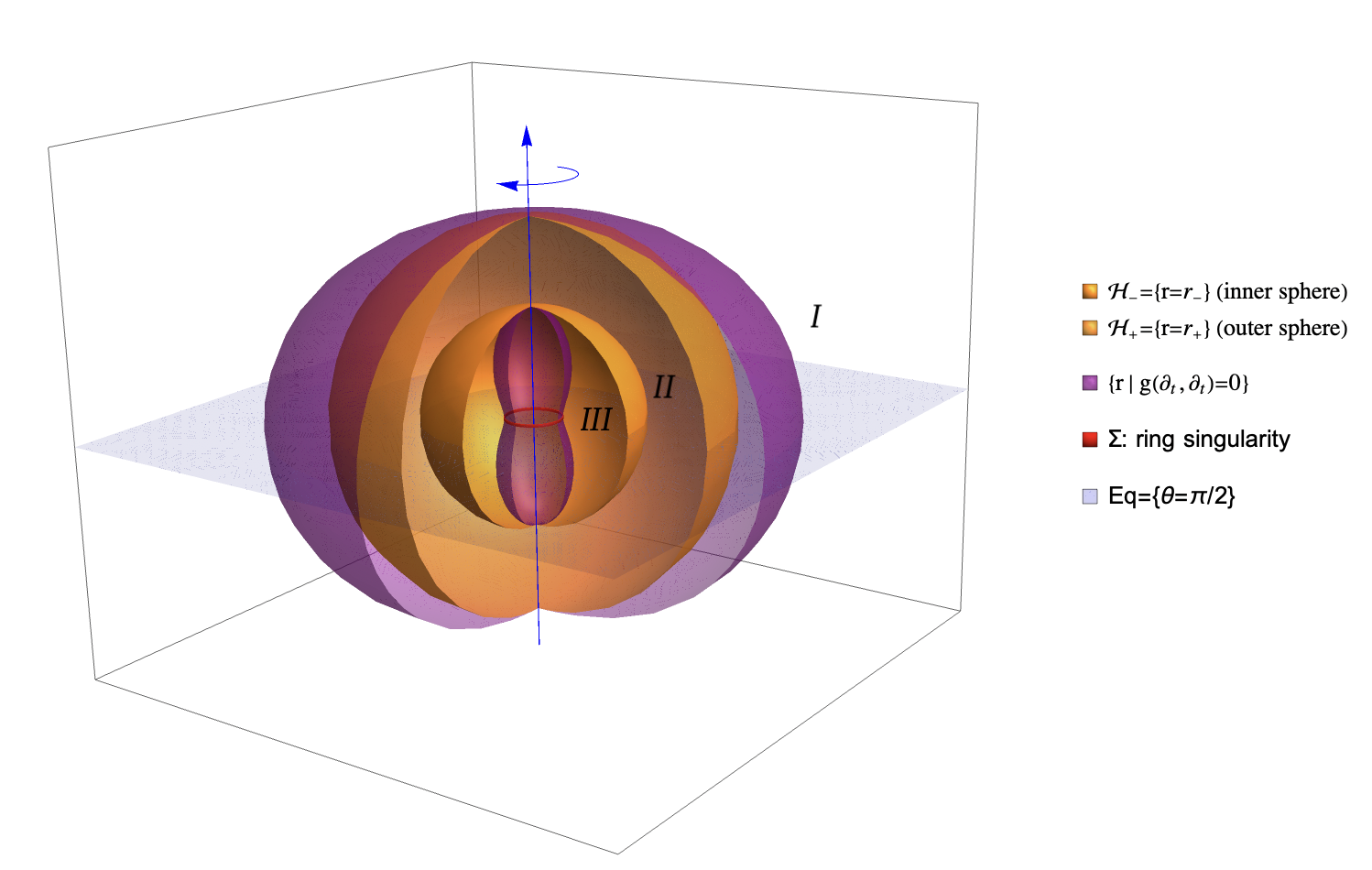} 
\caption{This picture shows a $t$-slice $\{t\}\times\mathbb{R} \times S^2$,  with the radius drawn as $e^r$,  so that $r=-\infty$ is at the center of the figure.  The \textit{Ergoregion} $\{\mathbf{g}(\partial_t,\partial_t)>0\}$ (at fixed time $t$) is  the region between the purple ellipsoids in which $\partial_t$ becomes spacelike.}
\vspace*{-5mm}
\end{figure}

\vspace{1cm}

\subsection{Time orientation of BL blocks }

We define a future time-orientation of block  I using the gradient timelike vector field $-\nabla t$.  Indeed,  the hypersurfaces $\{t=\textrm{const}\}$ are spacelike in block I.  Notice that the coordinate vector field $\partial_t$ is timelike future-directed for $r\gg r_+$ on block I, since $\mathbf{g}(-\nabla t,\partial_t)=-1$.\\

We define a time-orientation of block II by declaring the vector field $-\partial_r$,  which is timelike in II,  to be future-oriented.\\

We define a time-orientation of block III by declaring the vector field $V:=(r^2+a^2)\partial_t + a\partial_\phi$,  which is timelike in III,  to be future-oriented.\\

With this choice of time-orientations,  each block is a \textit{spacetime},  \textit{i.e.} a connected time-oriented Lorentzian $4$-manifold.

\subsection{Kerr spacetimes}

\begin{defn}\label{definition kerr spacetime}
A {\it Kerr spacetime} is an analytic spacetime $(\mathcal{M}_{\textit{Kerr}},\mathbf{g})$ such that
\begin{enumerate}
\item[(1)]  there exists a family of open disjoint isometric embeddings $\Phi_i \colon \mathcal{B}_i\hookrightarrow \mathcal{M}_{\textit{Kerr}}$ $(i\in \mathbb{N})$ of BL blocks $(\mathcal{B}_i,
\mathbf{g}|_{\mathcal{B}_i})$ (\textit{i.e.} $\mathbf{g}|_{\mathcal{B}_i}=\Phi^*_i \mathbf{g}|_{\Phi_i(\mathcal{B}_i)}$) such that $\cup_{i\in\mathbb{N}} \Phi_i(\mathcal{B}_i)$ is dense in $\mathcal{M}_{\textit{Kerr}}$;

\item[(2)] there are analytic functions $r$ and $C$ on $\mathcal{M}_{\textit{Kerr}}$  such that their restriction on each $\Phi_i(\mathcal{B}_i)$ of condition $(1)$ is $\Phi_i$-related to the Boyer--Lindquist functions $r$ and $C=\cos\theta$ on $\mathcal{B}_i$;

\item[(3)]  there is an isometry $\epsilon: \mathcal{M}_{\textit{Kerr}}\rightarrow \mathcal{M}_{\textit{Kerr}}$ called the \textit{equatorial isometry} whose restrictions to each BL block sends $\theta$ to $\pi-\theta$,  leaving the  other coordinates unchanged;

\item[(4)] there are Killing vector fields $\tilde{\partial}_t$ and $\tilde{\partial}_\phi$ on $\mathcal{M}_{\textit{Kerr}}$ that restrict  to the Boyer--Lindquist coordinate vector fields $\partial_t$ and $\partial_\phi$ on each BL block.
\end{enumerate}
\end{defn}

\begin{oss}
With abuse of notation,  we identify each block $\mathcal{B}_i$ with its image via the isometric embedding $\Phi_i(\mathcal{B}_i)\subset \mathcal{M}_{\textit{Kerr}}$.
\end{oss}

\begin{lem}
Each time-oriented BL block is a Kerr spacetime.
\end{lem}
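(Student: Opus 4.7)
The plan is to check the four conditions of Definition \ref{definition kerr spacetime} directly for a given time-oriented BL block $\mathcal{B}\in\{\textrm{I},\textrm{II},\textrm{III}\}$, using essentially trivial data: the Kerr-spacetime structure on $\mathcal{B}$ is provided by $\mathcal{B}$ itself.

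For condition (1), I would take the family to consist of a single embedding, namely the identity $\Phi_1=\mathrm{id}\colon\mathcal{B}\hookrightarrow \mathcal{B}$, which is tautologically an isometric embedding of a BL block and whose image is all of $\mathcal{B}$, hence dense. For condition (2), I would observe that although the angular coordinate $\theta$ is not globally defined on $S^2$, both $r$ and $C:=\cos\theta$ extend to globally defined analytic functions on $\mathbb{R}^2\times S^2$: $r$ is just the second projection onto $\mathbb{R}^2$, and $C$ is the restriction to $S^2\subset\mathbb{R}^3$ of the third Euclidean coordinate. Their restrictions to $\mathcal{B}$ are analytic and, trivially, $\mathrm{id}$-related to the BL functions $r$ and $\cos\theta$.

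For condition (3), I would define $\epsilon\colon\mathcal{B}\to\mathcal{B}$ by $(t,r,p)\mapsto (t,r,\iota(p))$, where $\iota\colon S^2\to S^2$ is the reflection across the equator $\{C=0\}$ (in BL coordinates away from the poles this sends $\theta$ to $\pi-\theta$ and fixes $\phi$, and it interchanges the poles). To check that $\epsilon$ is an isometry I would note that every $\theta$-dependent coefficient in \eqref{kerr metric} involves $\theta$ only through $\cos^2\theta$, $\sin^2\theta$ or $\sin^4\theta$, all invariant under $\theta\mapsto\pi-\theta$, while $d\theta\otimes d\theta$ is invariant since $d(\pi-\theta)=-d\theta$. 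The map $\epsilon$ clearly preserves the time orientation of each block because it fixes the coordinates $t$ and $r$ and hence fixes the vector fields $-\nabla t$, $-\partial_r$ and $V=(r^2+a^2)\partial_t+a\partial_\phi$ used to orient I, II and III respectively (observing that $\partial_\phi$ is $\epsilon$-related to itself).

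For condition (4), I would use that the metric components are independent of $t$ and $\phi$, so the coordinate vector fields $\partial_t$ and $\partial_\phi$ are Killing. These extend to globally defined analytic Killing vector fields $\tilde\partial_t$ and $\tilde\partial_\phi$ on the whole of $\mathbb{R}^2\times S^2\setminus(\Sigma\cup\mathscr{H})$, with $\tilde\partial_\phi$ defined to vanish at the poles $\{\theta=0,\pi\}$, and their restrictions to $\mathcal{B}$ are tautologically $\mathrm{id}$-related to $\partial_t$ and $\partial_\phi$. There is no real obstacle here; the only subtlety worth highlighting is the standard one of distinguishing between the BL chart (which misses the poles) and the underlying manifold $\mathbb{R}^2\times S^2$ on which $C$ and $\tilde\partial_\phi$ are genuinely globally defined.
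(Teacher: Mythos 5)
Your direct verification of conditions (1)--(4) of Definition \ref{definition kerr spacetime} is correct; the paper states this lemma without proof, treating it as exactly this routine check, with the single identity embedding, the globally analytic functions $r$ and $C=\cos\theta$, the equatorial reflection of $S^2$, and the Killing fields $\tilde{\partial}_t,\tilde{\partial}_\phi$. The only substantive points — that $C$ and $\tilde{\partial}_\phi$ extend analytically over the axis, and that every $\theta$-dependence in \eqref{kerr metric} is through $\cos^2\theta$, $\sin^2\theta$, $\sin^4\theta$ or $d\theta\otimes d\theta$ so the reflection is a time-orientation-preserving isometry — are exactly the ones you flag and handle correctly.
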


\begin{defn}\label{abstract def axis and Eq}
In a Kerr spacetime $\mathcal{M}_{\textit{Kerr}}$,  on any BL block $\mathcal{B}_i$
\begin{enumerate}
\item the \textit{axis} $A=\{\theta=0,\pi\}$ is the set of zeroes of the Killing vector field $\tilde{\partial}_\phi$ as in $(4)$ of Def. \ref{definition kerr spacetime};
\item the \textit{equatorial hyperplane} $Eq=\{\theta=\pi/2\}$ is the set of fixed points of the equatorial isometry $\epsilon$ as in $(3)$ of Def.  \ref{definition kerr spacetime}.
\end{enumerate}
\end{defn}

\subsection{The Kerr-star spacetime}

\begin{defn}
On each BL block,  we define the \textit{Kerr-star coordinate} functions:
\begin{align}
t^*:=t+\mathcal{T}(r)\in\mathbb{R},\hspace{1cm} \phi^*:=\phi+\mathcal{A}(r)\in S^1,
\end{align}
with $d\mathcal{T}/dr:=(r^2+a^2)/\Delta(r)$ and $d\mathcal{A}/dr:=a/\Delta(r)$.
\end{defn}

\begin{lem}[\cite{KBH_book},  Lemma  $2.5.1$]
For each BL block $B$,  the map $\xi^*=(t^*,r,\theta,\phi^*):B\setminus A\rightarrow \xi^*(B)\subseteq\mathbb{R}^4$ is a coordinate system on $B\setminus A$,  where $A$ is the axis.  We call $\xi^*$ a \textit{Kerr-star} coordinate system.
\end{lem}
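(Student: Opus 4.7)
The plan is to verify this as a straightforward change of coordinates. First I would check that $\mathcal{T}(r)$ and $\mathcal{A}(r)$ are well-defined analytic functions on each BL block $B$. By definition of the horizons, $\Delta(r)$ vanishes precisely on $\{r=r_\pm\}$, which separate the BL blocks. Hence on each BL block the derivatives $d\mathcal{T}/dr=(r^2+a^2)/\Delta(r)$ and $d\mathcal{A}/dr=a/\Delta(r)$ are analytic rational functions of $r$, and on the connected $r$-interval corresponding to the block (i.e.\ $(r_+,\infty)$, $(r_-,r_+)$ or $(-\infty,r_-)$) one integrates, up to an inessential additive constant, to get analytic functions.

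Next I would view $\xi^*$ as the composition $\xi^*=F\circ\xi$, where $\xi=(t,r,\theta,\phi)$ is the BL chart on $B\setminus A$ (this is already a coordinate system since the axis is exactly where the BL coordinates degenerate) and
\[
F(t,r,\theta,\phi)=\bigl(t+\mathcal{T}(r),\;r,\;\theta,\;\phi+\mathcal{A}(r)\bigr).
\]
The Jacobian of $F$ with respect to the ordering $(t,r,\theta,\phi)$ is upper triangular with ones on the diagonal, so $\det dF\equiv 1$ and $F$ is a local diffeomorphism. Global bijectivity follows by writing down the inverse: given $(t^*,r,\theta,\phi^*)$, set $t=t^*-\mathcal{T}(r)$ and $\phi=\phi^*-\mathcal{A}(r)$, which is smooth in the arguments. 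Composing with the BL chart, $\xi^*$ is therefore a diffeomorphism of $B\setminus A$ onto its image in $\mathbb{R}^3\times S^1$ and hence a coordinate system.

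The only point requiring a word of care is that $\phi$ and $\phi^*$ take values in $S^1=\mathbb{R}/2\pi\mathbb{Z}$. Since $\mathcal{A}(r)$ depends only on $r$, for each fixed $r$ the translation $\phi\mapsto\phi+\mathcal{A}(r)$ is a well-defined diffeomorphism of $S^1$ (it is a translation on the universal cover that descends quotient-wise), so no ambiguity arises in the angular variable. There is no real obstacle; the content of the lemma is just the observation that shifts depending only on $r$ preserve the coordinate property, and the proof is essentially exhausted by the Jacobian computation together with the explicit inverse.
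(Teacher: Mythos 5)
Your argument is correct: the paper itself gives no proof of this lemma, merely citing Lemma 2.5.1 of O'Neill, and your verification (analyticity of $\mathcal{T}$ and $\mathcal{A}$ on each block since $\Delta\neq 0$ there, the unit-determinant triangular Jacobian, and the explicit inverse $t=t^*-\mathcal{T}(r)$, $\phi=\phi^*-\mathcal{A}(r)$) is exactly the standard one. Your remark that the target is really $\mathbb{R}^3\times S^1$ rather than $\mathbb{R}^4$, and that the $r$-dependent shift of $\phi$ descends to a well-defined translation of $S^1$, is the only point needing care and you have handled it properly.
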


Because the  Kerr-star coordinate functions differ from BL coordinates only by additive functions of $r$,  the coordinate vector fields $\partial_t,\partial_\theta,\partial_\phi$ are the same in the two systems,  except that in $K^*$ they extend over the horizons.  However,  the coordinate vector field associated to $r$ does change its form,  and we define $\partial^*_r:=\partial_r-\Delta(r)^{-1}V$,  where $V$ is one of the canonical vector fields defined in Section \ref{study of geodesic equations}.  Note that if we use Kerr-star coordinates,  we get $\mathbf{g}(\partial^*_r,\partial^*_r)=0$,  \textit{i.e.} $\partial^*_r$ is a null vector field of $K^*$,  while in BL coordinates,  $\mathbf{g}(\partial_r,\partial_r)=\rho^2(r,\theta)/\Delta(r),$ which is singular when $\Delta(r)=0$.

\begin{lem}\label{Kerr-star metric}
The Kerr metric,  expressed in Kerr-star coordinates,  takes the form

\begin{align}
\mathbf{g}=&-\bigg(1-\frac{2 M r}{\rho^2(r,\theta)} \bigg) \:dt^*\otimes dt^* -\frac{4Mar\sin^2\theta}{\rho^2(r,\theta)}\: dt^*\otimes d\phi^*\,+\nonumber\\ 
&+ \bigg(r^2+a^2+\frac{2Mra^2\sin^2\theta}{\rho^2(r,\theta)} \bigg)\sin^2\theta\:   d\phi^*\otimes d\phi^* +2\: dt^*\otimes dr\, +\\
&-2a\sin^2\theta\: d\phi^*\otimes dr + \rho^2(r,\theta)\: d\theta\otimes d\theta.  \nonumber
\end{align}
\end{lem}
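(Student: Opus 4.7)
The proof is a direct change of variables. The plan is to differentiate the defining relations $t^* = t + \mathcal{T}(r)$, $\phi^* = \phi + \mathcal{A}(r)$ to obtain
\[
dt = dt^* - \frac{r^2+a^2}{\Delta}\, dr, \qquad d\phi = d\phi^* - \frac{a}{\Delta}\, dr,
\]
and substitute these into the Boyer--Lindquist expression for $\mathbf{g}$ displayed in the remark following \eqref{kerr metric}. Since $d\theta$ is unchanged and the substitution only adds $dr$-contributions, the coefficients of $dt^*\otimes dt^*$, $dt^*\otimes d\phi^*$, $d\phi^*\otimes d\phi^*$ and $d\theta\otimes d\theta$ are inherited unchanged from the Boyer--Lindquist expression. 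What remains is to verify the coefficients of $dt^*\otimes dr$, $d\phi^*\otimes dr$ and $dr\otimes dr$.

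For the $dt^*\otimes dr$ coefficient I would compute $-2(g_{tt}\mathcal{T}' + g_{t\phi}\mathcal{A}')$. After multiplication by $\rho^2\Delta$ this equals $2\big[(\rho^2-2Mr)(r^2+a^2) + 2Ma^2r\sin^2\theta\big]$, and using $\rho^2 - 2Mr = \Delta - a^2\sin^2\theta$ together with $r^2+a^2-2Mr=\Delta$ the bracket collapses to $\Delta\rho^2$, giving the coefficient $2$. The analogous computation $-2(g_{t\phi}\mathcal{T}' + g_{\phi\phi}\mathcal{A}')$ yields $-2a\sin^2\theta$ for the $d\phi^*\otimes dr$ coefficient by the same algebraic identities.

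The only non-routine step is verifying that the $dr\otimes dr$ coefficient $g_{tt}(\mathcal{T}')^2 + 2g_{t\phi}\mathcal{T}'\mathcal{A}' + g_{\phi\phi}(\mathcal{A}')^2 + g_{rr}$ vanishes; this cancellation of the $1/\Delta$-singularity is the whole conceptual point of the change of coordinates and is the main bookkeeping obstacle. Multiplying through by $\Delta^2$, the three terms with a prefactor $2Mr/\rho^2$ combine into the perfect square
\[
\frac{2Mr}{\rho^2}\big[(r^2+a^2) - a^2\sin^2\theta\big]^2 = 2Mr\rho^2,
\]
using $(r^2+a^2) - a^2\sin^2\theta = \rho^2$. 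Adding the remaining polynomial contribution $-(r^2+a^2)^2 + a^2\sin^2\theta(r^2+a^2) + \rho^2\Delta$ and applying $\Delta + 2Mr = r^2+a^2$ gives
\[
-(r^2+a^2)^2 + a^2\sin^2\theta(r^2+a^2) + \rho^2(r^2+a^2) = (r^2+a^2)\big[\rho^2 + a^2\sin^2\theta - (r^2+a^2)\big] = 0,
\]
where the final equality uses $\rho^2+a^2\sin^2\theta = r^2+a^2$. This completes the calculation.
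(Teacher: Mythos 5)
Your computation is correct: all three identities you need, namely $(r^2+a^2)-a^2\sin^2\theta=\rho^2$, $\rho^2-2Mr=\Delta-a^2\sin^2\theta$ and $\Delta+2Mr=r^2+a^2$, do collapse the cross terms to $2\,dt^*\otimes dr$ and $-2a\sin^2\theta\,d\phi^*\otimes dr$ and kill the $dr\otimes dr$ coefficient, which is exactly the standard verification. The paper itself states this lemma without proof (it is the computation behind Lemma~2.5.1 and the Kerr-star form of the metric in O'Neill's book), so your direct substitution is precisely the intended argument; the only cosmetic point is that the Boyer--Lindquist expression you substitute into is valid only off the axis $A=\{\theta=0,\pi\}$, and the resulting expression is then extended over $A$ by continuity, as the paper notes in Definition~\ref{Kerr-star spacetime}.
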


Now all coefficients in $\mathbf{g}$ are well defined on the horizons $\mathscr{H}=\{\Delta(r)=0\}$,  hence it is a well defined Lorentzian metric on  $\mathbb{R}^2\times S^2\setminus \Sigma$ and constitutes an analytic extension of \eqref{kerr metric} over $\mathscr{H}$.

\begin{defn}\label{Kerr-star spacetime}
The \textit{Kerr-star spacetime} is a Kerr spacetime as defined in \ref{definition kerr spacetime} given by the tuple $(K^*,\mathbf{g},o)$ with $K^*=\{(t^*,r)\in\mathbb{R}^2,\, (\theta,\phi^*)\in S^2\}\setminus\Sigma$,  $\mathbf{g}$ as in Lemma \ref{Kerr-star metric}   (extended over the axis) and $o$ is the future time-orientation induced by the null vector field $-\partial^*_r.$
\end{defn}

\begin{oss}
Note that the time-orientations on individual BL blocks agree with the ones defined for the Kerr-star spacetime:  $\mathbf{g}(-\partial^*_r,\partial_t)=-1<0$ on I,   $\mathbf{g}(-\partial^*_r,-\partial_r)=\mathbf{g}(\partial_r,\partial_r)=\rho^2(r,\theta)/\Delta(r)<0$ on II and $\mathbf{g}(-\partial^*_r,V)=\frac{1}{\Delta(r)}\mathbf{g}(V,V)=-\rho^2(r,\theta)<0$ on III.
\end{oss}

\subsection{Totally geodesic submanifolds of the Kerr-star spacetime}

\begin{lem}[See p. $68$ of \cite{KBH_book}]\label{A and Eq closed totally geod subman}
Let $K^*$ be the Kerr-star spacetime as in Def.  \ref{Kerr-star spacetime}.  The axis $A$ and the equatorial hyperplane $Eq$ of $K^*$ are closed totally geodesic submanifolds of $K^*$.
\end{lem}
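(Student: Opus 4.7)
The plan is to reduce the lemma to two classical facts from semi-Riemannian geometry, which I would record first: (F1) the fixed-point set of any isometry of a semi-Riemannian manifold is a disjoint union of closed, totally geodesic submanifolds; (F2) the zero set of any Killing vector field is a disjoint union of closed, totally geodesic submanifolds. Both are standard and, for instance, implicit in \cite{KBH_book}.

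Granted (F1) and (F2), the proof becomes essentially immediate. By Definition \ref{abstract def axis and Eq}(2), $Eq$ is the fixed-point set of the equatorial isometry $\epsilon$ from Definition \ref{definition kerr spacetime}(3), so (F1) yields that $Eq$ is closed and totally geodesic. By Definition \ref{abstract def axis and Eq}(1), $A$ is the zero set of the Killing vector field $\tilde{\partial}_\phi$ from Definition \ref{definition kerr spacetime}(4), so (F2) gives the statement for $A$. Closedness is automatic in both cases: $Eq$ is the fixed-point set of a continuous self-map of the Hausdorff space $K^*$, and $A$ is the zero locus of a continuous vector field.

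To justify (F1) at a fixed point $p$, I would exploit the equivariance $\epsilon \circ \exp_p = \exp_p \circ\, d\epsilon_p$ on a normal neighborhood of $p$ to identify
\[
\mathrm{Fix}(\epsilon) \;=\; \exp_p\bigl(\ker(d\epsilon_p - \mathrm{id})\bigr)
\]
locally. Since $d\epsilon_p$ is a linear isometry of $(T_pK^*,\mathbf{g}_p)$, its fixed space is a linear subspace, so the right-hand side is a submanifold; moreover, any geodesic through $p$ with initial velocity in that fixed space is sent pointwise to itself by $\epsilon$ and hence remains in $\mathrm{Fix}(\epsilon)$, giving the totally geodesic property. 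For (F2), I would apply (F1) to the local flow $\{\phi_s\}$ of the Killing field $\tilde{\partial}_\phi$ and observe that, near a zero $p$ of the field, a linearization of $\phi_s$ at $p$ identifies $\{\tilde{\partial}_\phi = 0\}$ with $\mathrm{Fix}(\phi_s)$ locally for small $s \neq 0$.

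No real obstacle arises in the argument, because both reductions are purely formal once (F1) and (F2) are in hand; the only Kerr-star-specific input is the existence of the equatorial isometry $\epsilon$ and of the Killing field $\tilde{\partial}_\phi$, and both are built into the axiomatic description of a Kerr spacetime in Definition \ref{definition kerr spacetime}. This is also why O'Neill \cite{KBH_book} disposes of the fact in essentially one line.
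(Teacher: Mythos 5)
Your argument is correct and is essentially the one the paper relies on: the paper gives no proof of its own but defers to O'Neill (p.~68 of \cite{KBH_book}), whose treatment rests on exactly the two facts you isolate, and the paper's Definition~\ref{abstract def axis and Eq} is set up precisely so that $Eq$ is by definition the fixed-point set of the isometry $\epsilon$ and $A$ the zero set of the Killing field $\tilde{\partial}_\phi$. Nothing further is needed.
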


\begin{prop}\cite{KBH_book}\label{H is closed totally geod}
Let $K^*$ be the Kerr-star spacetime.  Then the horizon $\mathscr{H}$ is a closed totally geodesic null hypersurface,  with future hemicone on the $-\partial^*_r$ side.  Moreover,  the restriction of $V:=(r^2+a^2)\partial_t+a\partial_\phi$ (called canonical vector field in \S \ref{study of geodesic equations}) on $\mathscr{H}$ is the unique null vector field on $\mathscr{H}$ that is tangent to $\mathscr{H}$,  hence also normal to $\mathscr{H}$.  The integral curves of $V$ in $\mathscr{H}$ are null pregeodesics.

\end{prop}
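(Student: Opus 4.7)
The plan is to work in Kerr-star coordinates, where the metric is smooth across $\mathscr{H}$, and to use $r$ as a defining function for the two branches $\{r = r_+\}$ and $\{r = r_-\}$. Closedness of $\mathscr{H} = r^{-1}(\{r_-, r_+\})$ is immediate, and smoothness as a hypersurface follows from $dr \neq 0$ there. To see that $\mathscr{H}$ is null, I would compute $\mathbf{g}^{-1}(dr, dr) = g^{rr} = \Delta(r)/\rho^2(r,\theta)$ in Boyer--Lindquist coordinates (using that $g_{r\mu}$ vanishes for $\mu \neq r$, so the $r$-row of $g^{-1}$ is diagonal); this is zero exactly on $\mathscr{H}$, which is a coordinate-independent statement.

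Next I would identify $V$ as the null generator. The metric dual of $dr$ is $\nabla r = (\Delta/\rho^2)\partial_r$ in BL coordinates; substituting $\partial_r = \partial^*_r + \Delta^{-1} V$ gives
\[
\nabla r = \frac{\Delta}{\rho^2}\,\partial^*_r + \frac{1}{\rho^2}\, V,
\]
which on $\mathscr{H}$ reduces to $V/\rho^2$. Hence $V$ is null, tangent, and normal to $\mathscr{H}$. Since on any null hypersurface the kernel of the degenerate induced metric is one-dimensional, $V$ spans the unique null tangent line at each point of $\mathscr{H}$, yielding the uniqueness claim. The future-hemicone statement then follows from the time-orientation convention: since $V(r) = 0$ and $\partial_r(r) = 1$, we have $(-\partial^*_r)(r) = -1 < 0$, so the future-directed null field $-\partial^*_r$ points toward decreasing $r$, identifying the future side of $\mathscr{H}$.

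The most delicate step is showing that the integral curves of $V$ are null pregeodesics, which, combined with the previous steps, yields the totally geodesic property in the standard sense appropriate to null hypersurfaces. The subtlety is that $V = (r^2 + a^2)\tilde\partial_t + a\tilde\partial_\phi$ is \emph{not} a Killing field of $K^*$ since $r^2 + a^2$ is not constant; however, on each branch $\mathscr{H}_\pm$ it coincides with the genuine Killing field $V_\pm := (r_\pm^2 + a^2)\tilde\partial_t + a \tilde\partial_\phi$, which is a constant-coefficient linear combination of the two Killing vectors of Definition \ref{definition kerr spacetime}. Applying the standard Killing identity $\nabla_{V_\pm} V_\pm = \tfrac{1}{2}\nabla\,g(V_\pm, V_\pm)$, and using that $g(V_\pm, V_\pm)$ is a function of $(r,\theta)$ alone vanishing on $\{r = r_\pm\}$ (so its gradient on $\mathscr{H}_\pm$ is proportional to $\nabla r$, hence to $V$ by the previous step), I obtain $\nabla_V V = \kappa V$ on $\mathscr{H}$ for some surface-gravity function $\kappa$. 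This is the pregeodesic condition. The main obstacle I anticipate is this passage from the non-Killing $V$ to the Killing $V_\pm$, together with tracking signs carefully in the BL--to--Kerr-star conversion.
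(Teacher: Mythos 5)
The paper offers no proof of this proposition at all --- it is quoted from O'Neill's book --- so your argument has to stand on its own. Most of it does: the computation $\nabla r = (\Delta/\rho^2)\,\partial^*_r + \rho^{-2}V$, hence $\nabla r = V/\rho^2$ on $\mathscr{H}$, correctly identifies $V$ as the simultaneous normal and null tangent field; the uniqueness of the null tangent direction in a degenerate hyperplane is right; and the passage from the non-Killing $V$ to the Killing fields $V_\pm$ on each branch is sound (integral curves of $V$ preserve $r$, so $V$ and $V_\pm$ have the same integral curves in $\mathscr{H}_\pm$ and the same $\nabla_V V$ there). Two small slips: the Killing identity reads $\nabla_{V_\pm}V_\pm = -\tfrac12\nabla\,\mathbf{g}(V_\pm,V_\pm)$, not $+\tfrac12$ (harmless for your purpose), and the evaluation of $\mathbf{g}^{-1}(dr,dr)=\Delta/\rho^2$ is done in coordinates that are singular on $\mathscr{H}$, so it needs the density-plus-continuity remark to apply on $\mathscr{H}$ itself --- you gesture at this but do not say it.

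The genuine gap is the totally geodesic claim. You derive that the integral curves of $V$ are null pregeodesics and assert that this, combined with the earlier steps, ``yields the totally geodesic property.'' It does not: the null generators of \emph{every} null hypersurface are automatically pregeodesics (for $k=\nabla f$ with $f$ a defining function one has $\mathbf{g}(\nabla_k k, X)=\operatorname{Hess}f(k,X)=\tfrac12 X\big(\mathbf{g}(\nabla f,\nabla f)\big)=0$ for $X$ tangent, since $\mathbf{g}(\nabla f,\nabla f)$ vanishes along the hypersurface), yet a generic null hypersurface is not totally geodesic. Total geodesy is the vanishing of the null second fundamental form $B(X,Y):=\mathbf{g}(\nabla_X V_\pm,Y)$ for $X,Y$ tangent to $\mathscr{H}_\pm$, and the paper later uses it in this full strength (e.g.\ in Prop.~\ref{no closed geodesics in axis}, where a geodesic with a turning point on $\mathscr{H}$ would be tangent to it and hence trapped in it). The missing --- and short --- argument is the Killing-horizon one: $B$ is symmetric because $V_\pm$ is normal to $\mathscr{H}_\pm$, since $\mathbf{g}(\nabla_X V_\pm,Y)-\mathbf{g}(\nabla_Y V_\pm,X)=-\mathbf{g}(V_\pm,[X,Y])=0$ as $[X,Y]$ is tangent; and $B$ is antisymmetric because $V_\pm$ is Killing. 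Hence $B=0$ and $\mathscr{H}_\pm$ is totally geodesic, with the pregeodesic property of the generators as a special case rather than as the source of the conclusion.
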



\subsection{Causal and vicious regions of the Kerr-star spacetime}

\begin{prop}[\cite{KBH_book},  Proposition $2.4.6$] \label{prop I and II casual} 
The BL blocks I and II are causal.
\end{prop}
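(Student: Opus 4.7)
The plan is to exhibit, on each of the BL blocks I and II separately, a smooth real-valued function whose gradient is timelike on the whole block. Any such function is then strictly monotonic along every causal curve (because the metric inner product of a timelike vector with any nonzero causal vector is nonzero), so no closed causal curve can exist in the block. This reduces the proposition to the purely local task of finding a temporal function on each block.

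On block I, I would take $f=t$. The excerpt already records that the level sets $\{t=\text{const}\}$ are spacelike in I---precisely the condition used to time-orient I via $-\nabla t$---so $\nabla t$ is timelike throughout block I. For a causal curve $\gamma$, one then has $(t\circ\gamma)'=\mathbf{g}(\nabla t,\gamma')\neq 0$ at every parameter value, and a closed curve would force $t\circ\gamma$ to take equal endpoint values, a contradiction.

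On block II, I would take $f=r$. Because the metric \eqref{kerr metric} decouples the $r$-direction from the $(t,\phi)$-block, $g_{rr}=\rho^2/\Delta$ is the only relevant component, and a one-line computation gives
\[
\nabla r=\frac{\Delta}{\rho^2}\,\partial_r,\qquad \mathbf{g}(\nabla r,\nabla r)=\frac{\Delta}{\rho^2}.
\]
On block II one has $\Delta<0$, so $\nabla r$ is timelike, and the same monotonicity argument applied to $r\circ\gamma$ rules out closed causal curves in II.

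I do not anticipate a substantial obstacle: the whole argument amounts to observing that $t$ on I and $r$ on II are temporal functions, which is essentially built into the time orientations already fixed in the excerpt. The only care needed is to note that the reasoning works block-by-block in Boyer--Lindquist coordinates and does not require any information about how the blocks are glued together in the Kerr-star spacetime; in particular block III, where $\Delta>0$ again and $\nabla r$ becomes spacelike, is excluded for exactly the reason that the construction fails there (as it must, since the chronology violations live below $r_-$).
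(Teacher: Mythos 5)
Your proof is correct and is essentially the standard argument behind the cited result (O'Neill, Prop.\ 2.4.6), which the paper itself does not reprove: one exhibits a temporal function on each block ($t$ on I, where $\nabla t$ is timelike because $\{t=\mathrm{const}\}$ is spacelike, and $r$ on II, where $g^{rr}=\Delta/\rho^2<0$) and concludes by strict monotonicity along causal curves. No gaps; your closing remark that the construction necessarily fails on block III is also accurate.
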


\begin{cor}\label{causal region of K^*}
Let $K^*$ be the Kerr-star spacetime.  Then the region\\
 I$\;\cup$ II$\; \cup\;  \{r=r_{\pm}\} =\{ t^*\in\mathbb{R},  r\in[r_{-},+\infty),  (\theta,\phi^*)\in S^2\}\setminus \Sigma\subset K^*$ is causal.
\end{cor}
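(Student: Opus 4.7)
I would argue by contradiction: suppose $\gamma:[0,1]\to K^*$ is a closed future-directed causal curve whose image is contained in $R:=\mathrm{I}\cup\mathrm{II}\cup\mathscr{H}$. By Proposition~\ref{prop I and II casual} the blocks I and II are individually causal, so the image of $\gamma$ cannot lie in a single block; $\gamma$ must therefore meet $\mathscr{H}_+\cup\mathscr{H}_-$.

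The plan rests on treating the horizons as one-way membranes. By Proposition~\ref{H is closed totally geod} each horizon is a null hypersurface whose future hemicone lies on the $-\partial^*_r$ side, that is, the side on which $r$ decreases. Therefore a future-directed causal curve crossing $\mathscr{H}_+$ must pass from I into II, and one crossing $\mathscr{H}_-$ must exit $R$ into block III. Since $\gamma\subset R$, the latter is impossible, and the former combined with the closedness of $\gamma$ forbids $\gamma$ from ever entering block I (once in II, it could never return to I). Hence the image of $\gamma$ is forced into the trapped sub-region $\{r_-\le r\le r_+\}=\mathrm{II}\cup\mathscr{H}_+\cup\mathscr{H}_-$.

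Within this trapped sub-region I would exploit $r$ as a quasi-time function. In II the vector field $-\partial_r$ is future-directed timelike, so $\mathbf{g}(-\partial_r,\gamma')<0$; combined with $g_{rr}=\rho^2/\Delta<0$ in II this gives $dr/ds<0$ strictly whenever $\gamma$ is in II. At horizon crossings $r$ can only decrease by the one-way property above, and on segments lying inside a horizon the coordinate $r$ is constant. Thus $r\circ\gamma$ is non-increasing, and since $\gamma$ is closed, $r\circ\gamma$ must in fact be constant, so $\gamma$ lies inside a single horizon $\mathscr{H}=\{r=r_0\}$ with $r_0\in\{r_+,r_-\}$.

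The final step rules out such a curve. A causal vector tangent to a null hypersurface must itself be null, so $\gamma'$ is null throughout, and by Proposition~\ref{H is closed totally geod} it is proportional to the canonical field $V=(r_0^2+a^2)\partial_t+a\partial_\phi$; future-directedness forces $\gamma'=f(s)V$ with $f>0$. But then $dt/ds=f(s)(r_0^2+a^2)>0$, so $t\circ\gamma$ is strictly increasing, contradicting $\gamma(0)=\gamma(1)$. The main obstacle I would expect is a careful treatment of points at which $\gamma$ meets a horizon non-transversally; phrasing the reduction through the monotonicity of $r$ handles tangential and transverse encounters uniformly, which is why I would prefer it over a direct case analysis of crossings.
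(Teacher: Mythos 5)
Your proposal is correct and rests on exactly the same three ingredients as the paper's proof: the causality of blocks I and II (Prop.~\ref{prop I and II casual}), the one-way--membrane property of $\mathscr{H}$ coming from the future hemicone lying on the $-\partial^*_r$ side (Prop.~\ref{H is closed totally geod}), and the fact that a causal curve confined to $\mathscr{H}$ must be an integral curve of $V$ up to reparametrization, along which $t^*$ is strictly monotone. Your packaging of the case analysis as monotonicity of $r\circ\gamma$ on the trapped region $\{r_-\le r\le r_+\}$ is a tidy reorganization of the paper's crossing-by-crossing argument, and it carries the same (acceptable) level of informality at tangential encounters with the horizons that the paper itself allows.
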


\begin{proof}
Let $\gamma$ be a future pointing curve.  If $\gamma$ is entirely contained either in I or in II,  then by Prop.  \ref{prop I and II casual},   $\gamma$ cannot be closed.   If $\gamma$ is entirely contained in $\mathscr{H}=\{r=r_{\pm}\}$ (closed totally geodesic null hypersurface of $K^*$ by Prop.  \ref{H is closed totally geod}),  then by Lem.  $1.5.11$ of \cite{KBH_book},  except for restphotons,  all other curves are spacelike,  but restphotons are integral curves of $V|_\mathscr{H}=(r^2_\pm+a^2)\partial_t+a\partial_\phi$,  which cannot be closed.  Since the time orientation $-\partial^*_r$ is null and transverse to the null hypersurface $\mathscr{H}$,  the future directed curves always go in the direction of $-\partial^*_r$,  if they hit $\mathscr{H}$ transversally.  Henceforth,  if $\gamma$ starts in the BL block I (II),  crosses $\mathscr{H}_{+}$ ($\mathscr{H}_{-}$) transversally,  enters the block II (III),  then  $\gamma$ cannot re-intersect $\mathscr{H}_{+}$  from II to I ($\mathscr{H}_{-}$ from III to II).  The last possibility is the following: $\gamma$ starts in I (II),  becomes tangent to $\mathscr{H}_{+}$ ($\mathscr{H}_{-}$),   hence either lies forever on $\mathscr{H}_{+}$ ($\mathscr{H}_{-}$) or leaves it at some point.  In the first case,  $\gamma$ is obviously not closed,  while in the second,  it cannot be closed because it will necessarily have to enter the region $\{r<r_{+}\}$ ($\{r<r_{-}\}$),  according to the time orientation. 
\end{proof}

\begin{prop}[\cite{KBH_book},  Proposition $2.4.7$]
The BL block III in the Kerr-star spacetime is vicious,  that is,  given any two points $p,q\in\,$III there exists a future directed timelike curve in III from $p$ to $q$.
\end{prop}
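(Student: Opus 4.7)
The plan is to reduce viciousness of block III to the existence of an open subregion through every point of which a closed timelike curve passes. Define the open set
\[
W := \{p\in \mathrm{III} : g_{\phi^*\phi^*}(p) < 0\},
\]
on which the rotational Killing field $\partial_{\phi^*}$ is timelike. Reading off from Lemma \ref{Kerr-star metric},
\[
g_{\phi^*\phi^*}(r,\theta) = \Bigl(r^2+a^2+\tfrac{2Mra^2\sin^2\theta}{\rho^2(r,\theta)}\Bigr)\sin^2\theta,
\]
which is easily seen to be negative on an open set of points with $r$ slightly negative and $\theta$ near $\pi/2$; hence $W$ is a nonempty open subset of III. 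For every $p \in W$, the closed $\phi^*$-orbit through $p$ is a timelike loop, and after choosing the sign of $\partial_{\phi^*}$ making it future-pointing, one obtains a future-directed closed timelike curve through each point of $W$.

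I would next show that $W$ itself is vicious. At any $p \in W$ the future timelike cone is open and contains $\partial_{\phi^*}$, hence also $\partial_{\phi^*}+\varepsilon Y$ for every $Y\in T_p K^*$ and small $\varepsilon>0$. Integrating a vector field of this form through a full $2\pi$-revolution in $\phi^*$ produces a future-directed timelike loop that returns to the same $\phi^*$ but is displaced in $(t^*,r,\theta)$ by approximately $2\pi\varepsilon$ times the corresponding components of $Y$. Letting $Y$ run through all signs of $\partial_{t^*}, \partial_r, \partial_\theta$ shows that $I^+(p)\cap W$ contains a neighbourhood of $p$; iterating and using the connectedness of the relevant component of $W$ yields $I^+(p)\supset W$, and symmetrically $I^-(p)\supset W$, so $W$ is vicious.

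It remains to show that every $p\in \mathrm{III}$ can be joined to some point of $W$ by a future-directed timelike curve staying in III, and symmetrically. In III the canonical field $V=(r^2+a^2)\partial_{t^*}+a\partial_{\phi^*}$ is timelike future-directed, and the same openness argument gives perturbations $V+\varepsilon Y$ that remain future timelike for small $\varepsilon$, producing controlled motion in $r$ and $\theta$ without leaving III. Piecing together short integral curves of such perturbations, one can drive $r$ into the range of $W$ and bring $\theta$ near $\pi/2$, exhibiting $w_1\in I^+(p)\cap W$; the reversed argument gives $w_2\in I^-(q)\cap W$ for any $q\in \mathrm{III}$. Combined with the viciousness of $W$, the concatenation $p \to w_1 \to w_2 \to q$ is a future-directed timelike curve in III. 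The main obstacle is precisely this last step: since III is noncompact and the admissible perturbation size $\varepsilon$ depends on the point, an explicit piecewise concatenation (or uniform estimates along $V$-integral curves) is required to guarantee that the $r$-coordinate of the curve can actually be driven into the bounded window where $W$ lives while the curve remains future-timelike throughout.
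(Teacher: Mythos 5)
The paper offers no proof of this statement at all: it is imported verbatim as Proposition $2.4.7$ of \cite{KBH_book}, so there is no in-paper argument to compare against. Your outline is in fact the standard one and essentially O'Neill's: isolate the Carter time machine $W=\{\mathbf{g}(\partial_{\phi^*},\partial_{\phi^*})<0\}$, show $W$ is vicious using the closed $\phi^*$-orbits, and then connect arbitrary points of III to $W$ by future-directed timelike curves. The first two stages are sound, modulo one unstated point: the open--closed/iteration argument only yields $I^{+}(p)\supset$ the \emph{connected component} of $W$ containing $p$, and your final concatenation $p\to w_1\to w_2\to q$ silently requires $w_1$ and $w_2$ to lie in the same component. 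You should therefore verify that $W$ is connected; this is true (the inequality $\mathbf{g}(\partial_{\phi^*},\partial_{\phi^*})<0$ forces $r<0$, depends only on $(r,\cos^2\theta)$, and is satisfied on an equatorial band $-\delta<r<0$ joining the two hemispheres), but it needs a sentence.

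The genuine gap is the one you flag yourself: producing a future-directed timelike curve in III from an arbitrary $p$ into $W$, and dually from $W$ to $q$. As written, ``piecing together short integral curves of perturbations of $V$'' is a plan rather than a proof, and since this step is exactly what makes all of III (rather than just $W$) vicious, the proposal as it stands only establishes viciousness of the time machine. The step is fillable without uniform estimates at infinity, and the key observation is compactness: choose a continuous path $c$ in the $(r,\theta)$-half-plane, avoiding the ring singularity, from $(r(p),\theta(p))$ to the time-machine region. Its image is compact, and $\mathbf{g}(V,V)=-\rho^2(r,\theta)\Delta(r)$ is bounded away from $0$ there, so a curve whose $(t^*,\phi^*)$-velocity is that of $V$ and whose $(r,\theta)$-projection traverses $c$ at speed $\delta$ is future-directed timelike for one fixed small $\delta$ along the whole path; it stays in III and lands in $W$. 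Reversing orientation gives the dual statement for $q$, and only then does the concatenation through $W$ close the argument.
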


\begin{cor}
Let $p$ be a point in the BL block III of the Kerr-star spacetime.  Then there exists a closed timelike curve through $p$.
\end{cor}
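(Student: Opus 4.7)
The plan is essentially to apply the preceding proposition with the two endpoints coinciding. Since BL block III is vicious, for any pair of points $p, q \in \textrm{III}$ there exists a future-directed timelike curve in III connecting $p$ to $q$. I would simply take $q = p$ and invoke this property to produce a future-directed timelike curve $\gamma:[0,1] \to \textrm{III}$ with $\gamma(0) = \gamma(1) = p$. By definition, such a curve is a closed timelike curve through $p$.

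The only conceptual issue worth flagging is to make sure the definition of "vicious" really does allow $p = q$, rather than requiring distinct endpoints; the formulation in the excerpt (``given any two points $p,q\in\,$III'') is most naturally read as permitting $p = q$, so no additional argument is needed. Thus the proof is a one-line application of the previous proposition, and there is no genuine obstacle.
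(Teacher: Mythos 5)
Your proposal is correct and matches the paper's (implicit) argument: the corollary is stated without proof precisely because it is the immediate specialization of the viciousness of block III to the case $q=p$ (and even if one insisted on distinct endpoints, concatenating a future-directed timelike curve from $p$ to some $q$ with one from $q$ back to $p$ gives the same conclusion). No gap here.
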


\section{\textbf{Geodesics in Kerr spacetimes}}\label{study of geodesic equations}

\subsection{Constants of motion}

Let $(\mathcal{M}_{\textit{Kerr}},\mathbf{g})$ be a Kerr spacetime as in Def.  \ref{definition kerr spacetime}.  Recall that there are two Killing vector fields $\tilde{\partial}_t$ and $\tilde{\partial}_\phi$ on $\mathcal{M}_{\textit{Kerr}}$. 

\begin{defn}[\textit{Energy and angular momentum}] 
For a geodesic $\gamma$ of $(\mathcal{M}_{\textit{Kerr}},\mathbf{g})$,  the constants of motion 
\[
E=E(\gamma):=-\mathbf{g}(\gamma',\tilde{\partial}_t)
\]
and 
\[
L=L(\gamma):=\mathbf{g}(\gamma',\tilde{\partial}_\phi)
\]
are called its {\it energy} and its {\it angular momentum (around the axis of rotation of the black hole)}, respectively.
\end{defn}

\begin{defn}
For every BL block $\mathcal{B}_i$ define the \textit{canonical vector fields}
\[
V:=(r^2+a^2)\partial_t + a\partial_\phi\quad\text{ and }\quad W:=\partial_\phi + a \sin^2\theta\,\partial_t
\]
via the isometry $\Phi_i\colon  \mathcal{B}_i\hookrightarrow \mathcal{M}_{\textit{Kerr}}$.  
\end{defn}

\begin{oss}
$V$ and $W$ are not Killing vectors.
\end{oss}

\begin{defn}\label{definitions of P and D}
Let $\gamma$ be a geodesic in $\mathcal{M}_{\textit{Kerr}}$ with energy $E$ and angular momentum $L$.  Define the functions $\mathbb{P}$ and $\mathbb{D}$ along $\gamma$ by
\[
\mathbb{P}(r):=-\mathbf{g}(\gamma',V)=(r^2+a^2)E-La
\]
and 
\[
\mathbb{D}(\theta):=\mathbf{g}(\gamma',W)=L-Ea\sin^2\theta.
\]\\
\end{defn}

A geodesic in a Kerr spacetime has two additional constants of motions.  First,  there is the {\it Lorentian energy} $q:=\mathbf{g}(\gamma',\gamma')$,  which is always constant along every geodesic
in any pseudo-Riemannian manifold.  The second one is $K$,  which was first found by Carter in \cite{Carter_causality} using the separability of the Hamilton--Jacobi equation.  $K$ can be defined (see Ch.  $7$ in \cite{Chandrasekhar}) by

\[
K:=2\rho^2(r,\theta)\mathbf{g}(l,\gamma')\mathbf{g}(n,\gamma')+r^2q,
\]
where $l=\frac{1}{\Delta(r)}V+\partial_r$ and $n=\frac{1}{2\rho^2(r,\theta)}V-\frac{\Delta(r)}{2\rho^2(r,\theta)}\partial_r$.  See also \cite{Walker_Penrose} for a definition using a Killing tensor for the Kerr metric.  

\begin{defn}[\textit{Carter constant}]
On a Kerr spacetime,  the constant of motion
\[
Q:=K-(L-aE)^2\hspace{0.5cm}\textrm{or}\hspace{0.5cm} \mathcal{Q}:=Q/E^2\hspace{0.3cm} \textrm{if}\hspace{0.3cm}E\neq 0
\]
is called the Carter constant.  
\end{defn}

\subsection{Equations of motion}

\begin{prop}[\cite{KBH_book},  Proposition $4.1.5$,  Theorem $4.2.2$] \label{differential equations of geodessics}
Let $B$ be a BL block and $\gamma$ be a geodesic with initial position in $B\subset \mathcal{M}_{\textit{Kerr}}$ and constants of motion $E,  L,  Q,  q$.   Then the components of $\gamma$ in the BL coordinates $(t,r,\theta,\phi)$ satisfy the following set of \textit{first} order differential equations

\begin{align}
\begin{cases}
 \rho^2(r,\theta)\phi'=\frac{\mathbb{D}(\theta)}{\sin^2\theta}+a\frac{\mathbb{P}(r)}{\Delta(r)} \\ \rho^2(r,\theta) t'= a\mathbb{D}(\theta) + (r^2+a^2)\frac{\mathbb{P}(r)}{\Delta(r)} \label{geodes diff equations}\\ \rho^4(r,\theta) r'^2 = R(r)\\  \rho^4(r,\theta) \theta'^2 = \Theta (\theta)   
 \end{cases}
\end{align}
where

\begin{align*}
R(r):=&\Delta(r)\left[(qr^2-K(E,L,Q)\right]+\mathbb{P}^2(r)= \\ 
=& (E^2+q)r^4 -2Mqr^3 + \mathfrak{X}(E,L,Q) r^2 + 2MK(E,L,Q)r - a^2 Q,\label{other form of R(r)}\\
\Theta(\theta):=& K(E,L,Q)+qa^2 \cos^2\theta -\frac{ \mathbb{D}(\theta)^2}{\sin^2\theta}= \\
=& Q + \cos^2\theta \left[ a^2(E^2+q)-L^2/\sin^2\theta\right],
\end{align*}
with
\[
\mathfrak{X}(E,L,Q):=a^2(E^2+q)-L^2-Q\text{, and } K(E,L,Q)=Q+(L-aE)^2.
\]

\end{prop}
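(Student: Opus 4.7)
The plan rests on identifying an orthogonal frame adapted to the canonical vector fields $V$ and $W$. First I would verify by direct computation the three identities $\mathbf{g}(V,V) = -\rho^2\Delta$, $\mathbf{g}(W,W) = \rho^2\sin^2\theta$, and $\mathbf{g}(V,W) = 0$. Since $V$ and $W$ are linear combinations of $\partial_t$ and $\partial_\phi$, and the BL metric is block-diagonal in $(t,\phi)$ versus $(r,\theta)$, it follows that $\{V, W, \partial_r, \partial_\theta\}$ is an orthogonal frame on $B\setminus A$. The only step requiring patient algebraic bookkeeping is the collapse of $\mathbf{g}(V,V)$ and $\mathbf{g}(V,W)$; the cancellations go through after invoking the identities $(r^2+a^2) - a^2\sin^2\theta = \rho^2$ and $(r^2+a^2) - 2Mr = \Delta$.

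With this frame in hand I would decompose $\gamma' = \alpha V + \beta W + r'\partial_r + \theta'\partial_\theta$. Orthogonality together with the definitions of $\mathbb{P}$ and $\mathbb{D}$ yields $\alpha = \mathbb{P}(r)/(\rho^2\Delta)$ and $\beta = \mathbb{D}(\theta)/(\rho^2\sin^2\theta)$. Reading off the $\partial_t$- and $\partial_\phi$-components of the decomposition via $V = (r^2+a^2)\partial_t + a\partial_\phi$ and $W = a\sin^2\theta\,\partial_t + \partial_\phi$ produces the first two equations of \eqref{geodes diff equations}. For the radial equation I would substitute into Carter's definition $K = 2\rho^2\,\mathbf{g}(l,\gamma')\,\mathbf{g}(n,\gamma') + r^2 q$: a short computation gives $\mathbf{g}(l,\gamma') = (\rho^2 r' - \mathbb{P})/\Delta$ and $\mathbf{g}(n,\gamma') = -(\mathbb{P} + \rho^2 r')/(2\rho^2)$, so their product reduces by the difference-of-squares identity to $2\rho^2\,\mathbf{g}(l,\gamma')\,\mathbf{g}(n,\gamma') = (\mathbb{P}^2 - \rho^4 r'^2)/\Delta$. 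Rearranging yields $\rho^4 r'^2 = \mathbb{P}^2 + \Delta(qr^2 - K) = R(r)$.

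For the polar equation I would compute $q = \mathbf{g}(\gamma',\gamma')$ directly in the orthogonal frame, obtaining $\rho^2 q = -\mathbb{P}^2/\Delta + \mathbb{D}^2/\sin^2\theta + \rho^4 r'^2/\Delta + \rho^4\theta'^2$, then substitute the $r$-equation just derived and use $\rho^2 - r^2 = a^2\cos^2\theta$ to extract $\rho^4\theta'^2 = K + qa^2\cos^2\theta - \mathbb{D}^2/\sin^2\theta = \Theta(\theta)$. The polynomial forms of $R$ and $\Theta$ claimed in the statement are then straightforward algebraic unpackings: expanding $\mathbb{P}^2$, $\Delta(qr^2 - K)$, and $\mathbb{D}^2/\sin^2\theta$, and substituting $K = Q + (L - aE)^2$, delivers the stated expressions involving $\mathfrak{X}(E,L,Q)$ and the $\cos^2\theta$-coefficients.
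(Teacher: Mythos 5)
Your proposal is correct: the orthogonality relations $\mathbf{g}(V,V)=-\rho^2\Delta$, $\mathbf{g}(W,W)=\rho^2\sin^2\theta$, $\mathbf{g}(V,W)=0$ all check out, and the derivation of the four equations from the frame decomposition, Carter's definition of $K$, and the normalization $q=\mathbf{g}(\gamma',\gamma')$ is exactly the classical argument of the cited source (O'Neill, Prop.~4.1.5 and Thm.~4.2.2), which the paper itself quotes without proof. No gaps; the only caveat worth a parenthetical is that the frame argument lives on $B\setminus A$, with the extension of the $\theta$-equation over the axis handled separately, as the paper indeed does afterwards.
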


\begin{oss}\label{non negativitivity of polynomials}
Since in the third and in the fourth differential equations of Prop.  \ref{differential equations of geodessics} the left-hand sides are clearly non-negative,  we see that the polynomials $R(r)$ and $\Theta(\theta)$ are non-negative along the geodesics.  Hence the geodesic motion can only happen in the $r,\theta$-region for which $R(r),\Theta(\theta)\geq 0$.  
\end{oss}

In order to study geodesics that cross the horizons 
\[
\mathscr{H}=\{\Delta(r)=0\}=\{r=r_{\pm}\},  
\]
it is necessary to introduce the Kerr-star coordinate system.  Note however that since the change of coordinates modifies only the $t$ and the $\phi$ coordinates and the $r,\theta$-differential equations do not involve $t$ and $\phi$,  the last two differential equations do extend over $\mathscr{H}$.  Observe also that the $r,\theta$- differential equations are not singular on $\mathscr{H}$,  while the $t,\phi$-differential equations are. \\

Notice that $\Theta(\theta)$ is also well-defined if the geodesic crosses $A=\{\theta=0,\pi\}$.  Indeed,  $L=0$ (because  $\tilde{\partial_\phi}\equiv 0$ on $A$),  hence $\mathbb{D}(\theta)=-Ea\sin^2\theta$,  and then
  
\begin{align*}
\Theta(\theta)&=K(E,0,Q) +qa^2\cos^2\theta-(-Ea\sin^2\theta)^2/\sin^2\theta=Q+a^2E^2+qa^2\cos^2\theta-a^2E^2\sin^2\theta\\
&=Q+a^2(E^2+q)\cos^2\theta.\\
\end{align*}
Thus the $r,\theta$-differential equations can be used to study geodesics on the whole Kerr-star spacetime.

\begin{oss}
The system \eqref{geodes diff equations} is composed of first order differential equations,  while the geodesic equation is \textit{second} order.  There exist solutions of \eqref{geodes diff equations},  called \textit{singular},  which do not correspond to geodesics.  For example,  if $r_0\in\mathbb{R}$ is a multiplicity one zero of $r\mapsto R(r)$,  then $r_0$ solves the radial equation in \eqref{geodes diff equations},  since in this case $r'(s)=0$ for all $s$,  but we do not have a geodesic. 
\end{oss}

\subsection{Dynamics of geodesics}

The non-negativity of $R(r)$ and $\Theta(\theta)$ in the first order differential equations of motion \eqref{geodes diff equations} can be used to study the dynamics of the $r,\theta$-coordinates of the geodesics,  together with the next proposition.

\begin{prop}[\cite{KBH_book},  Corollary $4.3.8$]\label{initial conditions and zeroes}
Suppose $R(r_0)=0$.  Let $\gamma$ be a geodesic whose $r$-coordinate satisfies the initial conditions $r(s_0)=r_0$ and $r'(s_0)=0$.
\begin{enumerate}
\item If $r_0$ is a multiplicity one zero of $R(r)$,  \textit{i.e.} $R'(r_0)\neq 0$,  then $r_0$ is an $r$-turning point,  namely $r'(s)$ changes sign at $s_0$.
\item If $r_0$ is a higher order zero of $R(r)$,  \textit{i.e.} at least $R'(r_0)= 0$,  then $\gamma$ has constant $r(s)=r_0$.  
\end{enumerate}
Analogous results hold for $r$ and $R(r)$ replaced by $\theta$ and $\Theta(\theta)$.  
\end{prop}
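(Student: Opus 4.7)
My plan is to combine the first-order constraint $\rho^4(r,\theta)\,(r')^2 = R(r)$ from Prop.~\ref{differential equations of geodessics} with the smoothness of the geodesic $\gamma$ in order to extract information about $r''(s_0)$. Differentiating the constraint along $\gamma$ produces the identity
\begin{equation*}
r'\Bigl[\,2\rho^4\,r'' + 4\rho^3\bigl(\partial_r\rho\cdot r' + \partial_\theta\rho\cdot \theta'\bigr)r' - R'(r)\Bigr] = 0.
\end{equation*}
Since the bracket depends continuously on $s$ and vanishes on the open set $\{s:r'(s)\neq 0\}$, by continuity it vanishes identically along $\gamma$. Evaluating at $s_0$ where $r'(s_0)=0$ then yields
\begin{equation*}
r''(s_0) \;=\; \frac{R'(r_0)}{2\,\rho^4(r_0,\theta(s_0))}.
\end{equation*}

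In case (1), $R'(r_0)\neq 0$ forces $r''(s_0)\neq 0$, so $r'$ vanishes at $s_0$ with non-zero derivative and hence changes sign; thus $r_0$ is an $r$-turning point. As a consistency check, $(r')^2\geq 0$ forces $r$ to remain on the side of $r_0$ where $R\geq 0$, which is exactly the behaviour of a one-sided bounce.

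In case (2), $R'(r_0)=0$ permits the factorization $R(r)=(r-r_0)^2\,\widetilde{R}(r)$ with $\widetilde{R}$ a polynomial. On a small compact neighborhood $U$ of $s_0$ whose image in $K^*$ avoids the ring singularity $\Sigma$, $\rho$ is bounded below by a positive constant and $\widetilde{R}$ is bounded, so the constraint yields $|r'(s)|\leq C\,|r(s)-r_0|$ on $U$ for some $C>0$. Combined with $r(s_0)=r_0$, Gr\"onwall's inequality forces $r\equiv r_0$ on $U$, and the real-analyticity of $\gamma$ extends this identity along the whole geodesic. The analogous statement for $\theta$ and $\Theta(\theta)$ follows by the same two-step argument, using the form of $\Theta$ valid on the axis (where $L=0$) in order to avoid the apparent singularity of $\mathbb{D}(\theta)^2/\sin^2\theta$ there.

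The subtlety I expect to be the main obstacle is bridging the first-order constraint — which on its own suffers from the classical non-uniqueness pathology typified by $y'=\sqrt{y}$ with $y(0)=0$ — and the second-order geodesic equation. The continuity argument in the first paragraph is precisely what closes that gap: it exploits the $C^{\infty}$ regularity of solutions to the geodesic ODE to determine $r''(s_0)$ from a relation that \emph{a priori} holds only where $r'\neq 0$, and this in turn feeds the Gr\"onwall estimate that rules out the spurious non-constant branch in case (2).
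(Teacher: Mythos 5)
The paper offers no proof of this statement---it is quoted verbatim from O'Neill (Corollary 4.3.8)---so your argument has to stand on its own. Part (2) does: the factorization $R(r)=(r-r_0)^2\widetilde R(r)$, the bound $|r'|\le C|r-r_0|$ away from $\Sigma$, and Gr\"onwall genuinely force $r\equiv r_0$, using nothing beyond the first-order constraint. Part (1), however, has a real gap, and it sits exactly where the content of the proposition lies. Your continuity argument only shows that the bracket vanishes on the \emph{closure} of $\{s: r'(s)\neq 0\}$, not ``identically along $\gamma$'': if $r'\equiv 0$ on a neighbourhood of $s_0$ (equivalently, by analyticity of Kerr geodesics, if $r\equiv r_0$), that set is empty near $s_0$ and the identity $2\rho^4\,r''(s_0)=R'(r_0)$ is simply not established. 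But a constant solution $r\equiv r_0$ at a simple zero of $R$ is precisely the ``singular solution'' of the first-order system that the paper's own remark on \eqref{geodes diff equations} warns about ($r_0$ solves the radial equation, yet one does not have a geodesic); excluding it is the entire point of part (1), and it cannot be done from the first integral $\rho^4(r')^2=R(r)$ plus smoothness alone, because the constant function genuinely solves that equation.

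To close the gap you must invoke the second-order radial equation that follows from the geodesic equation itself rather than from differentiating the first integral---in O'Neill this is the proposition preceding Corollary 4.3.8, and in the reparametrization $d\lambda=ds/\rho^2$ it reads $d^2r/d\lambda^2=\tfrac{1}{2}R'(r)$ along \emph{every} geodesic, whether or not $r'$ vanishes. With that in hand, $R'(r_0)\neq 0$ gives a nonzero second derivative at $s_0$ while $r'(s_0)=0$, so $r$ is non-constant and $s_0$ is a strict local extremum, i.e.\ a turning point; part (2) then also follows from ODE uniqueness applied to the second-order equation (your Gr\"onwall route remains a valid alternative there). The same caveat applies verbatim to the $\theta$-statement.
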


\section{\textbf{Properties of timelike geodesics in Kerr spacetimes}}\label{section: properites of timelike geodesics}

\subsection{Principal geodesics}

Since the vector fields $V, W, \partial_r,\partial_\theta$ are linearly independent,  the tangent vector to a geodesic $\gamma$ can be decomposed as $\gamma'=\gamma'_\Pi + \gamma'_\perp$ where $\Pi:= span \{ \partial_r,  V \}$ (timelike plane) and $\Pi^\perp:= span \{ \partial_\theta,  W \}$ (spacelike plane).


 


\begin{defn}
A Kerr geodesic $\gamma$ is said to be \textit{principal} if $\gamma' = \gamma'_\Pi$. 
\end{defn}

\begin{prop}[\cite{KBH_book},  Corollary $4.2.8(1)$] \label{eq principal nulls}
If $\gamma$ is a timelike geodesic,  then $K\geq 0$,  and\\ $K=0\Longleftrightarrow\;$ $\gamma$ is a principal geodesic in the $Eq=\{\theta=\pi/2\}$.
\end{prop}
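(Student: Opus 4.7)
My strategy is to exploit the orthogonal decomposition $\gamma' = \gamma'_\Pi + \gamma'_\perp$ associated with the splitting $T\mathcal{M}_{\textit{Kerr}} = \Pi \oplus \Pi^\perp$, where $\Pi = \textrm{span}\{\partial_r, V\}$ is timelike and $\Pi^\perp = \textrm{span}\{\partial_\theta, W\}$ is spacelike. The main work is to re-express the (automatically non-negative) squared norm of $\gamma'_\perp$ in terms of the constants of motion, and then read off both conclusions from a single identity.

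First I would verify, by direct computation with \eqref{kerr metric}, that $\partial_\theta$ and $W$ are mutually orthogonal and both orthogonal to $\partial_r$ and $V$, with
\[
\mathbf{g}(\partial_\theta, \partial_\theta) = \rho^2, \qquad \mathbf{g}(W, W) = \rho^2\sin^2\theta,
\]
so that $\{\partial_\theta, W\}$ is an orthogonal basis of $\Pi^\perp$ wherever $\sin\theta \neq 0$. From $\mathbf{g}(\gamma', \partial_\theta) = \rho^2\theta'$ and $\mathbf{g}(\gamma', W) = \mathbb{D}(\theta)$, the squared norm of the $\perp$-component reads
\[
\rho^2\,\mathbf{g}(\gamma'_\perp, \gamma'_\perp) = \rho^4(\theta')^2 + \frac{\mathbb{D}(\theta)^2}{\sin^2\theta}.
\]
Substituting $\rho^4(\theta')^2 = \Theta(\theta) = K + qa^2\cos^2\theta - \mathbb{D}(\theta)^2/\sin^2\theta$ collapses the right-hand side to the key identity
\[
\rho^2\,\mathbf{g}(\gamma'_\perp, \gamma'_\perp) = K + qa^2\cos^2\theta.
\]

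For a timelike geodesic $q < 0$, so $-qa^2\cos^2\theta \geq 0$, and $\mathbf{g}(\gamma'_\perp, \gamma'_\perp) \geq 0$ because $\Pi^\perp$ is spacelike; the identity immediately yields $K \geq 0$. Moreover $K = 0$ forces both summands to vanish: $qa^2\cos^2\theta = 0$ with $a \neq 0$, $q < 0$ gives $\cos\theta = 0$, i.e., $\gamma \subset Eq$, while $\mathbf{g}(\gamma'_\perp, \gamma'_\perp) = 0$ on the spacelike plane $\Pi^\perp$ gives $\gamma'_\perp = 0$, i.e., $\gamma$ is principal. The converse direction is read off the identity by setting $\gamma'_\perp = 0$ and $\theta = \pi/2$.

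The only subtle point, and the main obstacle, is the axis $\{\sin\theta = 0\}$, where $W$ vanishes and the basis of $\Pi^\perp$ degenerates. I would handle it in two steps: for a geodesic meeting $A$ transversally, both sides of the key identity are smooth functions of the affine parameter, so the conclusion extends by continuity; for a geodesic lying entirely in $A$, total geodesy of $A$ (Lem. \ref{A and Eq closed totally geod subman}) forces $\theta' \equiv 0$ and $L = 0$, whence $\Theta \equiv 0$ gives $K = -qa^2 > 0$, which is consistent with $K \geq 0$ and rules out the equality case on the axis. Beyond this, the argument reduces to the routine metric computations verifying the orthogonality relations and the norms of $\partial_\theta$ and $W$.
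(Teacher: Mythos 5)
Your proof is correct and is essentially the standard argument from the cited source (O'Neill, Cor.\ 4.2.8): the paper itself gives no proof of this proposition, only the reference, and the identity $\rho^2\,\mathbf{g}(\gamma'_\perp,\gamma'_\perp)=K+qa^2\cos^2\theta$ you derive is exactly the one used there, with both conclusions read off from the non-negativity of the two summands when $q<0$. Your treatment of the axis degeneracy (where $W=0$) is a reasonable and correct supplement; note only that the constancy of $K$ along $\gamma$ is what upgrades $\cos\theta=0$ at a point to $\gamma\subset Eq$, which you implicitly use.
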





\subsection{Timelike geodesics with $Q<0$}

\begin{prop}\label{Q<0 condition}
Let $\gamma$ be a timelike ($q<0$) geodesic with $Q<0$.  Then

\begin{enumerate}
\item $\gamma$ does not intersect $Eq=\{\theta=\pi/2\}$;
\item $a^2(E^2+q)> L^2$ and in particular $E\neq 0$ and $E^2+q>0$.
\end{enumerate}
\end{prop}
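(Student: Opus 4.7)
Both parts rest on the non-negativity of $\Theta(\theta)$ along any geodesic (Remark \ref{non negativitivity of polynomials}) together with the two explicit forms for $\Theta$ already recorded: the general expression
\[
\Theta(\theta)=Q+\cos^2\theta\bigl[a^2(E^2+q)-L^2/\sin^2\theta\bigr]
\]
from Prop.  \ref{differential equations of geodessics}, and the axis-adapted form $\Theta(\theta)=Q+a^2(E^2+q)\cos^2\theta$, valid when $L=0$. Nothing deeper than substituting values of $\theta$ into these formulas and exploiting $\Theta\ge 0$ and $Q<0$ is needed.

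For (1), I would simply evaluate the general expression at $\theta=\pi/2$: the factor $\cos^2\theta$ kills the bracketed term, leaving $\Theta(\pi/2)=Q<0$. This contradicts $\Theta(\theta(s))\ge 0$ along $\gamma$, so $\gamma$ cannot meet $Eq$.

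For (2), I would pick any parameter $s_0$, set $\theta_0:=\theta(s_0)$, and use (1) to secure $\cos^2\theta_0>0$. If $L\ne 0$, then $\gamma$ cannot touch the axis $A$: otherwise $L=\mathbf{g}(\gamma',\tilde{\partial}_\phi)$ would vanish at that point (since $\tilde{\partial}_\phi\equiv 0$ on $A$) and hence everywhere since $L$ is a constant of motion. Therefore $\sin\theta_0>0$, the general formula is well defined at $\theta_0$, and rearranging $\Theta(\theta_0)\ge 0$ while using $\sin^2\theta_0\le 1$ yields $a^2(E^2+q)>L^2/\sin^2\theta_0\ge L^2$. If instead $L=0$, the axis-adapted form applied at any non-equatorial $\theta_0$ (which exists by (1)) gives $a^2(E^2+q)\cos^2\theta_0\ge -Q>0$, hence $a^2(E^2+q)>0=L^2$. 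In both cases the remaining assertions follow by elementary algebra: $a^2>0$ implies $E^2+q>0$, and then $q<0$ forces $E^2>-q>0$, so $E\ne 0$.

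The only subtle point—and really a bookkeeping one—is the $L=0$ subcase, in which $\gamma$ may cross the axis where the $L^2/\sin^2\theta$ term in the general formula becomes formally singular. This is handled by invoking the axis-adapted expression for $\Theta$ already derived in the paragraph preceding this subsection, rather than the general one. Beyond that, no serious obstacle is expected.
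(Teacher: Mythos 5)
Your proof is correct and follows essentially the same route as the paper: both rest on $\Theta(\theta)\ge 0$ along the geodesic combined with $Q<0$, using the general form of $\Theta$ away from the axis and the axis-adapted form $\Theta=Q+a^2(E^2+q)\cos^2\theta$ when $L=0$. The only cosmetic difference is that you split on $L=0$ versus $L\neq 0$ rather than on whether $\gamma$ meets $A$, which changes nothing of substance.
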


\begin{proof}
If $\gamma\cap A=\emptyset$,  then from the $\theta$-equation of \eqref{geodes diff equations} we have

\begin{align*}
\cos^2\theta [L^2/\sin^2\theta-a^2(E^2+q)]=Q-\rho^4(r,\theta)\theta'^2<0.
\end{align*}
Hence $\cos^2\theta\neq 0$ and $L^2/\sin^2\theta-a^2(E^2+q)<0$,  hence $\gamma\cap Eq=\emptyset$ and $a^2(E^2+q)>L^2$,  so $E\neq 0$ and $E^2+q>0$.

If $\gamma\cap A\neq\emptyset$,  then  $L=0$ since $\tilde{\partial}_\phi \equiv 0$ on $A$ and 

\begin{align*}
-a^2(E^2+q)\cos^2\theta =Q-\rho^4(r,\theta)\theta'^2<0.
\end{align*}
Therefore $\cos^2\theta\neq 0$ and $a^2(E^2+q)>0=L^2$,  $E\neq 0$ so $\gamma\cap Eq=\emptyset$.
\end{proof}


\begin{prop}[\cite{KBH_book},  Corollaries $4.9.2,4.9.3$]\label{geod Q<0}
For $Q<0$ timelike geodesics,  $R(r)$ is convex and has either zero or two negative roots,  which may be coincident.  Therefore the only possible bounded $r$-behaviour is $r(s)=\textrm{const}<0$.
\end{prop}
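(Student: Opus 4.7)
The plan is to read off the structure of the quartic $R(r)$ directly from the sign constraints enforced by the hypotheses. First I would collect ingredients: by Prop~\ref{Q<0 condition}, $E^2+q>0$ and $\mathfrak{X}=a^2(E^2+q)-L^2-Q>0$; by Prop~\ref{eq principal nulls}, $K\geq 0$ for every timelike geodesic; and $-a^2Q>0$ because $Q<0$. Thus in
\[
R(r)=(E^2+q)r^4-2Mq\,r^3+\mathfrak{X}r^2+2MK\,r-a^2Q,
\]
the leading coefficient is positive (so $R\to +\infty$ at $\pm\infty$), the cubic coefficient $-2Mq$ is positive (as $q<0$), the quadratic coefficient $\mathfrak{X}$ is positive, the linear coefficient $2MK$ is non-negative, and the constant term is positive.

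The first substep is to rule out non-negative real roots of $R$. This follows at once from $R(0)=-a^2Q>0$ together with the fact that, on $r>0$, every term of
\[
R'(r)=4(E^2+q)r^3-6Mq\,r^2+2\mathfrak{X}\,r+2MK
\]
is non-negative with at least one strictly positive, so $R$ is strictly increasing on $(0,+\infty)$ and $R(r)>R(0)>0$ there. The convexity claim I would read as the statement that $\{R<0\}$ is a single (possibly empty) interval, equivalently $R$ has at most two real roots, both necessarily negative by the first substep. The clean subcase is when $R''(r)=12(E^2+q)r^2-12Mq\,r+2\mathfrak{X}$ has non-positive discriminant: then $R''\geq 0$ globally (using $R''(0)=2\mathfrak{X}>0$) and $R$ is strictly convex. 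In the opposite regime, one would combine $K\geq 0$ with the $\theta$-motion inequality $\mathfrak{X}^2\geq 4a^2(E^2+q)(-Q)$, deduced from $\Theta(\theta)\geq 0$ being satisfiable under $Q<0$, to argue that the cubic $R'$ retains a unique real zero in $(-\infty,0)$; this yields a unique local minimum of $R$ on the negative axis and forces the U-shape with $0$ or $2$ negative roots, possibly coincident.

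Once the shape of $R$ is fixed the bounded-$r$ corollary is immediate from Rem.~\ref{non negativitivity of polynomials}. If $R$ has no real root, $r'(s)\neq 0$ throughout and $r$ is strictly monotonic on its maximal affine interval, hence unbounded. If $R$ has two distinct negative roots $r_1<r_2<0$, the admissible regions are $(-\infty,r_1]$ and $[r_2,+\infty)$, each unbounded in one direction, so any orbit meeting the turning point bounces back into the corresponding unbounded tail. The only remaining case is a double negative root $r_0<0$, for which Prop~\ref{initial conditions and zeroes} forces $r(s)\equiv r_0$, the claimed constant bounded behaviour. I expect the main obstacle to lie precisely in the convexity/root-count step when $R''$ is not globally non-negative: this is the place where the timelike condition ($q<0$, $K\geq 0$) and the $Q<0$ restriction on $\theta$-motion must be combined in just the right way to exclude a second local minimum of $R$ on the negative axis.
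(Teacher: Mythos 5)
The paper offers no proof of this statement---it is imported verbatim from O'Neill (Corollaries 4.9.2--4.9.3)---so your attempt must stand on its own. The routine parts are fine: the coefficient signs, the exclusion of roots in $r\geq 0$, and the final step deducing ``bounded $\Rightarrow$ constant negative $r$'' from the root structure (a genuine oscillating orbit would need two simple turning points with $R<0$ on both outer sides, hence at least four negative roots, since $R(0)>0$ and $R\to+\infty$ as $r\to\pm\infty$). The gap is exactly where you flag it: you never prove that $R$ has at most two real roots. The ``clean subcase'' (non-positive discriminant of $R''$) does not cover the admissible parameter range, and the hard case is left at the level of ``one would argue''. Worse, the route through global convexity of $R$ is not just difficult but unavailable. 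Take $M=1$, $a=0.9$, $E=1$, $q=-0.99$, $L=0$, $Q=-0.008$. Then $E^2+q=0.01>0$, $a^2(E^2+q)=0.0081>L^2$, $K=Q+a^2E^2=0.802>0$, $\Theta(0)=Q+a^2(E^2+q)=0.0001>0$ and $R(0)=-a^2Q>0$, so these are the constants of motion of a genuine timelike geodesic with $Q<0$ (its $\theta$-motion oscillates in a narrow cone about the axis, $u_-\approx 0.988$, $u_+=1$). Yet $\mathfrak{X}=0.0161$ while $3M^2q^2/\bigl(2(E^2+q)\bigr)\approx 147$, so $R''(r)=0.12\,r^2+11.88\,r+0.0322$ satisfies $R''(-1)<0$: the quartic is concave on a long interval of negative $r$. (For these values $R$ still has exactly two negative roots, so the operative conclusion survives---but not via convexity of $R$, which shows the quoted statement has to be read as a claim about the root structure rather than taken literally, and in any case cannot be reached by checking the sign of the discriminant of $R''$.)

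Your fallback---that $R'$ has a unique real zero, hence $R$ a single local minimum on the negative axis---is the right shape of statement (it does hold in the example above), but the ingredients you propose to feed into it do not yield it. The constraints $K\geq 0$ and $\mathfrak{X}^2\geq 4a^2(E^2+q)(-Q)$ are lower bounds on $K$ and $\mathfrak{X}$, whereas Descartes applied to $R'(-s)$ still permits three negative critical points, and excluding the second local minimum requires information pulling in the opposite direction. Indeed, by the Maclaurin inequality $e_1/4\geq\sqrt{e_2/6}$, four negative roots of $R$ would force $\mathfrak{X}\leq 3M^2q^2/\bigl(2(E^2+q)\bigr)$, i.e.\ they can only occur precisely in the non-convex regime your ``clean subcase'' excludes; so no contradiction follows from the inequalities you list, and the four-root configuration is not ruled out by your argument. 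Until the ``at most two real roots'' step is actually established (e.g.\ by the argument of O'Neill, \emph{loc.\ cit.}), the proposition is not proved.
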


\section{\textbf{Proof of Theorem  \ref{main theorem}}} \label{section: main theorem}

\subsection{Strategy of the proof} \label{strategy of the proof}
The following argument is similar to the one used to prove the nonexistence of closed null geodesics in \cite{sanzeni2024non}.  Let $\gamma\colon I \to K^*$ be a closed timelike geodesic (CTG).
Since the radius function $r\colon K^*\to\mathbb{R}$ is everywhere smooth the composition $r\circ \gamma$ has at least two critical points $s_0<s_1$ in each period $[a,a+T)$, i.e.  
$(r\circ \gamma)'(s_0)=(r\circ \gamma)'(s_1)=0$. Since $\rho\colon K^*\to\mathbb{R}$  does not vanish on $K^*$ the differential equation for $r\circ \gamma$
\[
(\rho\circ \gamma)^4[(r\circ \gamma)']^2=R(r\circ \gamma)
\]
implies that $R(r\circ \gamma(s_{0,1}))=0$.  Because of the differential equation,  the geodesic motion must happen in the $r$-region on which $R(r\circ \gamma)\geq 0$.  Further since $R$ is a polynomial in $r$ we can distinguish two cases:
\begin{enumerate}

\item The zeros $r\circ \gamma(s_{0,1})$ of $R$ are simple, i.e. $dR/dr\neq 0$ at these points. Then $r\circ \gamma(s_{0,1})$ are turning points of $r\circ \gamma$, i.e. 
$(r\circ \gamma)'$ changes its sign at $s_0$ and $s_1$.  

\item One of the zeros $r\circ \gamma(s_{0})$ or $r\circ \gamma(s_{1})$ is a higher order zero of $R$. Then $r\circ \gamma$ is constant.
\end{enumerate}
Both the two facts follow from Proposition \ref{initial conditions and zeroes}.  

Most possible CTGs can be ruled out by comparing the location of the zeros of $R(r)$ with the following consequence of the causal structure of Kerr:

\begin{lem}\label{lemma about closed curves confined}
Let $\gamma\colon  I\to K^*$ be a closed timelike geodesic. Then $r\circ \gamma\subset \{r<r_-\}$.
\end{lem}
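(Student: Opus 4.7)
The plan is to argue by contradiction, combining Cor.\ \ref{causal region of K^*} with a one-way property of the time orientation at the inner horizon $\mathscr{H}_-$. Assume $\gamma\colon [a,a+T]\to K^*$ is a closed timelike geodesic whose image $r\circ\gamma$ is not entirely contained in $\{r<r_-\}$, so that the set $\{s : r\circ\gamma(s)\geq r_-\}$ is nonempty. I would then split the argument into the two possibilities for the range of $r\circ\gamma$.

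First, if $r\circ\gamma\geq r_-$ identically, then $\gamma$ lies inside $\mathrm{I}\,\cup\,\mathrm{II}\,\cup\,\{r=r_\pm\}$, which is causal by Cor.\ \ref{causal region of K^*}. Since a closed timelike curve is in particular a closed causal curve, this case yields an immediate contradiction and requires no further analysis.

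In the remaining case, $r\circ\gamma$ takes values both $\geq r_-$ and $<r_-$, so by continuity and periodicity $\gamma$ must meet $\mathscr{H}_-$ and, on a full period, must do so with $r$ decreasing at one crossing and $r$ increasing at another. Here the key input is the tangential-versus-transverse dichotomy at $\mathscr{H}_-$. Since $\mathscr{H}_-$ is a null hypersurface (Prop.\ \ref{H is closed totally geod}), the tangent space $T_p\mathscr{H}_-$ at every $p\in\mathscr{H}_-$ contains only null and spacelike vectors; a timelike vector $\gamma'$ can therefore never be tangent to $\mathscr{H}_-$, so every crossing is automatically transverse. As already recorded inside the proof of Cor.\ \ref{causal region of K^*}, the time orientation $-\partial^*_r$ is null and transverse to $\mathscr{H}_-$, so a future-directed timelike curve crossing $\mathscr{H}_-$ transversely can only pass from block II into block III (in the direction of $-\partial^*_r$). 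A periodic orbit would require a crossing in the opposite sense, producing the sought contradiction.

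The main (mild) obstacle is ruling out tangential contact of $\gamma$ with $\mathscr{H}_-$, which would otherwise allow $r\circ\gamma$ to brush the value $r_-$ without truly leaving block III. This difficulty dissolves once one observes the incompatibility between the timelike character of $\gamma'$ and the causal structure of $T\mathscr{H}_-$; once that is in place, the conclusion $r\circ\gamma\subset\{r<r_-\}$ follows at once from the time-orientation step above.
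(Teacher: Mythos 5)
Your argument is correct and follows essentially the same route as the paper: the case $r\circ\gamma\subset\{r\geq r_-\}$ is killed by Corollary \ref{causal region of K^*}, and the crossing case by the one-way nature of transverse passages through the null hypersurface $\mathscr{H}_-$ dictated by the time orientation $-\partial^*_r$. The only cosmetic difference is that the paper simply cites Prop.\ \ref{oss event horizon} for the second step, whereas you inline that argument (and make explicit the useful observation that a timelike tangent can never lie in the degenerate hyperplane $T_p\mathscr{H}_-$, so every intersection is automatically transverse).
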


\begin{proof}
The region 
\[
\{r\geq r_{-}\}=\{ t^*\in\mathbb{R},  r\in[r_{-},+\infty),  (\theta,\phi^*)\in S^2\} \setminus \Sigma\subset K^*
\] 
is causal by Corollary  \ref{causal region of K^*} and closed timelike geodesics cannot intersect $\{r=r_{-}\}$ by Prop.  \ref{oss event horizon}. 
\end{proof}

\begin{prop}\label{prop spacelike foliation}
There are no closed timelike geodesics strictly contained in $\{0<r<r_{-}\}$.
\end{prop}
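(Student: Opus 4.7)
The plan is to produce a spacelike foliation of the region $U := \{0<r<r_-\}\subset K^*$ and then invoke the standard fact that a timelike curve meeting a spacelike foliation transversely cannot return to its starting point. The natural candidate for a foliating function is the Boyer--Lindquist time lifted to $K^*$, namely $\tau:=t^*-\mathcal{T}(r)$; since $\Delta(0)=a^2>0$ and $r_-$ is the smallest root of $\Delta$, the antiderivative $\mathcal{T}$ is smooth on $\{r<r_-\}$, so $\tau$ is a well-defined smooth function on $U$. Via the isometry identifying Block~III of $K^*$ with the Boyer--Lindquist Block~III, $\tau$ corresponds to the BL coordinate $t$.

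The key computation is to show that $\nabla\tau$ is timelike on $U$. Working in the BL chart and inverting the $(t,\phi)$-block of the metric \eqref{kerr metric} (using the identity $g_{tt}g_{\phi\phi}-g_{t\phi}^2=-\Delta(r)\sin^2\theta$), one obtains
\[
\mathbf{g}(\nabla\tau,\nabla\tau)=g^{tt}=-\frac{1}{\Delta(r)}\left(r^2+a^2+\frac{2Mra^2\sin^2\theta}{\rho^2(r,\theta)}\right).
\]
On $U$ the factor $\Delta(r)$ is strictly positive and, crucially because $r>0$, the parenthesis is a sum of three non-negative terms with the first one strictly positive, so $g^{tt}<0$ throughout $U$ and $\nabla\tau$ is timelike.

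To conclude, I would argue by contradiction. Given a closed timelike geodesic $\gamma\colon[a,b]\to U$, the chain rule gives
\[
\frac{d(\tau\circ\gamma)}{ds}(s)=\mathbf{g}(\nabla\tau,\gamma'(s)),
\]
which never vanishes, because two timelike vectors in a Lorentzian manifold are never $\mathbf{g}$-orthogonal. Continuity then forces $\tau\circ\gamma$ to be strictly monotonic on $[a,b]$, contradicting $\tau(\gamma(a))=\tau(\gamma(b))$, which follows from $\gamma(a)=\gamma(b)$ and the fact that $\tau$ is a globally defined function on $U$.

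The only delicate point is verifying the sign of the parenthesis in $g^{tt}$, and this is precisely where the hypothesis $r>0$ is essential: the middle term $2Mra^2\sin^2\theta/\rho^2$ changes sign at $r=0$, so the spacelike-foliation argument genuinely breaks down in the negative radial region. This is exactly why the later sections of the paper must treat the case $r<0$ by completely different methods.
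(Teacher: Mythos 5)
Your proof is correct and is essentially the paper's argument in dual form: the paper shows the level sets $\{t=\mathrm{const}\}$ are spacelike in $\{0<r<r_-\}$ (by checking $\partial_r,\partial_\theta,\partial_\phi$ are spacelike there, which hinges on the same positivity of $r^2+a^2+2Mra^2\sin^2\theta/\rho^2$ for $r>0$ that you use) and concludes via the maximum of $t\circ\gamma$, whereas you show the equivalent statement that $\nabla t$ is timelike and conclude via strict monotonicity of $t\circ\gamma$. Both are the standard ``time function on the region'' argument, so no substantive difference.
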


\begin{proof}
First we claim that the hypersurfaces $\mathcal{N}_t:=\{t=\textrm{const}\}\cap\{0<r<r_{-}\}$ are spacelike.  Indeed,  if $p\in\mathcal{N}_t\setminus A$,  where $A=\{\theta=0,\pi\}$,  then $T_p\mathcal{N}_t$ is spanned by $\partial_r,\partial_\theta,\partial_\phi$ which are spacelike and orthogonal to each other.  If $p\in A\subset\mathcal{N}_t$,  then $p=(t,r,q)$ with $q=(0,0,\pm 1)\in S^2\subset \mathbb{R}^3$,  and we may replace $\partial_\theta,\partial_\phi$ by any basis of $T_qS^2$.  Suppose by contradiction that there exist a CTG $\gamma$ in $\{0<r<r_{-}\}$.  Then $t\circ \gamma$ takes values in a closed $t$-interval,  and so must attain maximum $t_0$,   say at parameter $s_0$.  Therefore  $\gamma'(s_0)\in T_{\gamma(s_0)}\mathcal{N}_{t_0}$,  so $\gamma$ must be tangent $\mathcal{N}_{t_0}$.  This is a contradiction since $\gamma'(s_0)$ is timelike.
\end{proof}

\hspace{1cm}





\subsection{Horizons and Axis cases}

First we rule out CTGs entirely contained in the axis $A=\{\theta=0,\pi\}$ and CTGs intersecting  the horizon $\mathscr{H}=\{r=r_{\pm}\}$.

\subsubsection*{The case of the horizon $\mathscr{H}=\{r=r_{\pm}\}$}

\begin{prop}\label{oss event horizon}
There are no CTGs intersecting $\mathscr{H}=\{r=r_{\pm}\}$.
\end{prop}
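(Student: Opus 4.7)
The plan is to combine the null character of the horizon with the one-way nature of horizon crossings dictated by the time orientation. Suppose for contradiction that $\gamma \colon [a, a+T] \to K^*$ is a closed timelike geodesic that meets $\mathscr{H}$ at parameter $s_0$, and assume without loss of generality $\gamma(s_0) \in \mathscr{H}_+$ (the argument for $\mathscr{H}_-$ is identical).

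The first step is to show that $\gamma$ hits $\mathscr{H}$ transversally at $s_0$. By Prop.~\ref{H is closed totally geod}, $\mathscr{H}$ is a null hypersurface whose degenerate direction is spanned by the canonical null field $V$. Consequently every tangent vector $v \in T_{\gamma(s_0)}\mathscr{H}$ decomposes as $v = \alpha V + w$ with $w$ orthogonal to $V$, hence either zero or spacelike, so that $\mathbf{g}(v,v) = \mathbf{g}(w,w) \geq 0$. Timelike vectors therefore cannot lie in $T\mathscr{H}$; since $\gamma'(s_0)$ is timelike, the crossing must be transverse.

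The second step is the causal analysis already invoked in the proof of Cor.~\ref{causal region of K^*}: the time orientation representative $-\partial_r^*$ is null and transverse to $\mathscr{H}$, so future-directed timelike curves can traverse $\mathscr{H}_+$ only from block $\mathrm{I}$ into block $\mathrm{II}$, and never in the reverse direction. Hence there is $\epsilon_0>0$ such that $\gamma(s_0 - \epsilon) \in \mathrm{I}$ and $\gamma(s_0 + \epsilon) \in \mathrm{II}$ for $0 < \epsilon < \epsilon_0$; moreover $\gamma$ can never re-enter $\mathrm{I}$ after parameter $s_0$, so $\gamma(s) \notin \mathrm{I}$ for every $s \in (s_0, s_0 + T)$.

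Periodicity now closes the loop: $\gamma(s_0 + T - \epsilon) = \gamma(s_0 - \epsilon) \in \mathrm{I}$, yet $s_0 + T - \epsilon$ lies in $(s_0, s_0 + T)$ for small $\epsilon$, forcing $\gamma(s_0 + T - \epsilon) \notin \mathrm{I}$, a contradiction. I expect the only delicate point to be the transversality claim in the first step, which relies on the elementary but essential fact that the induced first fundamental form of a null hypersurface is positive semidefinite along its transverse complement; everything else is causal bookkeeping already present in the earlier analysis of Kerr horizon crossings.
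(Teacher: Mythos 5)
Your proof is correct and follows essentially the same route as the paper's: both hinge on the fact that a timelike vector cannot be tangent to the null hypersurface $\mathscr{H}$ (so every intersection is a transverse crossing) and on the time orientation $-\partial_r^*$ forcing all crossings of each component $\{r=r_\pm\}$ to go one way only. The only cosmetic difference is the final bookkeeping: the paper derives the contradiction from the parity of the intersection number of a closed curve with the separating hypersurface, whereas you invoke periodicity of $\gamma$ directly; both are fine.
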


\begin{proof}

First,  notice that there are no timelike geodesics entirely contained in $\mathscr{H}$ by Prop.  \ref{H is closed totally geod}.  Consider now a timelike geodesic intersecting $\mathscr{H}$ transversally.  Since each connected component $\{r=r_\pm\}$ of $\mathscr{H}$ is an orientable hypersurface separating the orientable manifold $K^*$,  every closed curve transversal to $\mathscr{H}$ has to intersect $\{r=r_\pm\}$ an even number of times.  Further since $K^*$ is time-oriented by $-\partial_r^*$,  all tangent vectors to a timelike geodesic transversal to $\mathscr{H}$ have to lie on one side of $\mathscr{H}$.  Therefore a timelike geodesic transversal to $\mathscr{H}$ can intersect each connected component $\{r=r_\pm\}$ only once.  This shows that no timelike geodesic transversal to $\mathscr{H}$ can close.
\end{proof}

\subsubsection*{The case of the axis $A=\{\theta=0,\pi\}$}

\begin{prop} \label{no closed geodesics in axis}
There are no CTGs which are tangent at some point to $A=\{\theta=0,\pi\}$.  In particular,  there are no CTGs entirely contained in $A$.
\end{prop}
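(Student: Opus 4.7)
The plan is to reduce the statement to ruling out closed timelike \emph{curves} entirely contained in $A$, and then exhibit a globally monotone time function on $A$.

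\emph{Reduction.} By Lemma \ref{A and Eq closed totally geod subman}, $A$ is a closed totally geodesic submanifold of $K^*$. If a geodesic $\gamma$ of $K^*$ is tangent to $A$ at some parameter $s_0$, then near $s_0$ it must coincide with the unique geodesic of $A$ (equipped with the induced metric) having the same initial data, since every geodesic of the totally geodesic submanifold $A$ is also a geodesic of $K^*$. A standard openness-closedness argument (using continuity of $\gamma$ and $\gamma'$ together with the closedness of $A$) then forces $\gamma$ to lie in $A$ throughout its entire domain. Hence it suffices to exclude closed timelike geodesics---indeed, closed timelike curves---entirely contained in $A$.

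\emph{Induced metric and monotone $t^*$.} On $A$ we have $\sin\theta = 0$ and $\rho^2 = r^2+a^2$, and the identity $1 - 2Mr/(r^2+a^2) = \Delta(r)/(r^2+a^2)$ turns the Kerr-star metric of Lemma \ref{Kerr-star metric} into
\[
\mathbf{g}|_A = -\frac{\Delta(r)}{r^2+a^2}\,(dt^*)^2 + 2\, dt^*\, dr,
\]
a smooth Lorentzian metric on each connected component of $A \cong \mathbb{R}^2$ that extends regularly through both horizons. For any tangent vector $v = \alpha\,\partial_{t^*} + \beta\,\partial^*_r$ to $A$,
\[
\mathbf{g}(v,v) = \alpha\left(-\frac{\Delta(r)}{r^2+a^2}\,\alpha + 2\beta\right),
\]
so any timelike $v$ must have $\alpha \neq 0$. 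Along a timelike curve $\gamma(s) = (t^*(s), r(s))$ in $A$, this forces $\dot{t}^*(s)$ to be nowhere zero and hence of constant sign, so $t^* \circ \gamma$ is strictly monotonic. A closed curve cannot have strictly monotonic $t^*$, giving the desired contradiction.

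\emph{Main obstacle.} The subtlety to watch out for is that $A$ passes through block III, the vicious region, so every point of $A \cap \textrm{III}$ lies on some closed timelike curve of $K^*$; the proposition asserts that none of these can stay inside $A$. The Boyer--Lindquist coordinate $t$ alone would not suffice as a time function here (indeed $\partial_t$ becomes spacelike in block III), but the cross term $2\, dt^*\, dr$ in the Kerr-star form of the metric ensures that $dt^*$ never annihilates a timelike vector of $A$, even where $\partial_{t^*}$ itself is spacelike or null. This structural observation is what drives the argument.
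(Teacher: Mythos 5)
Your proof is correct, and it takes a genuinely different route from the paper's. Both arguments start with the same reduction: $A$ is a closed totally geodesic submanifold, so a geodesic tangent to $A$ at one point lies in $A$ forever. From there the paper exploits the geodesic structure: it sets $L=0$ on $A$, derives $Q=-a^2(E^2+q)$ from the $\theta$-equation, factors the radial polynomial as $R(r)=(r^2+a^2)\big[(r^2+a^2)E^2+q\Delta(r)\big]$, and runs a case analysis on $E=0$ versus $E\neq 0$ and on the discriminant of the quadratic factor, invoking the confinement to $\{r<r_-\}$, the fact that turning points cannot sit on the totally geodesic horizon, and monotonicity of $t$ in the constant-$r$ subcase. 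You instead compute the induced two-dimensional Lorentzian metric on each component of $A$ in Kerr-star coordinates and observe that $dt^*$ never annihilates a timelike tangent vector of $A$; your computation checks out, since on $A$ one has $\rho^2=r^2+a^2$, $1-2Mr/(r^2+a^2)=\Delta(r)/(r^2+a^2)$, and $\mathbf{g}(\partial^*_r,\partial^*_r)=0$, so $\mathbf{g}(v,v)=\alpha\big(-\tfrac{\Delta(r)}{r^2+a^2}\alpha+2\beta\big)$ forces $\alpha\neq 0$ for timelike $v$. Your approach is more elementary (no constants of motion, no radial polynomial) and proves a strictly stronger statement, namely that $t^*$ is a time function on $A$ and hence $A$ contains no closed timelike \emph{curves} at all; the paper's approach only handles geodesics but fits into the uniform polynomial-analysis framework it uses for all the other cases. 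Note only that the Kerr-star chart $(t^*,r,\theta,\phi^*)$ is stated in the paper to be a coordinate system on $B\setminus A$; this is harmless for your argument because $t^*$ and $r$ remain well-defined smooth functions on $A$ and do coordinatize each of its two components, but it is worth a sentence of justification.
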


\begin{proof}
First of all,  $A=\{\theta=0,\pi\}$ is a $2$-dimensional closed totally geodesic submanifold by Lem.  \ref{A and Eq closed totally geod subman}.    Hence if a geodesic $\gamma$ is tangent to $A$ at some point,   it will always lie on $A$.  If $\gamma\in A$,  then $L=0$,  since $\tilde{\partial}_\phi \equiv 0$ on $A$.  From the $\theta$-equation of Prop. \ref{differential equations of geodessics},  we have $Q=-a^2(E^2+q)$,  hence $K=-a^2q$.  Therefore we obtain

\begin{align*}
R(r)=\Delta(r)(qr^2+a^2q)+(r^2+a^2)^2E^2=(r^2+a^2)\big[ (r^2+a^2)E^2 + q\Delta(r) \big].
\end{align*}
Now distinguish the cases $E=0$ and $E\neq 0$.  If $E=0$,  $R(r)=q(r^2+a^2)\Delta(r)$,  so the only turning points must be on $\mathscr{H}$.  Hence this polynomial cannot produce a CTG since the hypersurfaces $\mathscr{H}=\{ r=r_{\pm}\}$ are closed totally geodesic submanifolds by Prop. \ref{H is closed totally geod} and a geodesic cannot have turning points on such hypersurfaces because it would be tangent to them there.  If $E\neq 0$,  

\begin{align*}
R(r)=(r^2+a^2) \big[  r^2(E^2+q) -2Mqr + a^2(E^2+q)  \big].
\end{align*}
Let us consider the discriminant of the second factor:  $\textrm{dis}=4M^2q^2-4a^2(E^2+q)^2$.  If $\textrm{dis}<0$,  no bounded $r$-behaviour is possible.  If $\textrm{dis}=0$,  the two coincident roots of $R(r)$ are $Mq/(E^2+q)=:\bar{r}$.  Since $\mathscr{H}$ is a null hypersurface by Prop. \ref{H is closed totally geod},  there are no timelike geodesics in $\mathscr{H}$,  hence we may assume $\bar{r}\neq r_{\pm}$.  Therefore we can use the $t$-diff.  equation of Prop.  \ref{differential equations of geodessics} to get

\begin{align*}
\rho^2(r,\theta)t'=(r^2+a^2)^2E/\Delta(r)\neq 0,
\end{align*}
for every $r$.  Hence $t(s) $ must be monotone and the geodesic cannot be closed. 

If $\textrm{dis}>0$,  a bounded $r$-behaviour would require $E^2+q<0$.  However the following root satisfies

\begin{align*}
\frac{Mq}{E^2+q}+\sqrt{\frac{M^2q^2}{(E^2+q)^2}-a^2}>r_{+}=M+\sqrt{M^2-a^2}, 
\end{align*}
since $0<\frac{q}{E^2+q}=\big( 1+ \frac{E^2}{q}\big)^{-1}>1$,  which contradicts Lemma \ref{lemma about closed curves confined}.

\end{proof}

\subsection{Steps of the proof for other cases}\label{steps of other cases}

The proof splits into two main cases  $E=0$ and $E\neq 0$.\\  

If $E=0$ (\S \ref{section $E=0$}),  by Prop.  \ref{eq principal nulls}  we can analyse the only two possible subcases $K(0,L,Q)=0$  (\S \ref{E=0,K=0}) and $K(0,L,Q)>0$ (\S \ref{E=0,K>0}).\\

If $E\neq 0$ (\S \ref{section $E div 0$}),  we analyse three subcases $Q=0$ (\S \ref{case Q=0}),  $Q>0$ (\S \ref{case Q>0}) and $Q<0$ (\S \ref{case Q<0}).\\

\begin{oss}
The only case which requires a detailed analysis of the differential equations is the case $E\neq 0$ and $Q<0$ (see \ref{case Q<0}).
\end{oss}

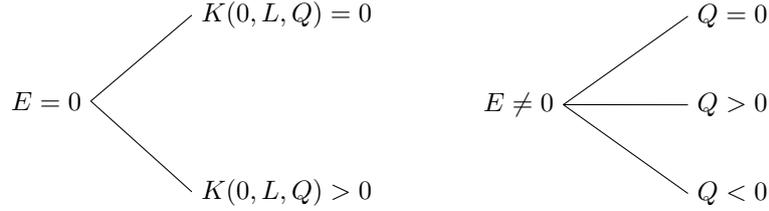
\begin{figure}[h]
\centering
\begin{tikzpicture}[grow=right]
\tikzset{level distance=90pt,sibling distance=50pt}
 \tikzset{frontier/.style={distance from root=170pt}}

\Tree 
[.$E=0$ [.$K(0,L,Q)>0$  ]
                       [.$K(0,L,Q)=0$ ] 
                       ]

\end{tikzpicture}
\hspace{1cm}
\begin{tikzpicture}[grow=right]
\tikzset{level distance=80pt,sibling distance=18pt}
 \tikzset{frontier/.style={distance from root=150pt}}

\Tree 
               [.$E\neq 0$  [.$Q<0$ ] 
               		              [.$Q>0$ ] 
                                    [.$Q=0$ ] 
                ]

\end{tikzpicture}
\vspace*{5mm}
\caption{All the geodesic types which have to be studied.}
\label{figure steps of proof}
\vspace*{-5mm}
\end{figure}

\subsection{Case $E=0$}\label{section $E=0$}

\vspace{1cm}

From Prop.  \ref{differential equations of geodessics},  we have

\begin{align}
\label{R equation E=0}R(r)&=qr^4-2Mqr^3+\mathfrak{X}(0,L,Q)r^2+2MK(0,L,Q)r-a^2Q\geq 0,\\
\label{theta equation E=0}\Theta(\theta)&=Q+\cos^2\theta\bigg( a^2q-\frac{L^2}{\sin^2\theta} \bigg)\geq 0,
\end{align}
with $\mathfrak{X}(0,L,Q)=a^2q-L^2-Q$ and $K(0,L,Q)=L^2+Q$.  Notice that we must have $Q\geq 0$ by \eqref{theta equation E=0},  hence $\mathfrak{X}(0,L,Q)<0$.  
\vspace{1cm}

\subsubsection{Subcase $K(0,L,Q)=0$} \label{E=0,K=0} Then $Q=-L^2\leq 0$.  Since $Q\geq 0$,  we hence must have $Q=L=0$. Therefore

\begin{align}
R(r)=qr^2\Delta(r).
\end{align}
Then by Lemma \ref{lemma about closed curves confined},  the only possible $r$-behaviour for a CTG would be $r(s)=\textrm{const}=0$.  However,  this geodesic would lie on the ring singularity by Prop.  \ref{eq principal nulls}.


\vspace{0.8cm}

\subsubsection{Subcase $K(0,L,Q)>0$}  \label{E=0,K>0}

The signs of the coefficients of $R(r)$ are $-\; +\; -\; +\; -$ if $Q>0$ (respectively $-\; +\; -\; +$ if $Q=0$),  hence there are no roots in $r<0$ and either four or two or zero positive roots (respectively either three or one positive roots) by the \textit{"Descartes' rule of signs"}.  Therefore a CTG $\gamma$ could have a bounded $r$-behaviour only if $r(s)\in [0,r_{-})$ by Lemma \ref{lemma about closed curves confined}.  Now distinguish the cases $Q=0$ and $Q>0$.  If $Q=0$,  then $\cos^2\theta(s)=0$ by \eqref{theta equation E=0},  hence $\gamma$ lies in $Eq$.  Therefore $r(s)>0$ because otherwise $\gamma$ would hit the ring singularity.  If $Q>0$,  then $R(0)<0$,  hence $r(s)>0$.  By Lemma \ref{lemma about closed curves confined},  in both cases we have $0<r(s)<r_{-}$.  So $\gamma$ cannot be closed by Prop.  \ref{prop spacelike foliation}.



\subsection{Case $E\neq 0$}\label{section $E div 0$}

\vspace{0.8cm}
\subsubsection{Subcase $Q=0$} \label{case Q=0}

We have

\begin{align}
R(r)&=(E^2+q)r^4-2Mqr^3+\mathfrak{X}(E,L,0)r^2+2MK(E,L,0)r\geq 0,\\
\label{theta-eq Ediv 0,Q=0} \Theta(\theta)&=\cos^2\theta\bigg[ a^2(E^2+q)-\frac{L^2}{\sin^2\theta} \bigg]\geq 0. 
\end{align}

\begin{prop}\label{prop bounded timelike in Eq}
All the timelike geodesics with $E\neq 0, \, Q=0$ which have bounded $r$-behaviour lie in $Eq=\{\theta=\pi/2\}$.
\end{prop}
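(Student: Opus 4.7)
My plan is a case split that reduces the proposition, case by case, to a delicate quartic analysis. WLOG I take $E,a>0$.

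First I use the $\theta$-equation $\rho^4\theta'^2=\Theta(\theta)=\cos^2\theta\bigl[a^2(E^2+q)-L^2/\sin^2\theta\bigr]\geq 0$. If $a^2(E^2+q)\le L^2$, the bracket is non-positive for every $\theta$ and vanishes only at $\sin^2\theta=1$, so $\Theta\le 0$ with equality only on $\{\cos\theta=0\}$; the $\theta$-equation then forces $\gamma\subset Eq$. From now on $a^2(E^2+q)>L^2$, in particular $E^2+q>0$ and $\mathfrak X:=a^2(E^2+q)-L^2>0$. If additionally $L=aE$, then $K=(L-aE)^2=0$ and Proposition \ref{eq principal nulls} gives $\gamma\subset Eq$ as a principal equatorial geodesic. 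The remaining, genuinely open case is $L\neq aE$ (so $K>0$).

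In this case $Q=0$ yields the factorization $R(r)=r\,h(r)$ with cubic
\[
h(r)=(E^2+q)r^3-2Mqr^2+\mathfrak X r+2MK;
\]
since $q<0$, every coefficient of $h$ is strictly positive, so by Descartes' rule of signs $h(r)>0$ on $r\geq 0$ and therefore $R(r)>0$ on $r>0$. No bounded $r$-interval can lie in $r\geq 0$, so any bounded $r$-range must be either an interval $[\rho_2,\rho_3]\subset(-\infty,0)$ between two consecutive negative zeros of $h$ or a single negative double root of $h$. The proposition therefore reduces to proving that $h$ admits exactly one (simple, negative) real root.

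The main obstacle is precisely this last claim. If $\Delta_{h'}:=16M^2q^2-12(E^2+q)\mathfrak X\leq 0$, then $h'\geq 0$ and $h$ is monotone, so we are done; assume $\Delta_{h'}>0$ and let $r_m<0$ be the local minimum of $h$ (both critical points of $h$ are negative by Vieta). Substituting $h'(r_m)=0$ into $h(r_m)$ to eliminate the $r^3$-term yields the compact identity $h(r_m)=\tfrac{2s}{3}\bigl(M|q|s-\mathfrak X\bigr)+2MK$ with $s:=-r_m>0$. Hence $h(r_m)<0$ would force $\mathfrak X> M|q|s+3MK/s\ge 2M\sqrt{3|q|}\,|L-aE|$ by AM--GM. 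Combined with the existence hypothesis $\mathfrak X<4M^2q^2/(3(E^2+q))$ this gives the upper bound $|L-aE|<2M|q|^{3/2}/\bigl(3\sqrt 3\,(E^2+q)\bigr)$, while the triangle inequality and $|L|<a\sqrt{E^2+q}$ (from $\mathfrak X>0$) give the lower bound $|L-aE|>a\bigl(E-\sqrt{E^2+q}\bigr)>0$. Compatibility of these two bounds reduces, after clearing denominators, to the single algebraic inequality
\[
9\bigl(E+\sqrt{|q|}\bigr)\bigl(E-\sqrt{E^2+q}\bigr)<2|q|,
\]
which I then show always fails. Setting $u=\sqrt{E^2+q}$ and $v=\sqrt{|q|}$ (so $u^2+v^2=E^2$), the expression $9(E+v)(E-u)-2v^2$ expands to $9u^2-9uv+7v^2+9E(v-u)$; for $v\ge u$ this is manifestly positive, and for $u>v$ squaring the sharper claim $9u^2-9uv+7v^2>9E(u-v)$ and using $E^2=u^2+v^2$ reduces it to $45v^2(u-8v/15)(u+4v/3)>0$, which is clear. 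Hence $h(r_m)>0$ strictly, $h$ has the claimed unique simple negative root, no bounded $r$-behaviour is possible outside $Eq$, and the proposition follows.
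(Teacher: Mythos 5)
Your first two cases are fine and your overall plan for the third case --- showing directly that the cubic $h(r)=(E^2+q)r^3-2Mqr^2+\mathfrak X r+2MK$ has a unique simple negative root --- is a legitimate (and more self-contained) route: the paper does not do this analysis at all, but instead disposes of the region $r<0$ by citing Proposition 4.8.2 of O'Neill and only treats $r\geq 0$ explicitly (Descartes plus the boundary cases $R'(0)<0$ and $R'(0)=0$ at $r=0$). Your identity $h(r_m)=\tfrac{2s}{3}(M|q|s-\mathfrak X)+2MK$, the AM--GM consequence $\mathfrak X>2M\sqrt{3|q|}\,|L-aE|$, and the discriminant bound $\mathfrak X<4M^2q^2/(3(E^2+q))$ are all correct. (A very minor side remark: in your first case the clause ``vanishes only at $\sin^2\theta=1$'' fails in the degenerate corner $L=0$, $E^2+q=0$, where $\Theta\equiv 0$ allows constant $\theta\neq\pi/2$; that corner is harmless only because its $R(r)=2ME^2r(r^2+a^2)$ forbids bounded motion, which you would need to say.)

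The genuine gap is the ``compatibility'' step. Your two bounds are $|L-aE|<\frac{2M|q|^{3/2}}{3\sqrt3\,(E^2+q)}$ and $|L-aE|>a\bigl(E-\sqrt{E^2+q}\bigr)$; the upper bound scales with $M$ and the lower bound with $a$, and since only $a<M$ is available these are perfectly compatible (take $a/M$ small), so no contradiction can follow from them alone, and in particular they cannot ``reduce after clearing denominators'' to the $a$- and $M$-free inequality $9(E+\sqrt{|q|})(E-\sqrt{E^2+q})<2|q|$ --- that reduction is where the argument breaks. The algebra you then perform (the factorization $45v^2(u-8v/15)(u+4v/3)$) is correct but verifies an inequality that has not been shown relevant. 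What is missing is the third constraint $\mathfrak X\leq a^2(E^2+q)$: feeding it into $\mathfrak X>2M\sqrt{3|q|}\,(aE-L)$ and optimizing over the admissible $L\in(-a\sqrt{E^2+q},a\sqrt{E^2+q})$ forces $a^2(E^2+q)>3M^2|q|$, i.e.\ a \emph{lower} bound on $a$; combined with $a<M$ this gives $E^2+q>3|q|$, while combining the same lower bound on $a$ with your upper bound $\mathfrak X<4M^2q^2/(3(E^2+q))$ gives $45(E^2+q)<4|q|$ --- and only this pair is contradictory. So the conclusion you want is true, but your proof as written does not establish it; either supply the lower bound on $a$ and redo the comparison, or fall back on the paper's citation of O'Neill's Proposition 4.8.2 for the absence of bounded timelike $r$-behaviour in $r<0$.
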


\begin{proof}
Suppose there exists an $r$-bounded timelike geodesic with $E\neq 0,  Q = 0$ for which $\theta(s) \neq \pi/2$ for some $s$.  Then from \eqref{theta-eq Ediv 0,Q=0},  we get

\begin{align*}
a^2(E^2+q)\geq \frac{L^2}{\sin^2\theta}\geq L^2.
\end{align*}
So $\mathfrak{X}(E,L,0)=a^2(E^2+q)-L^2\geq 0$,  hence $E^2+q\geq 0$.  Observe also that $K(E,L,0)\geq 0$ by Prop.  \ref{eq principal nulls}.  Since the coefficients of $R(r)$ are all non-negative,  $R(r)$ cannot have positive roots by the \textit{"Descartes' rule of signs"}.  By Prop.  $4.8.2$ of \cite{KBH_book} there are no timelike geodesics with bounded $r$-behaviour in $r<0$.  Therefore two possibilities are left,  either  $r(s)\in [\bar{r},0]$,  with $\bar{r}<0$ or $r(s)=\textrm{const}=0$.
In the first case we would have $R'(0)<0$,  which contradicts $K(E,L,0)\geq 0$.  In the second case we would have $R'(0)=0$,  hence $L=aE$,  which contradicts $\mathfrak{X}(E,L,0)\geq 0$.
\end{proof}
By Prop.  \ref{prop bounded timelike in Eq},  we can suppose $\theta(s)= \pi/2$ for every $s$. Since the geodesics are constrained in $Eq$,  $r = 0$ cannot be reached,  hence the $r$-motion must be either confined in $\{r<0\}$ or in $\{r>0\}$.  By Prop.  $4.8.2$ of \cite{KBH_book} no bounded $r$-behaviour in $r<0$ is allowed.  By Lemma \ref{lemma about closed curves confined} the $r$-motion must then be constrained in the region $\{0<r<r_{-}\}$.  Such geodesic cannot be closed by Prop.  \ref{prop spacelike foliation}.




\vspace{1cm}

\subsubsection{Subcase $Q>0$} \label{case Q>0} We have $R(0)=-a^2Q<0$.  Hence a bounded $r$-behaviour is either confined in $\{r<0\}$ or in $\{r>0\}$.  However if the timelike geodesic lies in $\{r>0\}$,  it must be constrained in $\{0<r<r_{-}\}$ by Lemma \ref{lemma about closed curves confined} and it cannot be closed by Prop.  \ref{prop spacelike foliation} .  If it lies in $\{r<0\}$,  by Prop.  $4.8.2$ of \cite{KBH_book}  it must be a fly-by geodesic, \textit{i.e.} $r\circ\gamma\subset [-\infty,r_{\textit{turn}}]$,  where at $r_{\textit{turn}}$ the geodesic reverse its $r$-motion.

\vspace{1cm}

\subsubsection{Subcase $Q<0$} \label{case Q<0}

This is the last remaining case and the most difficult one.  By Prop.  \ref{geod Q<0},  the only possible bounded behaviour is $r(s)=\textrm{const}<0$ (see  Fig.\ref{R(r) Q<0 r=const}).  Such geodesics  are known in the literature as \textit{spherical geodesics},  see e.g.  \cite{spherical_photon}.

\begin{figure}[H]
\centering
\includegraphics[scale=0.6]{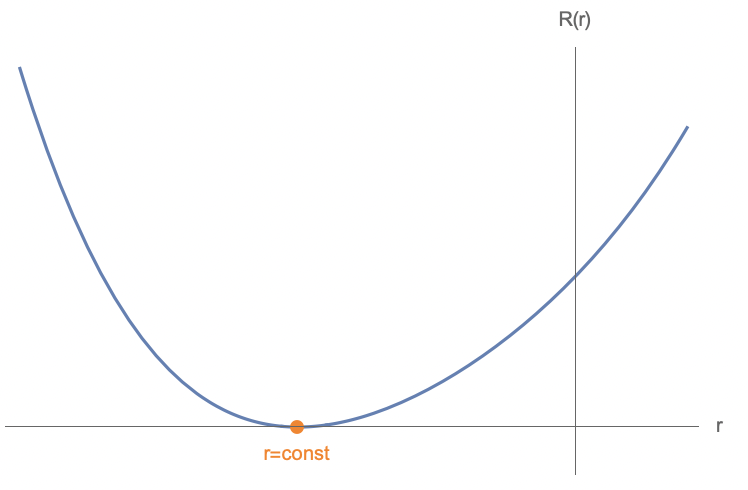} 
\caption{Plot of $R(r)$ with $a=3,\; M=5,\; q\approx -0.337\;E\approx 1.156, \; L\approx 0.130,\;  Q=-6$.}
\label{R(r) Q<0 r=const}
\vspace*{-5mm}
\end{figure}
\hspace{0.5cm}

\begin{prop}\label{spherical timelike geodesics have r>-M}
Consider a spherical timelike geodesic $\gamma$ with $Q<0$ at radius $r$.  If $r\leq -M$,   then $\gamma$ cannot be closed.
\end{prop}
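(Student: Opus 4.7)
The plan is to show that $t\circ\gamma$ is strictly monotone, which immediately precludes closedness. Since $E\neq 0$ by Prop.~\ref{Q<0 condition}, after reversing the parametrization if necessary I may assume $E>0$; the goal then reduces to proving $\rho^2\,t'>0$ pointwise along $\gamma$.

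The argument rests on two sharp bounds which I would insert into the $t$-equation of Prop.~\ref{differential equations of geodessics},
\[
\rho^2\,t' \;=\; a\,\mathbb{D}(\theta) + \frac{(r^2+a^2)\,\mathbb{P}(r)}{\Delta(r)}.
\]
The first is $|\mathbb{D}(\theta)|\le\sin\theta\sqrt{K}$, which follows from $\Theta(\theta)\ge 0$ together with $q<0$, once one checks that $K>0$; this positivity would be obtained by contradiction, since $K\le 0$ combined with $\Theta\ge 0$ and $q<0$ would force both $\mathbb{D}=0$ and $\cos\theta=0$, contradicting the fact in Prop.~\ref{Q<0 condition} that $\gamma$ avoids $Eq$. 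The second is $\mathbb{P}(r)>\sqrt{K\,\Delta(r)}$, which comes from $R(r)=0$ via $\mathbb{P}(r)^2=\Delta(r)(K-qr^2)>\Delta(r)\,K$ (using $r\neq 0$ and $q<0$); the sign $\mathbb{P}(r)>0$ is immediate from $\mathbb{P}(r)=(r^2+a^2)E-La$ and the inequality $|L|<aE$ supplied by $a^2(E^2+q)>L^2$ in Prop.~\ref{Q<0 condition}. Plugging both into the $t$-equation gives
\[
\rho^2\,t' \;>\; \sqrt{K}\,\left[\,\frac{r^2+a^2}{\sqrt{\Delta(r)}}-a\sin\theta\,\right].
\]

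The main remaining step is to verify that this bracket is strictly positive under the hypothesis $r\le -M$. Since $\sin\theta\le 1$, it suffices to show $(r^2+a^2)^2>a^2\,\Delta(r)$, which by direct expansion is equivalent to $|r|(r^2+a^2)>2Ma^2$. The function $v\mapsto v(v^2+a^2)-2Ma^2$ is strictly increasing on $(0,\infty)$ and takes the value $M(M^2-a^2)>0$ at $v=M$ in the slow Kerr regime $|a|<M$, so the inequality holds whenever $|r|\ge M$. Combined with $K>0$ this yields $\rho^2\,t'>0$ throughout $\gamma$, so $t$ is strictly increasing and $\gamma$ cannot close. The main conceptual work is producing the two bounds on $\mathbb{D}$ and $\mathbb{P}$ (and especially the non-obvious positivity of $K$); once they are available, the threshold $r\le -M$ emerges naturally from the one-variable algebraic inequality $v(v^2+a^2)\ge 2Ma^2$.
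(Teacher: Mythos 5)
Your argument is correct, but it follows a genuinely different route from the paper. The paper's proof is purely geometric and does not use the equations of motion at all: it shows that $\mathbf{g}(\partial_\phi,\partial_\phi)>0$ whenever $r\le -M$ (reducing to the negativity of the discriminant of an $r$-polynomial under $|a|<M$), concludes that the slices $\{t=\textrm{const}\}\cap\{r\le -M\}$ are spacelike because $\partial_r,\partial_\theta,\partial_\phi$ are mutually orthogonal and spacelike there, and then gets monotonicity of $t\circ\gamma$ for \emph{any} timelike curve confined to that region. Your proof instead works directly with the $t$-equation of Prop.~\ref{differential equations of geodessics}, using $R(r)=0$ to get $\mathbb{P}^2=\Delta(r)(K-qr^2)>K\Delta(r)$, using $\Theta\ge 0$ and $q<0$ to get $\mathbb{D}^2\le K\sin^2\theta$, and using $K>0$ (which does follow from Prop.~\ref{eq principal nulls} plus $\gamma\cap Eq=\emptyset$ from Prop.~\ref{Q<0 condition}); the threshold then comes out of $|r|(r^2+a^2)>2Ma^2$, which holds for $|r|\ge M$ exactly because $M^2>a^2$. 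Both arguments are valid; the paper's is shorter and yields the stronger statement that there are no closed timelike \emph{curves} at all in $\{r\le -M\}$, whereas yours is tailored to geodesics but is more in the spirit of the $\Delta t$ computations of \S\ref{case Q<0} and makes visible why $-M$ is the natural cutoff. Two small points of hygiene: since the paper allows $a<0$, you should write $|a|$ in the bounds $|L|<|a|E$ and $a\,\mathbb{D}\ge -|a|\sin\theta\sqrt{K}$, and note that reversing the parametrization flips the sign of $E$, $L$, $\mathbb{P}$, $\mathbb{D}$ simultaneously so that the reduction to $E>0$ is harmless; neither affects the final inequality $(r^2+a^2)^2>a^2\Delta(r)$, which is even in $a$.
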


\begin{proof}
First observe that if $r\leq -M$ and $\theta\neq 0,\pi$,  then $\partial_\phi$ is spacelike.  Indeed the sign of $\mathbf{g}(\partial_\phi,\partial_\phi)$ is determined by the sign of the following function


\begin{align*}
r^4+a^2(1+\cos^2\theta)r^2+2a^2M\sin^2\theta\; r + a^4\cos^4\theta  & \geq  r^4+a^2r^2 + 2a^2Mr\\
& \geq M^2 r^2  +2a^2 M r +a^2 M^2\\
& >0,
\end{align*}
where we used respectively the bounds on the trigonometric functions,  $r\leq -M<0$,   and the negativity of the discriminant of the $r$-polynomial since $|a|<M$.\\

Now we claim that the hypersurfaces $\mathcal{S}_t:=\{t=\textrm{const}\}\cap \{r\leq -M \}$ are spacelike,  hence $t\circ\gamma$ is monotonic,  therefore $\gamma$ cannot be closed.  Indeed,  if $p\in\mathcal{S}_t\setminus A$,  where $A=\{\theta=0,\pi\}$,  then $T_p\mathcal{S}_t$ is spanned by $\partial_r,\partial_\theta,\partial_\phi$ which are spacelike and orthogonal to each other.  If $p\in A\subset\mathcal{S}_t$,  then $p=(t,r,q)$ with $q=(0,0,\pm 1)\in S^2\subset \mathbb{R}^3$,  and we may replace $\partial_\theta,\partial_\phi$ by any basis of $T_qS^2$. 
\end{proof}

\hspace{1cm}

By Prop.  \ref{Q<0 condition},  timelike geodesics with negative Carter constant do not meet $Eq=\{\theta=\pi/2\}$,  hence $\cos^2\theta\neq 0$.  Then we may define $u:=\cos^2\theta\in (0,1]$.  Since $E^2+q> 0$ by Prop.  \ref{Q<0 condition},  we can re-write the $\theta$-equation in \eqref{geodes diff equations} as

\begin{align}\label{theta di u}
\bigg( \frac{\rho^2(r,u)}{\sqrt{E^2+q}}\bigg)^2\frac{(u')^2}{4u}=-a^2u^2+(a^2-\hat{\Phi}^2-\hat{\mathcal{Q}})u+\hat{\mathcal{Q}}=:\hat{\Theta}(u),
\end{align}
where $\hat{\Phi}:=L/\sqrt{E^2+q}$ and $\hat{\mathcal{Q}}:=Q/(E^2+q)$.  Since we must have $\hat{\Theta}(u)\geq 0$ somewhere in $(0,1]$,  $w:=a^2-\hat{\Phi}^2-\hat{\mathcal{Q}}>0$ because $\hat{\mathcal{Q}}<0$ and the coefficient of the second order term is negative.  Therefore $\hat{\Theta}$ must have roots given by

\begin{align}\label{formula for u pm}
u_{\pm}=\frac{w\pm\sqrt{\textrm{dis}}}{2a^2}
\end{align}
where $\textrm{dis}:=w^2+4a^2\hat{\mathcal{Q}}$,  so that 

\begin{align}
\hat{\Theta}(u)=-a^2(u-u_{+})(u-u_{-}).
\end{align}
Hence we have

\[
0<u_{-}\leq u_{+}.
\]



Note that we must have $u_{-}\leq 1$,  otherwise the downward parabola function $\hat{\Theta}(u)$ cannot be non-negative somewhere in $(0,1]$.  Since $\hat{\Theta}(u)$ is quadratic and $\hat{\Theta}(1)=-\hat{\Phi}^2\leq 0$,  we must either have $u_{+}\leq 1$ or $u_{-}=1$.  However in the latter case  $\theta\circ\gamma(s)= 0, \pi$ for all $s$,  so the geodesic lies in $A$ and it cannot be closed by Prop.  \ref{no closed geodesics in axis}.  By hypothesis,  we can hence assume

\begin{align*}
0<u_{-}\leq u_{+}\leq 1.
\end{align*}

\begin{prop}\label{u-=u+ Q<0 prop}
In the Kerr-star spacetime,  consider a timelike geodesic $\gamma$ with $\hat{\mathcal{Q}}<0$ and $r=\textrm{const}$.  If $\textrm{dis}=0$,  then $\theta=\textrm{const}$ and the geodesic cannot be closed.
\end{prop}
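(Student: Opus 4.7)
The plan is to carry out the proof in three steps, corresponding to (a) showing $\theta$ is constant along $\gamma$, (b) reducing closure of $\gamma$ in $K^*$ to the vanishing of the $t$-velocity $t'$, and (c) ruling out $t'=0$.

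First, under the hypothesis $\textrm{dis}=0$, the quadratic $\hat{\Theta}$ has a double root at $u_0=w/(2a^2)$, so $\hat{\Theta}(u)=-a^2(u-u_0)^2\leq 0$, with equality only at $u=u_0$. The $\theta$-equation in $u$-form \eqref{theta di u} reads $\bigl(\rho^2/\sqrt{E^2+q}\bigr)^2(u')^2/(4u)=\hat{\Theta}(u)$. Since the left-hand side is non-negative and $4u(E^2+q)>0$ (by Prop \ref{Q<0 condition} combined with $u>0$ as $\gamma\cap Eq=\emptyset$), equality with a non-positive right-hand side forces $u'\equiv 0$ along $\gamma$. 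Hence $u(s)\equiv u_0$ and $\theta\equiv \theta_0$ with $\cos^2\theta_0=u_0\in(0,1)$: the endpoint $u_0=1$ was excluded before the proposition (it would place $\gamma$ in the axis, which cannot contain a closed timelike geodesic by Prop \ref{no closed geodesics in axis}), and $\theta_0\neq\pi/2$ by Prop \ref{Q<0 condition}.

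Second, with $r=r_0$ and $\theta=\theta_0$ both constant, the first two equations in \eqref{geodes diff equations} give $t'$ and $\phi'$ constant along $\gamma$, hence $\gamma(s)=(t_0+st',\,r_0,\,\theta_0,\,\phi_0+s\phi')$ in Boyer--Lindquist coordinates. Since $\mathcal{T}(r)$ is constant at $r=r_0$, the Kerr-star coordinate $t^*$ is an affine function of $s$ with slope $t'$. Closure of $\gamma$ in $K^*$ (where $t^*\in\mathbb{R}$) therefore forces $t'=0$.

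The main obstacle is step (c): excluding $t'=0$. My preferred approach is to observe that $t'=0$ would make $\gamma'$ a nonzero multiple of the Killing vector $\partial_\phi$, so $\gamma$ would be a reparameterization of an integral curve of $\partial_\phi$. The Killing identity $\nabla_X X=-\tfrac12\nabla|X|^2$ for $X=\partial_\phi$ then forces $\nabla g_{\phi\phi}$ to lie in $\textrm{span}(\partial_\phi)$ at $(r_0,\theta_0)$; since $\partial_t g_{\phi\phi}=\partial_\phi g_{\phi\phi}=0$ in BL coordinates, this reduces to the two conditions $\partial_r g_{\phi\phi}(r_0,\theta_0)=0$ and $\partial_\theta g_{\phi\phi}(r_0,\theta_0)=0$, i.e.\ $(r_0,\theta_0)$ must be a critical point of $g_{\phi\phi}$. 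One then has to combine these with the double-root identity $L^2=a^2\sin^4\theta_0(E^2+q)$, the spherical conditions $R(r_0)=R'(r_0)=0$, the bound $-M<r_0<0$ from Prop \ref{spherical timelike geodesics have r>-M}, and $\theta_0\in(0,\pi/2)\cup(\pi/2,\pi)$ to reach a contradiction. A more concrete alternative is to compute $t'$ directly from the first-order equations as
\[
\rho^2\Delta\,t'\;=\;E\rho^2(r_0^2+a^2)\;-\;2Mr_0\,a\,\mathbb{D}(\theta_0),
\]
substitute $\mathbb{D}(\theta_0)=a\sin^2\theta_0\bigl(\pm\sqrt{E^2+q}-E\bigr)$ from the double-root relation, and verify that no parameter choice consistent with $R(r_0)=R'(r_0)=0$ in the admissible range makes the right-hand side vanish. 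The hard part of either route is this final algebraic check, since the interplay between the critical-point (or $t'=0$) equations and the spherical conditions is intricate.
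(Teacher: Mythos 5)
Your steps (a) and (b) are correct and coincide with the paper's own argument: $\textrm{dis}=0$ forces $\hat{\Theta}(u)=-a^2(u-u_0)^2$, hence $u\equiv u_0$ and $\theta\equiv\theta_0$; the case $u_0=1$ is absorbed into the axis case via Prop.~\ref{no closed geodesics in axis}; and constancy of $r,\theta$ makes $t$ and $\phi$ affine in $s$, so closure in $K^*$ forces $t'=0$, i.e.\ $\gamma'$ is a nonzero multiple of $\partial_\phi$. The problem is step (c). You correctly observe that a geodesic integral curve of the Killing field $\partial_\phi$ must satisfy $\nabla_{\partial_\phi}\partial_\phi=-\tfrac12\,\mathrm{grad}\,g_{\phi\phi}=0$ (note this forces the full gradient to vanish, not merely to lie in $\mathrm{span}(\partial_\phi)$, though your reduction to $\partial_r g_{\phi\phi}=\partial_\theta g_{\phi\phi}=0$ is the right conclusion), but you then declare the resulting contradiction an ``intricate algebraic check'' requiring $R(r_0)=R'(r_0)=0$ and the double-root identity, and you do not carry it out. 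As written, the proof is incomplete precisely at the step you yourself flag as the main obstacle.

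The missing observation is that no such interplay is needed: a single component of the geodesic equation suffices, and it is killed by signs alone. The paper writes
\begin{align*}
\Gamma^{\theta}_{\phi\phi}=-\frac{\sin\theta\cos\theta}{\rho^6(r,\theta)}\bigg[\rho^4(r,\theta)\,\frac{\mathbf{g}(\partial_\phi,\partial_\phi)}{\sin^2\theta}+2M(r^2+a^2)a^2r\sin^2\theta\bigg]
\end{align*}
and notes that $\sin\theta_0\cos\theta_0\neq0$ (since $\theta_0\neq0,\pi$ by the axis case and $\theta_0\neq\pi/2$ by Prop.~\ref{Q<0 condition}), that $\mathbf{g}(\partial_\phi,\partial_\phi)<0$ because $\gamma'=b_0\partial_\phi$ is \emph{timelike}, and that the second bracket term is negative because $r<0$. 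Both terms in the bracket therefore have the same sign, so $\Gamma^{\theta}_{\phi\phi}\neq0$ along $\gamma$, and the geodesic equation $\Gamma^{\theta}_{\phi\phi}\,b_0^2=0$ forces $b_0=0$, degenerating $\gamma$ to a point. The timelike character of $\gamma'$ --- which your proposal never exploits in step (c) --- is exactly what closes the argument; neither the spherical conditions $R(r_0)=R'(r_0)=0$ nor the explicit classes $(\Phi_\pm,\mathcal{Q}_\pm)$ enters. If you add this sign computation to your step (c), your proof becomes complete and essentially identical to the paper's.
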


\begin{proof}
Since $\textrm{dis}=0$,  we have $u_{-}=u_{+}$.  Then

\[
\hat{\Theta}(u)=-a^2\big(u-u_{+}\big)^2\geq 0.
\]
Hence the last inequality is satisfied only if $u=u_{+}=\textrm{const}$,  therefore $\theta=\textrm{const}$.  Then the geodesic $\gamma$ cannot be closed.  Indeed,  there are two possibilities.  First,  if $\gamma$ is entirely contained in $A$,  then it cannot be closed by Prop.  \ref{no closed geodesics in axis}.  Second,  if $\gamma$ is not entirely contained in $A$,  by Prop.  \ref{differential equations of geodessics}  a geodesic of the form $s\mapsto (t(s),r_0,\theta_0,\phi(s))$ has $t'\equiv\textrm{const}$ and $\phi'\equiv\textrm{const}$.  It follows that $s\mapsto t(s)$ and $s\mapsto \phi(s)$ are affine functions.  If the geodesic is bounded in $K^*$,  then $t(s)$ must be constant.  For $\gamma(s)=(t_0,r_0,\theta_0,b_0s+b_1)$,  $b_0,b_1\in\mathbb{R}$,   the geodesic equation can be written in BL coordinates as $\Gamma^\alpha_{\phi\phi}(\gamma(s))b_0^2=0$.  We claim that the Christoffel symbol $\Gamma^{\theta}_{\phi\phi}$ cannot vanish on $\gamma$.  Indeed,

\begin{align*}
\Gamma^{\theta}_{\phi\phi}&=-\frac{\sin\theta \cos\theta}{\rho^6(r,\theta)}\bigg[ \rho^4(r,\theta) \frac{\mathbf{g}(\partial_\phi,\partial_\phi)}{\sin^2\theta} +2M(r^2+a^2)a^2r\sin^2\theta \bigg]\neq 0
\end{align*}
since $\theta\neq 0,\pi$ because we have already ruled out closed timelike geodesics in $A$,  $\theta\neq \pi/2$ by Prop.  \ref{Q<0 condition},  $\dot{\gamma}=\partial_\phi$ is timelike,  and $r<0$.  Hence,  $b_0=0$ and the geodesic degenerates to a point.
\end{proof}

\begin{oss}
Closed timelike curves exist in the Kerr-star spacetime: for instance,  they are given by the integral curves of the vector field $\partial_\phi$,  whenever this last happens to be timelike for some negative $r$.  Such curves cannot be geodesics by Prop.  \ref{u-=u+ Q<0 prop}.
\end{oss}

We may now assume $\textrm{dis}>0$.  Therefore we have the following chain of inequalities

\begin{align}
0<u_{-}<u_{+}\leq 1.
\end{align}

We hence define 

\begin{align}\label{values of the thetai}
\theta_1:=\arccos(\sqrt{u_{+}}),\theta_2:=\arccos(\sqrt{u_{-}}),\theta_3:=\arccos(-\sqrt{u_{-}}),\theta_4:=\arccos(-\sqrt{u_{+}})
\end{align}
so that
\begin{align}
0\leq \theta_1< \theta_2<\frac{\pi}{2}<\theta_3< \theta_4\leq \pi.
\end{align}

\begin{prop}\label{prop theta behaviours Q<0}
In the Kerr-star spacetime,  timelike geodesics with $\hat{\mathcal{Q}}<0$,  $r=\textrm{const}$  and $\theta\neq\textrm{const}$ can have one of the following $\theta$-behaviours:

\begin{itemize}
\item if $0<u_{-}<u_{+}<1$,  then the $\theta$-coordinate oscillates periodically in one of the following intervals $0<\theta_1\leq \theta\leq \theta_2<\pi/2$ or $\pi/2<\theta_3\leq \theta\leq \theta_4<\pi$;
\item if $0<u_{-}<u_{+}=1$,  then $\hat{\Phi}(=L/\sqrt{E^2+q})= 0$ and the $\theta$-coordinate oscillates periodically in one of the following intervals $0=\theta_1\leq \theta\leq \theta_2<\pi/2$ or $\pi/2<\theta_3\leq \theta\leq \theta_4=\pi$,
\end{itemize}
where the $\theta_i$,  $i=1,2,3,4$,  are given by \eqref{values of the thetai}.
\end{prop}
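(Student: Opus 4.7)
My approach is to read the $\theta$-dynamics off the quadratic $\hat{\Theta}(u)$ and translate back to $\theta$ via the algebraic identity
\[
(1-u)\,\Theta(\theta) = (E^2+q)\,\hat{\Theta}(u),\qquad u=\cos^2\theta,
\]
which one obtains by combining $u'=-\sin(2\theta)\,\theta'$ with $\rho^4(\theta')^2=\Theta(\theta)$ from Prop.\ \ref{differential equations of geodessics}. Since $\textrm{dis}>0$ yields two simple roots $0<u_-<u_+\le 1$, non-negativity of $\hat{\Theta}$ forces $u\circ\gamma\in[u_-,u_+]$; Proposition \ref{Q<0 condition} excludes $u=0$, so $\cos\theta$ keeps a constant sign along $\gamma$. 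Combining these inclusions already pins $\theta\circ\gamma\subset[\theta_1,\theta_2]$ or $[\theta_3,\theta_4]$, and what remains is to show the motion is genuinely periodic and that the endpoints are attained.

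For the case $0<u_-<u_+<1$, I would rewrite the identity above (valid away from the axis and the equator) as
\[
\Theta(\theta)=-\frac{a^2(E^2+q)}{\sin^2\theta}(\cos^2\theta-u_-)(\cos^2\theta-u_+).
\]
Each $\theta_i$ is then a simple zero of $\Theta$ since $d(\cos^2\theta)/d\theta=-\sin(2\theta)\neq 0$ at these values. Proposition \ref{initial conditions and zeroes} identifies them as $\theta$-turning points where $\theta'$ flips sign, so $\theta\circ\gamma$ oscillates between the two endpoints of its interval. The period is finite because simple zeros give an integrable $1/\sqrt{\cdot}$ singularity in $\oint\rho^2\,d\theta/\sqrt{\Theta(\theta)}$, and $\rho^2$ stays bounded away from $0$ since $r$ is a fixed negative constant and $\cos\theta$ is bounded away from $0$.

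For the case $u_+=1$, direct evaluation of the quadratic at $u=1$ gives $\hat{\Theta}(1)=-\hat{\Phi}^2$, so $u_+=1$ forces $\hat{\Phi}=0$, i.e.\ $L=0$. With $L=0$ the $L^2/\sin^2\theta$ singularity in $\Theta$ disappears and $\Theta(\theta)=a^2(E^2+q)(\cos^2\theta-u_-)$ becomes analytic in $\theta$. The value $\theta_2$ (resp.\ $\theta_3$) remains a simple zero, hence a $\theta$-turning point, but $\Theta(0)=\Theta(\pi)=a^2(E^2+q)(1-u_-)>0$ because $u_-<1$. So the geodesic crosses the axis \emph{transversally} and, in BL coordinates, $\theta$ reflects off $0$ (with $\phi$ jumping by $\pi$). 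Thus $\theta\circ\gamma$ oscillates periodically in $[0,\theta_2]$ or $[\theta_3,\pi]$, the period integral being again finite.

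I expect the subtle step to be precisely the interpretation of $u_+=1$: this zero of $\hat{\Theta}$ is a coordinate artifact of the $2$-to-$1$ map $\theta\mapsto\cos^2\theta$ at the poles, \emph{not} a genuine turning point of $\theta$. The identity $(1-u)\Theta=(E^2+q)\hat{\Theta}$ is exactly what absorbs the spurious root at $u=1$, revealing that $\Theta$ itself only vanishes at $\theta_2$ (or $\theta_3$), so the orbit must be read as passing smoothly through the axis rather than bouncing off it. Keeping this distinction straight between genuine $\theta$-turning points and axis crossings is the one place where a careful argument is required.
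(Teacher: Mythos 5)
Your proof is correct and follows essentially the same route as the paper, which at this point simply defers to Prop.~5.9 of \cite{sanzeni2024non}: the analysis of the roots $u_\pm$ of the downward parabola $\hat{\Theta}(u)$, the identification of simple zeros of $\Theta(\theta)$ as turning points via Prop.~\ref{initial conditions and zeroes}, and the observation that $\hat{\Theta}(1)=-\hat{\Phi}^2$ forces $L=0$ when $u_+=1$ are exactly the ingredients the paper relies on in the surrounding discussion. Your explicit identity $(1-u)\Theta(\theta)=(E^2+q)\hat{\Theta}(u)$ and the careful reading of $u_+=1$ as a transversal axis crossing rather than a genuine turning point of the motion on $S^2$ correctly handle the one delicate step, so the argument is a complete, self-contained version of the cited proof.
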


\begin{proof}
See the proof of Prop.  $5.9$ of \cite{sanzeni2024non}. 
\end{proof}

Consider the first order equations of motion (with the rescaled constants of motion\\
 $\hat{\mathcal{Q}}:=Q/(E^2+q),  \hat{\Phi}:=L/\sqrt{E^2+q},  \Phi:=L/E$),  for a constant $r<0$:

\begin{align}
\label{equation of theta plot}\frac{\rho^2(r,\theta)}{\sqrt{E^2+q}}\frac{d\theta}{ds}&=\pm\sqrt{\Theta(\theta)}=\pm\sqrt{\hat{\mathcal{Q}}+a^2\cos^2\theta-\hat{\Phi}^2\frac{\cos^2\theta}{\sin^2\theta}}\\
\label{equation of t plot}\frac{\rho^2(r,\theta)}{E}\frac{dt}{ds}&=\frac{r^2+a^2}{\Delta(r)}(r^2+a^2-a\Phi)+a(\Phi-a\sin^2\theta),
\end{align}
where now the function $\Theta(\theta)$ is meant as the ratio of the $\Theta$-function appearing in  Prop.  \ref{differential equations of geodessics} and $(E^2+q)$.  Because of the $\theta$-differential equation,  we can restrict to an interval $\mathcal{U}\subset\theta^{-1}\big( (\theta_1,\theta_2)\big)$ on which $d\theta/ds$ is either everywhere positive or everywhere negative (depending on the initial condition).  Due to the symmetry in \eqref{equation of theta plot} and the fact that $r=\textrm{const}$,  $\theta(s)$ is periodic over twice the interval $\mathcal{U}$.  For instance,  set $\mathcal{U}=(0,T/2)$,   starting from $\theta(0)=\theta_1$,  hence $\theta'(s)=+\sqrt{\Theta(\theta)} >0$ for $s\in(0,T/2)$,  then $\theta'(s)=-\sqrt{\Theta(\theta)} <0$ for $s\in(T/2,T)$,  where $\theta'(T/2)=0$,  because $\theta(T/2)=\theta_2$ ($'\equiv d/ds$) and Prop.  \ref{initial conditions and zeroes},  which explains the change of sign of $\theta'(s)$ (using the fact that $\theta_1,\theta_2$ are multiplicity one zeroes of $\Theta(\theta)$).  Hence every $\Delta s=T/2$,  $\theta'(s)$ changes sign.  See Figures $11,12,13,14$ of \cite{sanzeni2024non} for analogous $\theta$-motions.

At parameters where $\Theta(\theta)\neq 0$ we can combine \eqref{equation of t plot} and \eqref{equation of theta plot} to get

\begin{align}\label{t theta eq}
\frac{\sqrt{E^2+q}}{E}\frac{dt}{d\theta}=\frac{r^2\Delta(r)+2Mr(r^2+a^2-a\Phi)}{\pm\Delta(r)\sqrt{\Theta(\theta)}}+a^2\frac{\cos^2\theta}{\pm\sqrt{\Theta(\theta)}}=B(r,a,\Phi)\frac{1}{\pm\sqrt{\Theta(\theta)}}+a^2\frac{\cos^2\theta}{\pm\sqrt{\Theta(\theta)}},
\end{align}
with $B(r,a,\Phi):=\frac{r^2\Delta(r)+2Mr(r^2+a^2-a\Phi)}{\Delta(r)}.$\\

\begin{lem}
Consider a spherical timelike geodesic $\gamma$ at radius $r$ with negative Carter constant $Q$ and angular momentum $\Phi$.  If $B(r,a,\Phi)\geq 0$,  then $\gamma$ cannot be closed.
\end{lem}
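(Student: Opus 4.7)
The plan is to show that the $t$-coordinate accumulates strictly monotonically over each $\theta$-oscillation of $\gamma$, which prevents $\gamma$ from closing. By Prop.\ \ref{Q<0 condition} we have $E\neq 0$ and $\gamma\cap Eq=\emptyset$, so $\cos^2\theta>0$ along $\gamma$. By Prop.\ \ref{prop theta behaviours Q<0}, $\theta\circ\gamma$ oscillates periodically (with period $T$ in the affine parameter $s$) between two turning points $\theta_1<\theta_2$ lying strictly on the same side of $\pi/2$ (allowing $\theta_1=0$ or $\theta_2=\pi$ in the degenerate subcase), and $\theta_1,\theta_2$ are simple zeros of $\Theta$.

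The next step is to integrate the $t$-equation over one full $\theta$-oscillation $[0,T]$. Splitting into the increasing half $[0,T/2]$ (where $\theta'=+\sqrt{E^2+q}\,\sqrt{\Theta}/\rho^2$ from \eqref{equation of theta plot}) and the decreasing half $[T/2,T]$ (where the opposite sign is used), changing variables to $\theta$ in each half, and using that the reversal of integration limits on the decreasing half exactly compensates the sign flip of $\sqrt{\Theta}$, the two contributions add rather than cancel. Applying formula \eqref{t theta eq} we obtain
\begin{equation*}
\frac{\sqrt{E^2+q}}{E}\,\Delta t \;=\; 2\int_{\theta_1}^{\theta_2}\frac{B(r,a,\Phi)+a^2\cos^2\theta}{\sqrt{\Theta(\theta)}}\,d\theta.
\end{equation*}

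Under the hypothesis $B(r,a,\Phi)\geq 0$, the numerator is strictly positive on the open interval $(\theta_1,\theta_2)$ because $a^2\cos^2\theta>0$ there, while $\sqrt{\Theta}>0$ in the interior and vanishes like $\sqrt{\theta-\theta_i}$ at the endpoints (simple zeros), making the improper integral convergent with a strictly positive value. Therefore $\Delta t\neq 0$ and it has the sign of $E$.

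Finally, if $\gamma\colon[a,b]\to K^*$ were a closed timelike geodesic, then in particular $\theta'(a)=\theta'(b)$, so the length $b-a$ must be an integer multiple $n\geq 1$ of the $\theta$-period $T$, and the total variation of $t$ along $\gamma$ would equal $n\,\Delta t\neq 0$, contradicting $t(a)=t(b)$. The only technical step is the bookkeeping of signs showing the two half-oscillations contribute additively (rather than cancelling), which is the natural obstacle; once this identity is secured, positivity and closure of the geodesic are incompatible.
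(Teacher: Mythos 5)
Your proof is correct, but it takes a more roundabout route than the paper. The paper's own proof is a two-line observation: rewriting \eqref{equation of t plot} as $\rho^2(r,\theta)\,t'/E=B(r,a,\Phi)+a^2\cos^2\theta$ and noting that the right-hand side is strictly positive (since $\cos^2\theta>0$ along $\gamma$ by Prop.~\ref{Q<0 condition} and $B\geq 0$ by hypothesis), it concludes that $t\circ\gamma$ is strictly monotone in the affine parameter and hence non-periodic. You instead deploy the $\Delta t$-per-$\theta$-oscillation machinery that the paper reserves for the genuinely hard case $B<0$: integrating $dt/d\theta$ over a full oscillation, checking convergence of the improper integral at the simple turning points, and arguing that the closure period must be an integer multiple of the $\theta$-period. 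This is valid, and your sign bookkeeping for the two half-oscillations matches the paper's later Remark on the factor of $2$, but it imports more structure than needed. The one substantive caveat is that your argument presupposes $\theta\circ\gamma$ is non-constant (you invoke Prop.~\ref{prop theta behaviours Q<0}, which assumes $\theta\neq\textrm{const}$), so as written it does not cover constant-$\theta$ spherical geodesics; in the paper's flow those have already been excluded (Prop.~\ref{u-=u+ Q<0 prop} and the standing assumption $\textrm{dis}>0$), and in any case the pointwise monotonicity of $t$ disposes of them immediately, but if you keep your route you should say this explicitly. The payoff of the paper's version is precisely that it needs none of the oscillation analysis; the payoff of yours is that it sets up verbatim the computation used afterwards for $B<0$.
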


\begin{proof}
The equation \eqref{equation of t plot} can be written as 
\[
\frac{\rho^2(r,\theta)}{E}\frac{dt}{ds}= B(r,a,\Phi)+a^2\cos^2\theta.
\]
Then the $t$-coordinate is monotonic,  hence it is non-periodic. 
\end{proof}

\begin{prop}\label{prop spherical geodesics}
In a Kerr spacetime,  a timelike geodesic with negative Carter constant $\mathcal{Q}$ and constant radial coordinate has one of the two following pairs $\big( \Phi_{\pm}, \mathcal{Q}_{\pm}\big)$ of constants of motion given by

\begin{align}
\Phi_{\pm}&=\Phi_{\pm}(r,q)=\frac{E^2M(a^2-r^2)\pm \sqrt{f(r,q)}}{a E^2(M-r)},  \nonumber \\ 
\textrm{} \label{class r=const}\\
\nonumber \mathcal{Q}_{\pm}&=\mathcal{Q}_{\pm}(r,q)= \frac{r^2}{a^2E^2(M-r)^2} \bigg\{ a^2M \big[ -Mq+(2E^2+q)r\big] + r \big[ 4M^3q-M^2(5E^2+8q)r \\ 
\nonumber & \hspace{4.9cm}+M(4E^2+5q)r^2-(E^2+q)r^3 \big] \pm 2M \sqrt{f(r,q)} \bigg\}, 
\end{align}
where $f(r,q):=E^2r\big[-Mq+(E^2+q)r\big] \Delta(r)^2$ with  $-Mq+(E^2+q)r\leq 0$.

\end{prop}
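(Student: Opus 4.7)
The plan is to convert the condition that $r$ is constant along $\gamma$ into a pair of polynomial equations in the constants of motion and then solve them explicitly.

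First, since $r\circ\gamma$ is constant, Prop.\ \ref{initial conditions and zeroes} forces $r$ to be at least a double zero of the radial polynomial, giving the two scalar equations $R(r)=0$ and $R'(r)=0$. The key structural remark is that, after substituting $K=Q+(L-aE)^2$ and $\mathfrak{X}=a^2(E^2+q)-L^2-Q$ into the formula for $R$, the polynomial is \emph{affine in $Q$} and \emph{quadratic in $L$}: one finds
\[
R(r)=\alpha(r,L)-Q\,\Delta(r),
\]
with
\[
\alpha(r,L):=r(2M-r)L^2-4MraE\,L+r^2(E^2+q)\Delta(r)+2ME^2 r(r^2+a^2).
\]
Since $r<0$ by Prop.\ \ref{geod Q<0}, we have $\Delta(r)>0$, and hence $R=0$ can be solved for $Q$ as $Q=\alpha(r,L)/\Delta(r)$. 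Eliminating $Q$ between $R=0$ and $R'=0$ produces the single relation $\Delta(r)\,\alpha_r(r,L)=\Delta'(r)\,\alpha(r,L)$, which after expanding the derivatives I expect to collapse into the quadratic in $L$
\[
a^2(M-r)L^2-2MaE(a^2-r^2)L+r(E^2+q)\Delta(r)^2+ME^2\bigl[(r^2+a^2)^2-4Mr^3\bigr]=0.
\]

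The second step is to solve this quadratic. Applying the usual formula for the roots $L_\pm$ and dividing by $E$ gives two candidates $\Phi_\pm=L_\pm/E$. What remains is to verify that the discriminant $B^2-4AC$ simplifies to $4a^2 r\bigl[-Mq+(E^2+q)r\bigr]\Delta(r)^2$, so that after the rearrangement $\sqrt{f(r,q)}=E\sqrt{(B^2-4AC)/(4a^2)}$ one recovers exactly the form of $\Phi_\pm$ given in the statement. This simplification will rely on the polynomial identity
\[
M(a^2-r^2)^2 = (M-r)\bigl[(r^2+a^2)^2-4Mr^3\bigr]+r\,\Delta(r)^2,
\]
which can be verified by direct expansion: both sides are degree-$4$ polynomials in $r$ whose coefficients match term by term. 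Applying this identity annihilates the pure-$E^4$ contributions to $B^2-4AC$ and leaves exactly the stated form for $f(r,q)$.

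The third step is a back-substitution. Once $\Phi_\pm$ is in hand, set $L_\pm=E\Phi_\pm$ and insert into $Q=\alpha(r,L)/\Delta(r)$ to obtain $Q_\pm$; dividing by $E^2$ yields $\mathcal{Q}_\pm$. The prefactor $r^2/[a^2E^2(M-r)^2]$ in the claimed formula arises from the $(M-r)^2$ appearing in the denominator of $\Phi_\pm^2$, while the polynomial in $r$ multiplying the non-square-root part is produced by clearing $\Delta(r)$ from the resulting rational expression and applying the same identity above. Finally, reality of $\Phi_\pm$ and $\mathcal{Q}_\pm$ requires $f(r,q)\geq 0$; since $r<0$, $\Delta(r)^2\geq 0$ and $E^2>0$, this forces $-Mq+(E^2+q)r\leq 0$, which is the last assertion of the proposition.

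The main obstacle I anticipate is purely algebraic bookkeeping: the discriminant $B^2-4AC$ carries many terms in $E^2$, $E^4$, $q$ and $(E^2+q)$ that only cancel after applying the polynomial identity above, and the reduction of the substituted $\mathcal{Q}_\pm$ to the explicit polynomial form in the statement is lengthy but mechanical. Conceptually, once the bilinear structure of $R$ in $(L,Q)$ is recognized, the result reduces to elementary elimination and the quadratic formula.
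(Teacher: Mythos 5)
Your proposal is correct and follows essentially the same route as the paper: impose the double-root conditions $\mathcal{R}(r)=\mathcal{R}'(r)=0$ and solve the resulting system (linear in $\mathcal{Q}$, quadratic in $\Phi$) for the two constants of motion, with the reality of the discriminant yielding $-Mq+(E^2+q)r\leq 0$. The paper states this elimination only in outline (referring to Prop.~$5.10$ of \cite{sanzeni2024non} and Chandrasekhar), whereas you supply the explicit algebra, including the key identity $M(a^2-r^2)^2=(M-r)\big[(r^2+a^2)^2-4Mr^3\big]+r\,\Delta(r)^2$ that collapses the discriminant to $f(r,q)$; both check out.
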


\begin{proof}
Since $E\neq 0$ by Prop.  \ref{Q<0 condition},  we can divide the $r$-equation by $E^2$ to get

\begin{align*}
\bigg( \frac{\rho^2}{E}\bigg)^2(r')^2=& \bigg( 1+\frac{q}{E^2}\bigg)r^4-2M\frac{q}{E^2}r^3+\bigg[a^2\bigg( 1+\frac{q}{E^2}\bigg)-\Phi^2-\mathcal{Q}\bigg]r^2\\
& +2M\bigg[ \mathcal{Q}+(\Phi-a)^2\bigg]r-a^2\mathcal{Q}=:\mathcal{R}(r),
\end{align*}
where $\Phi:=L/E$ and $\mathcal{Q}:=Q/E^2$.  A geodesic has constant radial behaviour if and only if $\mathcal{R}(r)=0$ and $d\mathcal{R}(r)/dr=0$.  These two equations,  quadratic in $\Phi$,  can be solved for $\mathcal{Q}$ and $\Phi$ to get the two pairs \eqref{class r=const},  as analogously done in Prop.  $5.10$ of \cite{sanzeni2024non}.  (See also eq.  $(285)$ at p.  $363$ of \cite{Chandrasekhar}.) 

The condition $-Mq+(E^2+q)r\leq 0$ is required for the reality of \eqref{class r=const}.
\end{proof}

\begin{lem}\label{lemm B<0 for class +}
For the class $(\Phi_{+},  \mathcal{Q}_{+})$ of spherical timelike geodesics with $Q<0$ at radius $-M<r<0$,  the corresponding function $B\big(r,a,\Phi_{+}(r,q)\big)<0$ if and only if $q< \frac{E^2}{4M^2} (r^2+3Mr)$.
\end{lem}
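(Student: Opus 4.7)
The plan is to compute $B\bigl(r,a,\Phi_{+}(r,q)\bigr)$ explicitly by plugging the formula for $\Phi_{+}$ from Prop.\ \ref{prop spherical geodesics} into $B(r,a,\Phi)=\bigl[r^{2}\Delta(r)+2Mr(r^{2}+a^{2}-a\Phi)\bigr]/\Delta(r)$, and then to reduce the inequality $B<0$ to a polynomial inequality in $q$.

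\medskip

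The first step is to simplify the combination $r^{2}+a^{2}-a\Phi_{+}$. Writing $\Phi_{+}=[E^{2}M(a^{2}-r^{2})+\sqrt{f(r,q)}]/[aE^{2}(M-r)]$ and putting everything over a common denominator, the non-radical part of the numerator is
\[
E^{2}(M-r)(r^{2}+a^{2})-E^{2}M(a^{2}-r^{2})=-E^{2}r\,\Delta(r),
\]
by the identity $r^{2}-2Mr+a^{2}=\Delta(r)$. Thus a factor of $\Delta(r)$ appears and cancels the $\Delta(r)$ in the denominator of $B$. Using $\sqrt{f(r,q)}=|E|\,\Delta(r)\sqrt{g(r,q)}$ with $g(r,q):=r\bigl[-Mq+(E^{2}+q)r\bigr]\ge 0$ (which is non-negative by the reality condition in Prop.\ \ref{prop spherical geodesics} together with $r<0$), one arrives after a short rearrangement at
\[
B\bigl(r,a,\Phi_{+}(r,q)\bigr)=\frac{-r}{E^{2}(M-r)}\Bigl[E^{2}r(M+r)+2M|E|\sqrt{g(r,q)}\Bigr].
\]

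\medskip

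The second step is a sign analysis. For $-M<r<0$ one has $-r>0$, $M-r>0$ and $M+r>0$, so the prefactor $-r/[E^{2}(M-r)]$ is strictly positive and $E^{2}r(M+r)$ is strictly negative. Hence $B<0$ if and only if
\[
2M|E|\sqrt{g(r,q)}<-E^{2}r(M+r)=E^{2}|r|(M+r),
\]
where both sides are non-negative, so squaring is equivalent. This gives the polynomial inequality $4M^{2}g(r,q)<E^{2}r^{2}(M+r)^{2}$.

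\medskip

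The third step is to expand and factor. Collecting the $E^{2}$-terms and the $q$-terms separately, a direct computation yields
\[
4M^{2}g(r,q)-E^{2}r^{2}(M+r)^{2}=r(M-r)\bigl[E^{2}r(r+3M)-4M^{2}q\bigr].
\]
Since $r(M-r)<0$ for $-M<r<0$, the inequality $B<0$ is equivalent to $E^{2}r(r+3M)-4M^{2}q>0$, i.e.\ to $q<\tfrac{E^{2}}{4M^{2}}(r^{2}+3Mr)$, which is the claim.

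\medskip

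The only real obstacle is bookkeeping: tracking the sign of each factor when $r<0$ and $M+r>0$, and making sure the squaring step really is an equivalence (which it is, because both sides are non-negative on this range). The algebraic identities themselves are routine once one notices the cancellation that produces the factor $\Delta(r)$ in the first step.
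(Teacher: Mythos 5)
Your proof is correct and follows essentially the same route as the paper: substitute $\Phi_{+}$ into $B$, exploit the cancellation of $\Delta(r)$, isolate the radical term, square (legitimately, since both sides are non-negative on $-M<r<0$), and factor the resulting polynomial as $r(M-r)\bigl[E^{2}r(r+3M)-4M^{2}q\bigr]$. The paper states the same chain of equivalences more tersely — reducing $B<0$ to $-2Mr\sqrt{f(r,q)}/E^{2}<r^{2}(M+r)\Delta(r)$ and then to $E^{2}r(r-M)\bigl[-4M^{2}q+E^{2}(r^{2}+3Mr)\bigr]>0$ — and your write-up just makes the intermediate algebra and sign bookkeeping explicit.
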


\begin{proof}
The inequality $B\big(r,a,\Phi_{+}(r,q)\big)<0$ is equivalent to

\begin{align*}
 \frac{-2Mr \sqrt{f(r,q)}}{E^2}< r^2 (M+r)\Delta(r).
\end{align*}
Since $\Delta(r)>0$ in $r<0$,  $-M<r<0$,  the latter is equivalent to

\begin{align*}
E^2 r (r-M) [-4 M^2 q + E^2 (r^2 + 3 M r) ]  > 0,
\end{align*}

or equivalently

\begin{align*}
q< \frac{E^2}{4M^2} (r^2+3Mr)<0.
\end{align*}
\end{proof}

\begin{oss}
In the limit $q\to 0$,  the class $\big( \Phi_{-}, \mathcal{Q}_{-}\big)$ of spherical timelike geodesics reduces to the admissible class of spherical null geodesics $(21)$ of \cite{sanzeni2024non} while the class $(\Phi_{+},\mathcal{Q}_{+})$ reduces to the impossible class $(23)$ of \cite{sanzeni2024non}.  As seen at the beginning of \ref{case Q<0},  a necessary condition for the existence of the geodesic $\theta$-behaviour is $w=a^2-\hat{\Phi}^2-\hat{\mathcal{Q}}>0$.  One can sees that for the class $(\Phi_{+},\mathcal{Q}_{+})$,  $w=a^2-\Phi_{+}^2\frac{E^2}{E^2+q}-\mathcal{Q}_{+}\frac{E^2}{E^2+q}$ indeed can be positive for some negative $q<\frac{E^2}{4M^2} (r^2+3Mr)$ and some $-M<r<0$ and it is negative when $q=0$.
\end{oss}

\vspace{0.5cm}






We rule out closed spherical geodesics in the subcase  $E\neq 0$,  $Q<0$.  Consider a timelike geodesic $\gamma:I\rightarrow K^*$ with constants of motion $E\neq 0$,  $Q<0$,  non-constant coordinate functions $s\mapsto t(s),\theta(s),\phi(s)$ and constant negative radial coordinate $-M<r<0$ such that $B(r,a,\Phi)<0$.  The differential equation \eqref{t theta eq} has the form

\begin{align*}
\frac{dt}{d\theta}=F(\theta),
\end{align*}
for some function $F$.  The variation of the $t$-coordinate on a full $\theta$-oscillation is given by 

\begin{align*}
\Delta t= 2\int_{\theta_1}^{\theta_2}F(\theta) d\theta.
\end{align*}

\begin{oss}
Notice the factor $"2"$ in the last expression.  On a full $\theta$-oscillation,  we have 
\begin{align*}
\int_{\theta_1}^{\theta_2} F(\theta)d\theta +\int_{\theta_2}^{\theta_1} -F(\theta)d\theta=2\int_{\theta_1}^{\theta_2}F(\theta) d\theta.
\end{align*}
\end{oss}

Therefore the variation of the $t$-coordinate after $n$ $\theta$-oscillations is $n\Delta t$ because of the periodicity of the $\theta$-coordinate.  If the geodesic is closed,  $\Delta t=0$, otherwise the coordinate $t(s)$ cannot be periodic.  Hence it suffices to study what happens on a single $\theta$-oscillation.

\begin{oss}\label{same integrals}
A motion of the kind $\pi\geq \theta_4\geq\theta\geq\theta_3>\pi/2$ produces the same integrals since in this $\theta$-interval $\cos\theta<0$,  hence with the substitution $u=\cos^2\theta$  we have $d\theta=\frac{1}{2}\frac{du}{\sqrt{u}\sqrt{1-u}}$.  Therefore
\begin{align*}
\int_{\theta_1}^{\theta_2} \frac{d\theta}{\sqrt{\Theta(\theta)}}=\int_{\theta_3}^{\theta_4} \frac{d\theta}{\sqrt{\Theta(\theta)}},\hspace{1cm}  \int_{\theta_1}^{\theta_2} \frac{\cos^2\theta\;d\theta}{\sqrt{\Theta(\theta)}}=\int_{\theta_3}^{\theta_4} \frac{\cos^2\theta\;d\theta}{\sqrt{\Theta(\theta)}}.
\end{align*}
Hence $\Delta t$ is the same.
\end{oss}

So without any loss of generality,  we may consider a motion of the type $0\leq \theta_1\leq\theta\leq\theta_2<\pi/2$.  Then we can integrate \eqref{t theta eq} on a full oscillation to get

\begin{align}
\frac{\sqrt{E^2+q}}{E} \Delta t=2B(r,a,\Phi)\int_{\theta_1}^{\theta_2}\frac{d\theta}{\sqrt{\Theta(\theta)}}+2a^2 \int_{\theta_1}^{\theta_2}\frac{cos^2\theta\; d\theta}{\sqrt{\Theta(\theta)}}.
\end{align}

We now have to compute the following integrals

\begin{align*}
I_1:=&\int_{\theta_1}^{\theta_2} \frac{d\theta}{\sqrt{\Theta(\theta)}},\\
I_2:=&\int_{\theta_1}^{\theta_2} \frac{\cos^2\theta\;d\theta}{\sqrt{\Theta(\theta)}}.
\end{align*}

Let us start from the first integral:
\begin{align}
I_1=-\frac{1}{2}\int_{u_{+}}^{u_{-}}\frac{du}{\sqrt{u}\sqrt{\hat{\Theta}(u)}},
\end{align}
where we have used the substitution $u:=\cos^2\theta$,  hence $d\theta=-\frac{1}{2}\frac{du}{\sqrt{u}\sqrt{1-u}}$ since\\ $\sin\theta\geq 0$ and $\cos\theta>0$ if $\theta_1\leq \theta\leq\theta_2$.  Now we can use \eqref{theta di u} and the substitution\\ $u=:u_{-}+(u_{+}-u_{-})y^2$ adopted in \cite{Kapec_2019} to get

\begin{align*}
\nonumber   I_1=&\frac{1}{2}\int_{u_{-}}^{u_{+}}\frac{du}{\sqrt{u}\sqrt{a^2(u_{+}-u)(u-u_{-})}}\\ 
\nonumber =&\frac{1}{2|a|}\int_0^1 \frac{2(u_{+}-u_{-})ydy}{\sqrt{u_{-}+(u_{+}-u_{-})y^2}\sqrt{\big(u_{+}-u_{-}-(u_{+}-u_{-})y^2\big)(u_{+}-u_{-})y^2}}\\ 
\nonumber =& \frac{1}{|a|}\int_0^1 \frac{dy}{\sqrt{u_{-}+(u_{+}-u_{-})y^2}\sqrt{1-y^2}}\\ \nonumber
 \nonumber=& \frac{1}{|a|\sqrt{u_{-}}} \int_0^1 \frac{dy}{\sqrt{1-y^2}\sqrt{1-\big( 1-\frac{u_{+}}{u_{-}} \big) y^2}}.
\end{align*}
With the same substitutions,  we also get

\begin{align*}
\nonumber I_2=&-\frac{1}{2}\int_{u_{+}}^{u_{-}}\frac{udu}{\sqrt{u}\sqrt{\hat{\Theta}(u)}}\\
\nonumber =&\frac{1}{2} \int_{u_{-}}^{u_{+}}\frac{udu}{\sqrt{u}\sqrt{a^2(u_{+}-u)(u-u_{-})}}\\
\nonumber =& \frac{1}{|a|}\int_0^1 \frac{\sqrt{u_{-}+(u_{+}-u_{-})y^2}}{\sqrt{1-y^2}}dy\\
\nonumber =& \frac{\sqrt{u_{-}}}{|a|}\int_0^1 \frac{\sqrt{1-\big(1-\frac{u_{+}}{u_{-}} \big)y^2}}{\sqrt{1-y^2}}dy.\\
\end{align*}

Then with the definition of the elliptic integrals in Appendix \ref{elliptic appendix}  we have

\begin{align}
I_1=&\frac{1}{|a|\sqrt{u_{-}}} \mathcal{K}\bigg(1-\frac{u_{+}}{u_{-}} \bigg),\\
I_2=&\frac{\sqrt{u_{-}}}{|a|} \mathcal{E} \bigg( 1-\frac{u_{+}}{u_{-}} \bigg).
\end{align}

Hence,  we get

\begin{align}\label{formula delta t before last one}
\Delta t=\frac{E}{\sqrt{E^2+q}}\bigg[ \frac{2B(r,a,\Phi)}{|a|\sqrt{u_{-}}} \mathcal{K}\bigg(1-\frac{u_{+}}{u_{-}} \bigg) +  2|a|\sqrt{u_{-}} \mathcal{E} \bigg( 1-\frac{u_{+}}{u_{-}} \bigg)\bigg].
\end{align}

Note that,  since $u_{+}>u_{-}>0$,  we have $1-\frac{u_{+}}{u_{-}}<0$,  and hence $\mathcal{E}(1-u_{+}/u_{-})>\mathcal{K}(1-u_{+}/u_{-})>0$ (see Appendix \ref{elliptic appendix}).
However,  the prefactor of $\mathcal{E}$ does not dominate the opposite of the prefactor of $\mathcal{K}$ for every negative $r$,  as one may check substituting $\hat{\Phi}=\Phi_{\pm}E/\sqrt{E^2+q}$ and $\hat{\mathcal{Q}}=\mathcal{Q}_{\pm}E^2/(E^2+q)$ from \eqref{class r=const} into $u_{\pm}$ given by \eqref{formula for u pm} and $\Phi=\Phi_{\pm}$ into $B(r,a,\Phi)$.  

From now on set $x:=1-u_{+}/u_{-}$.  The elliptic integral $\mathcal{K}$ can be written as a hypergeometric function (see \ref{elliptic as hypergeom}): 

\begin{align*}
\mathcal{K}(x)=\frac{\pi}{2}F\bigg(\frac{1}{2},\frac{1}{2};1;x\bigg).
\end{align*}
Using the Pfaff transformation (see \ref{hypergeometric trick})

\begin{align}\label{hypergeo trick}
F\bigg( \alpha,\beta;\gamma;x\bigg)=(1-x)^{-\alpha}F\bigg( \alpha,\gamma-\beta;\gamma;\frac{x}{x-1} \bigg),
\end{align}
we can decrease the modulus of the prefactor in front of the elliptic integral $\mathcal{K}$:

\begin{align}\label{transformation k}
\mathcal{K}(x)=\frac{\sqrt{u_{-}}}{\sqrt{u_{+}}}\mathcal{K}\bigg(\frac{x}{x-1}\bigg).
\end{align}
Hence we get

\begin{align}\label{last espression for delta t}
\Delta t=\frac{E}{\sqrt{E^2+q}} \bigg[ 2|a|\sqrt{u_{-}}\mathcal{E}(x)+\frac{2B(r,a,\Phi)}{|a|\sqrt{u_{+}}}\mathcal{K}\bigg(\frac{x}{x-1}\bigg) \bigg].
\end{align}
Now we compare the elliptic integrals,  after the Pfaff transformation.  Since $x<0$,  we have

\begin{align}\label{inequality elliptic integrals}
\mathcal{E}(x)>\mathcal{K}\bigg(\frac{x}{x-1}\bigg)>0,
\end{align}
by Rmk.   \ref{remark integral estimate}.  Next we claim that the prefactors of the elliptic integrals in  \eqref{last espression for delta t} satisfy

\begin{align}\label{ineq prefactors}
2|a|\sqrt{u_{-}}>-\frac{2B(r,a,\Phi)}{|a|\sqrt{u_{+}}}.
\end{align}
Indeed,  both sides of the inequality are positive,  so we can square them and use that $u_{+}u_{-}=-\hat{\mathcal{Q}}/a^2=-\frac{\mathcal{Q}}{a^2}\frac{E^2}{E^2+q}$ by \eqref{theta di u} to get an equivalent inequality

\begin{align}\label{last inequality}
-a^2\mathcal{Q}\frac{E^2}{E^2+q}>B^2(r,a,\Phi).
\end{align}

\begin{prop}\label{last proposition}
For both classes \eqref{class r=const} of $r=\textrm{const}$ timelike geodesics with $\mathcal{Q}<0$ the inequality \eqref{last inequality} holds if $B(r,a,\Phi_{\pm})< 0$ and $-M<r<0$.
\end{prop}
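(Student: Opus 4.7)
The plan is to verify the inequality \eqref{last inequality} by direct algebraic substitution of the explicit formulas \eqref{class r=const} for $\Phi_\pm$ and $\mathcal{Q}_\pm$. The key simplification is that for a spherical (\textit{i.e.}~$r=\mathrm{const}$) geodesic, both $R(r)=0$ and $R'(r)=0$. The first relation gives $\mathbb{P}(r)^2 = \Delta(r)[K - qr^2]$, which lets one rewrite
\[
E^2 B^2 \Delta(r) = E^2 r^4 \Delta(r) + 4EMr^3 \mathbb{P}(r) + 4M^2 r^2[K - qr^2],
\]
so the denominator $\Delta(r)$ in $B$ collapses; the second relation yields $MK = -2(E^2+q)r^3 + 3Mqr^2 - \mathfrak{X}r$, eliminating $K$ in favour of $r$ and the other constants of motion.

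Multiplying \eqref{last inequality} by $(E^2+q)\Delta(r) > 0$ and substituting $\Phi = \Phi_\pm$, $\mathcal{Q} = \mathcal{Q}_\pm$, each side splits into a rational part and a part proportional to $\sqrt{f(r,q)}$. Because $f(r,q) = E^2 r[-Mq+(E^2+q)r]\Delta(r)^2$ with non-negative bracket by the reality condition in Prop.~\ref{prop spherical geodesics}, one may write $\sqrt{f(r,q)} = |E|\Delta(r)\sqrt{h(r,q)}$ where $h(r,q):=r[-Mq+(E^2+q)r]\geq 0$, and an overall factor of $\Delta(r)$ comes out of the irrational piece. An explicit computation of $B(r,a,\Phi_\pm)\Delta(r)$ shows that its rational part factors as $-r^2(r+M)\Delta(r)/(M-r)$, and I expect a comparable clean factorisation for the $\mathcal{Q}_\pm$-side. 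After squaring, the square root disappears (since $\sqrt{f}^2 = f$ is polynomial), so the inequality collapses to a single polynomial inequality in $r$ and $q$, with parameters $a, E, M$.

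Finally, I would verify this polynomial inequality on the specified region: $-M<r<0$, $|a|<M$, $q<0$ and $E^2+q>0$ (the latter from Prop.~\ref{Q<0 condition}), together with the hypothesis $B(r,a,\Phi_\pm)<0$. By Lemma~\ref{lemm B<0 for class +} this hypothesis is equivalent, for the $\Phi_+$-class, to $q<\frac{E^2}{4M^2}(r^2+3Mr)$; for the $\Phi_-$-class the sign $B(r,a,\Phi_-)<0$ is automatic in this range (from the computation above, both terms of $B(r,a,\Phi_-)$ are then negative). Factoring out the positive quantities $(r+M)$, $(M-r)$ and $\Delta(r)$, the remaining sign check should reduce to exhibiting the difference of the two sides of \eqref{last inequality} as a manifestly non-negative expression in the allowed range.

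The main obstacle I expect is bookkeeping: the polynomials involved have high degree in $r$ and mix $E, q, a, M$ in many monomials, so the two classes $\Phi_\pm$ must be handled separately and careful factorisation (possibly via computer algebra) is needed. However, no new conceptual ingredient beyond the $R(r)=R'(r)=0$ identities and the sign information listed above seems to be required; the difficulty is purely algebraic.
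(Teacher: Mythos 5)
Your overall strategy is the same as the paper's: substitute the explicit pairs $(\Phi_{\pm},\mathcal{Q}_{\pm})$ into \eqref{last inequality}, clear the denominator $\Delta(r)$, split each class into a rational part and a part proportional to $\sqrt{f(r,q)}$, and reduce to a sign check on the region $-M<r<0$, $q<0$, $E^2+q>0$, using the reality condition $-Mq+(E^2+q)r\le 0$ and, for the $+$ class, the bound $q<\frac{E^2}{4M^2}(r^2+3Mr)$ from Lemma \ref{lemm B<0 for class +}. Your factorisation of the rational part of $B(r,a,\Phi_{\pm})\Delta(r)$ as $-r^2(M+r)\Delta(r)/(M-r)$ and your observation that $B(r,a,\Phi_{-})<0$ is automatic on $(-M,0)$ are both correct and consistent with what the paper obtains.

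The gap is that the decisive step --- the positivity of the resulting polynomial expression --- is asserted (``should reduce to exhibiting the difference \ldots as a manifestly non-negative expression'') rather than carried out, and it is in fact not manifest: it is the entire content of the paper's proof. For the class $(\Phi_-,\mathcal{Q}_-)$ the paper does \emph{not} square at all; it discards the $\sqrt{f}$ terms (which are non-negative on the region since $r\bigl[3E^2r+2q(M+r)\bigr]>0$ for $-M<r<0$, $q<0$) and then proves $G(r,q)>0$ through a chain of estimates that repeatedly exploits $-Mq+(E^2+q)r\le 0$ in the forms $q^2(r-M)\ge -E^2qr$ and $r^2(E^2+q)\ge Mqr$, ending with $qE^2r(r+3M)>0$. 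For the class $(\Phi_+,\mathcal{Q}_+)$ one \emph{must} square, and here your plan skips a necessary sign check: squaring only preserves (or reverses) the inequality once you know the sign of the $\sqrt{f}$ side, which the paper establishes before squaring and which forces the admissible window $q\in\bigl[-\tfrac{E^2r}{r-M},\tfrac{E^2}{4M^2}(r^2+3Mr)\bigr)$; the resulting quartic-in-$r$ inequality \eqref{last inequality class +} is then verified term by term using the upper bound on $q$. So your setup is sound and no conceptual ingredient is missing, but the proof is not complete until these two sign-sensitive polynomial verifications are actually performed; ``purely algebraic'' here still means several non-obvious estimates rather than a routine factorisation.
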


\begin{proof}
Consider the class $(\Phi_{-},\mathcal{Q}_{-})$.  Then \eqref{last inequality} becomes 

\begin{align}\label{first inequality last prop}
& E^2 M (E^2 + q) (M - r)^2 r^2 \Delta(r) A(r,q) >0,  
\end{align}
with 

\begin{align*}
A(r,q):=  & - \Delta(r) \bigg\{ a^2E^2 \big[ -Mq + (2E^2+q)r\big] + r\big[ -4M^2q^2 +Mq(E^2+4q)r + E^2 (6E^2+7q)r^2 \big] \bigg\} \\
& +2a^2E^2 \sqrt{f(r,q)} +2r \big[3 E^2 r + 2 q (M + r)\big] \sqrt{f(r,q)} .
\end{align*}

Since $E^2+q>0$ by Prop.  \ref{Q<0 condition},  $\Delta(r)>0$ if $r<0$,   \eqref{first inequality last prop} is equivalent to 

\begin{align}\label{A<0}
A(r,q)>0.  
\end{align}

Since $-M<r<0$,  $q<0$,   we have 

\begin{align}\label{A(r,q)<G(r,q)}
A(r,q)  \geq  & - \Delta(r) \bigg\{ a^2E^2 \big[ -Mq + (2E^2+q)r\big] + r\big[ -4M^2q^2 +Mq(E^2+4q)r + E^2 (6E^2+7q)r^2 \big] \bigg\} \\
\nonumber =: & G(r,q).
\end{align}
We claim that $G(r,q)>0$,  hence \eqref{A<0} is satisfied.  Indeed,  first $ -Mq + (2E^2+q)r<-Mq + (E^2+q)r\leq 0$ by Prop.  \ref{prop spherical geodesics}.  Second we show that 

\begin{align*}
-4M^2q^2 +Mq(E^2+4q)r + E^2 (6E^2+7q)r^2 \geq 0.
\end{align*}
Since  $-Mq + (E^2+q)r\leq 0$,  we can respectively use $q^2(r-M)\geq -E^2qr$ and $r^2(E^2+q)\geq Mqr$ in the following

\begin{align*}
&-4M^2q^2 +Mq(E^2+4q)r + E^2 (6E^2+7q)r^2 \\
&= ME^2qr + 6E^4r^2 + 4Mq^2 (r-M) + 7E^2qr^2\\
& \geq ME^2qr + 6E^4r^2 -4ME^2qr + 7E^2qr^2\\
& = -3ME^2qr + 6E^4r^2 + 7E^2qr^2\\
& = -3ME^2qr + 6E^2r^2 (E^2+q) + E^2qr^2\\
& \geq -3ME^2qr + 6ME^2 qr + E^2qr^2\\
& = qE^2r (r+3M)\\
& >0,
\end{align*}
where in the last inequality we used that $q<0$,  $E\neq 0$ and $-M<r<0$.\\



Consider now the class $(\Phi_{+},\mathcal{Q}_{+})$.  From Lemma \ref{lemm B<0 for class +},  $B(r,a,\Phi_{+})<0$ implies that $q<\frac{E^2}{4M^2} (r^2+3Mr)$.  Prop.  \ref{prop spherical geodesics} implies that $q\geq -\frac{E^2r}{r-M}$.  Since $-M<r<0$,  we have

\begin{align*}
-\frac{E^2r}{r-M} < \frac{E^2}{4M^2} (r^2+3Mr)<0.
\end{align*}
Therefore we must show that \eqref{last inequality} holds for the class $(\Phi_{+},\mathcal{Q}_{+})$ when $r\in(-M,0)$ and

\begin{align*}
q\in\bigg[ -\frac{E^2r}{r-M} , \frac{E^2}{4M^2} (r^2+3Mr) \bigg).
\end{align*}
For this class,  \eqref{last inequality} is equivalent to

\begin{align*}
E^2 M (E^2 + q) (M - r)^2 r^2 \Delta(r) \bigg\{ -G(r,q)+2a^2E^2 \sqrt{f(r,q)} +2r \big[ 3E^2r +2q(M+r)\big] \sqrt{f(r,q)}\bigg\}<0.
\end{align*}
Since $E^2+q>0$ by Prop.  \ref{Q<0 condition},  $\Delta(r)>0$ if $r<0$,  $-M<r<0$  the latter is equivalent to

\begin{align*}
 -G(r,q)<-2a^2E^2 \sqrt{f(r,q)} -2r \big[ 3E^2r +2q(M+r)\big] \sqrt{f(r,q)}<0.
\end{align*} 
Now we square the latter,  reversing the sign of the inequality,  to get 

\begin{align} \label{last inequality class +}
a^4 E^4 - 2 a^2 E^2 r (-4 M q + E^2 r) + 
  r^2 \big[16 M^2 q^2 + 8 E^2 M q r - E^2 (15 E^2 + 16 q) r^2\big]>0.
\end{align}
We claim that  

\begin{align*}
&-4 M q + E^2 r >0,\\
&16 M^2 q^2 + 8 E^2 M q r - E^2 (15 E^2 + 16 q) r^2>0,
\end{align*}
hence \eqref{last inequality class +} holds.  

Indeed,  using that $q<\frac{E^2}{4M^2} (r^2+3Mr)$ and $-M<r<0$,  we have

\begin{align*}
-4 M q + E^2 r > -4M \frac{E^2}{4M^2} (r^2+3Mr) + E^2 r = -E^2 r \big( \frac{r}{M}+2\big)>0,
\end{align*}

\begin{align*}
16 M^2 q^2 + 8 E^2 M r q - E^2 (15 E^2 + 16 q) r^2 >& 16 M^2 \bigg[  \frac{E^2}{4M^2} (r^2+3Mr) \bigg]^2 +8E^2Mr \bigg[ \frac{E^2}{4M^2} (r^2+3Mr)  \bigg]\\
& - E^2 (15 E^2 + 16 q) r^2\\
=&\frac{E^2 r^2}{M^2} (E^2r^2 + 8E^2 Mr -16M^2q),
\end{align*}
which is positive if and only if

\begin{align*}
q< \frac{E^2r^2 + 8E^2Mr}{16M^2}.
\end{align*}
The last indeed holds because $-M<r<0$,  hence

\begin{align*}
\frac{E^2r^2 + 8E^2Mr}{16M^2} >  \frac{E^2}{4M^2} (r^2+3Mr).
\end{align*}

\end{proof}

Combining  \eqref{last espression for delta t},  \eqref{inequality elliptic integrals} and Prop.  \ref{last proposition}  we conclude that $\Delta t>0$ for all $-M<r<0$ such that $B(r,a,\Phi)< 0$,  which shows that the spherical timelike geodesics cannot be closed.

We have ruled out all the possibilities on Fig. \ref{figure steps of proof},  therefore there are no closed timelike geodesics in the Kerr-star spacetime.

\section{Conclusion}

We considered the Kerr-star spacetime,  namely the analytical extension of the Kerr spacetime over the horizons and in the negative radial region.  This spacetime contains closed timelike curves through every point below the inner horizon,  in the BL block III.  However we proved that the timelike geodesics cannot be closed.  Using simple geometrical arguments and the first integral differential equations,  we ruled out closed geodesics intersecting horizons,  those in the axis and most of the remaining ones.  It turned out that the most difficult geodesics to analyse were the spherical ones at negative radii with negative Carter constant.  We first excluded those with constant $\theta$-coordinate in Prop.  \ref{u-=u+ Q<0 prop}.  Then in Prop.  \ref{prop spherical geodesics} we found the two classes of spherical timelike geodesics parametrized by the constant $r$-coordinate and the Lorentzian energy $q$.  Hence we computed their variation $\Delta t$ of the $t$-coordinate on a full $\theta$-oscillation in eq.  \eqref{formula delta t before last one}.  Using the Pfaff transformation on the first elliptic integral in \eqref{formula delta t before last one},  we were able to compare the elliptic integrals in the resulting expression of $\Delta t$,  eq. \eqref{last espression for delta t}.  Finally we proved Prop.  \ref{last proposition} showing that $\Delta t$ is in fact positive for any spherical geodesic with negative Carter constant.\\

In this article  we proved the nonexistence of closed timelike geodesic in the Kerr-star spacetime (analytical extension of the slow Kerr spacetime over the horizons and in the negative radial region).  Combining Thm.  \ref{main theorem} with the result in \cite{sanzeni2024non},  it follows that despite the existence of causality violations,  the Kerr-star spacetime does not contain closed causal geodesics.

\vspace{0.5cm}

\normalsize {\thanks{ \noindent \textbf{Acknowledgements.} I would like to thank my PhD supervisors S.  Nemirovski and S.  Suhr for many fruitful discussions. }

\vspace{0.5cm}

\thanks{\noindent \textbf{Funding.} This research is funded by the Deutsche Forschungsgemeinschaft (DFG,  German Research Foundation) – Project-ID 281071066 – TRR 191.

\addtocontents{toc}{\setcounter{tocdepth}{0}}
\section*{Declarations}

\subsection*{Financial or Non-financial Intersts} The author has no relevant financial or non-financial interests to disclose.

\subsection*{Conflict of interest} The author has no competing interests to declare that are relevant to the content of this article.

\subsection*{Data availability statement}
Data sharing is not applicable to this article as no new data were created or analyzed in this study.

\appendix
\addcontentsline{toc}{chapter}{Appendices}
\addtocontents{toc}{\setcounter{tocdepth}{-1}}

\section{Elliptic integrals and hypergeometric functions}\label{elliptic appendix}

\begin{defn}

Let $\phi\in[-\pi/2,\pi/2]$.  The elliptic integral of the first kind is

\begin{align*}
\mathcal{F}(\phi|k):=\int_0^{\sin\phi}\frac{ds}{\sqrt{(1-s^2)(1-ks^2)}}.
\end{align*}

The complete $(\phi=\pi/2)$ elliptic integral of the first kind is

\begin{align*}
\mathcal{K}(k):=\mathcal{F}(\pi/2|k)=\int_0^1\frac{ds}{\sqrt{(1-s^2)(1-ks^2)}}.
\end{align*}

The elliptic integral of the second kind is

\begin{align*}
\mathcal{E}(\phi|k):=\int_0^{\sin\phi}\sqrt{\frac{1-ks^2}{1-s^2}}ds.
\end{align*}

The complete $(\phi=\pi/2)$ elliptic integral of the second kind is

\begin{align*}
\mathcal{E}(k):=\mathcal{E}(\pi/2|k)=\int_0^1\sqrt{\frac{1-ks^2}{1-s^2}}ds.
\end{align*}

We define also
\begin{align*}
\mathcal{D}(k):=\int_0^1\frac{s^2ds}{\sqrt{(1-s^2)(1-ks^2)}}=\frac{\mathcal{K}(k)-\mathcal{E}(k)}{k}=-2\frac{\partial \mathcal{E}(k)}{\partial k}.
\end{align*}

\end{defn}

\begin{oss}\label{remark integral estimate}
Let $0<z,s<1, \; x<0$.  
\begin{align*}
\sqrt{\frac{1-zs^2}{1-s^2}}>\frac{1}{\sqrt{1-s^2}\sqrt{1-xs^2}}\hspace*{0.5cm}\Longleftrightarrow\hspace*{0.5cm}(1-zs^2)(1-xs^2)>1 \hspace*{0.5cm}\Longrightarrow\hspace*{0.5cm}\mathcal{E}(x)>\mathcal{K}(z).
\end{align*}
If $z=x/(x-1)$,  it satisfies $0<z<1$ and we have 
\begin{align*}
(1-zs^2)(1-xs^2)>1 \hspace*{0.5cm}\Longleftrightarrow\hspace*{0.5cm} x+z<xzs^2\hspace*{0.5cm}\Longleftrightarrow\hspace*{0.5cm} 1>s^2,
\end{align*}
hence $\mathcal{E}(x)>\mathcal{K}(z)$.
\end{oss}

\begin{defn}[\cite{hypergeo}]
The hypergeometric function $F(\alpha,\beta;\gamma;x)$ is defined by the series 

\begin{align*}
\sum_{n=0}^{\infty} \frac{(\alpha)_n (\beta)_n}{(\gamma)_n n!}x^n,
\end{align*}
where $(\alpha)_n:=\alpha(\alpha+1)\cdot ... \, \cdot (\alpha+n-1)$ for $n>0$,  $(\alpha)_0\equiv 1$ (analogous for the others),  for $|x|<1$,  and by continuation elsewhere.
\end{defn}

\begin{prop}[\textit{Euler's integral representation},  see \cite{hypergeo}]\label{Euler integral rep}
If $\textrm{Re}\;\gamma>\textrm{Re}\;\beta>0$,  then

\begin{align*}
F(\alpha,\beta;\gamma;x)=\frac{\Gamma(\gamma)}{\Gamma(\beta)\Gamma(\gamma-\beta)}\int_0^1 t^{\beta-1} (1-t)^{\gamma-\beta-1}(1-xt)^{-\alpha}dt
\end{align*}
in the complex $x-$plane cut along the real axis from $1$ to $+\infty$,  where $\Gamma(x):=\int_0^{\infty} t^{x-1} e^{-t} dt$ is the Euler's gamma function.
\end{prop}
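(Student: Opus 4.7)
The plan is to start from the series definition of $F(\alpha,\beta;\gamma;x)$ valid for $|x|<1$ and transform its coefficients using a Beta-function identity, then interchange sum and integral and recognize the binomial series. Concretely, I would first recall that for $\mathrm{Re}\,\gamma>\mathrm{Re}\,\beta>0$ one has the Beta integral
\begin{align*}
\int_0^1 t^{\beta+n-1}(1-t)^{\gamma-\beta-1}\,dt = \frac{\Gamma(\beta+n)\,\Gamma(\gamma-\beta)}{\Gamma(\gamma+n)},
\end{align*}
so that $(\beta)_n/(\gamma)_n = \Gamma(\gamma)/\bigl[\Gamma(\beta)\Gamma(\gamma-\beta)\bigr]\cdot \int_0^1 t^{\beta+n-1}(1-t)^{\gamma-\beta-1}\,dt$, using $(\beta)_n=\Gamma(\beta+n)/\Gamma(\beta)$ and likewise for $\gamma$.

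Next I would substitute this identity into each term of the series $\sum_{n\ge 0}\frac{(\alpha)_n(\beta)_n}{(\gamma)_n\,n!}\,x^n$, obtaining
\begin{align*}
F(\alpha,\beta;\gamma;x)=\frac{\Gamma(\gamma)}{\Gamma(\beta)\Gamma(\gamma-\beta)}\sum_{n=0}^{\infty}\frac{(\alpha)_n}{n!}\,x^n\int_0^1 t^{\beta+n-1}(1-t)^{\gamma-\beta-1}\,dt.
\end{align*}
To interchange sum and integral I would fix $x$ with $|x|<1$ and apply Fubini/dominated convergence: on $t\in[0,1]$ one has $|xt|\le |x|<1$, and the majorant $\sum_n \frac{|(\alpha)_n|}{n!}|x|^n\,t^{\mathrm{Re}\,\beta-1}(1-t)^{\mathrm{Re}\,\gamma-\beta-1}$ is integrable thanks to $\mathrm{Re}\,\gamma>\mathrm{Re}\,\beta>0$ (which controls the endpoint singularities) and the convergence of the binomial series for $|xt|<1$. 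After the swap, the integrand contains the sum $\sum_{n\ge 0}\frac{(\alpha)_n}{n!}(xt)^n$, which is precisely the binomial series $(1-xt)^{-\alpha}$.

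Finally I would extend the identity from the disk $|x|<1$ to the full cut plane $\mathbb{C}\setminus[1,+\infty)$ by analytic continuation. The right-hand side is holomorphic in $x$ on this domain: for $x$ avoiding $[1,\infty)$, the factor $(1-xt)^{-\alpha}$ (with the principal branch) is a continuous function of $(t,x)\in[0,1]\times(\mathbb{C}\setminus[1,\infty))$, holomorphic in $x$ for each $t$, and bounded uniformly on compact $x$-sets, so Morera's theorem applies to the integral. Both sides are analytic and agree on $|x|<1$, hence on the common connected domain. The main obstacle is essentially bookkeeping: ensuring that the dominated-convergence majorant is integrable at the endpoints $t=0,1$ using the hypothesis $\mathrm{Re}\,\gamma>\mathrm{Re}\,\beta>0$, and verifying local uniform boundedness of $(1-xt)^{-\alpha}$ on compact subsets of the cut plane to legitimize Morera's theorem for the analytic continuation.
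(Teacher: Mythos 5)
The paper does not prove this proposition at all: it is quoted verbatim from the cited reference \cite{hypergeo} (it is Theorem 2.2.1 there), so there is no in-paper argument to compare against. Your proposal is correct and is precisely the standard proof given in that reference: rewrite $(\beta)_n/(\gamma)_n$ via the Beta integral, justify the interchange of sum and integral for $|x|<1$ using the hypothesis $\mathrm{Re}\,\gamma>\mathrm{Re}\,\beta>0$ to control the endpoint singularities, sum the binomial series $(1-xt)^{-\alpha}$, and extend to the cut plane by analytic continuation of both sides.
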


\begin{prop}[\cite{hypergeo}]\label{elliptic as hypergeom}
We can write the complete elliptic integral of the first kind as

\begin{align*}
\mathcal{K}(x)=\frac{\pi}{2} F\bigg(\frac{1}{2},\frac{1}{2};1;x\bigg).
\end{align*}
\end{prop}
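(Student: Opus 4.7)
The plan is to use Euler's integral representation of the hypergeometric function (Prop.~\ref{Euler integral rep}), which is already available in the appendix, and reduce it to the defining integral of $\mathcal{K}(x)$ by a simple substitution. This is the standard route: the identity holds precisely because both sides are essentially the same integral after a change of variables.

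First I would apply Prop.~\ref{Euler integral rep} with $\alpha=\beta=\tfrac12$ and $\gamma=1$. The hypotheses $\mathrm{Re}\,\gamma>\mathrm{Re}\,\beta>0$ are satisfied, so
\begin{align*}
F\!\left(\tfrac12,\tfrac12;1;x\right)=\frac{\Gamma(1)}{\Gamma(\tfrac12)\Gamma(\tfrac12)}\int_0^1 t^{-1/2}(1-t)^{-1/2}(1-xt)^{-1/2}\,dt.
\end{align*}
Using $\Gamma(1)=1$ and the classical value $\Gamma(\tfrac12)=\sqrt{\pi}$ (so $\Gamma(\tfrac12)^2=\pi$), this simplifies to
\begin{align*}
F\!\left(\tfrac12,\tfrac12;1;x\right)=\frac{1}{\pi}\int_0^1 \frac{dt}{\sqrt{t(1-t)(1-xt)}}.
\end{align*}

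Next I would perform the substitution $t=s^2$, $dt=2s\,ds$, with $s$ running from $0$ to $1$. Then $\sqrt{t}=s$, and the integrand becomes
\begin{align*}
\frac{2s\,ds}{s\sqrt{(1-s^2)(1-xs^2)}}=\frac{2\,ds}{\sqrt{(1-s^2)(1-xs^2)}}.
\end{align*}
Therefore
\begin{align*}
F\!\left(\tfrac12,\tfrac12;1;x\right)=\frac{2}{\pi}\int_0^1\frac{ds}{\sqrt{(1-s^2)(1-xs^2)}}=\frac{2}{\pi}\,\mathcal{K}(x),
\end{align*}
which is equivalent to the desired identity $\mathcal{K}(x)=\tfrac{\pi}{2}F(\tfrac12,\tfrac12;1;x)$.

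There is essentially no obstacle here: the only non-elementary ingredients are the Euler integral representation (already stated and cited in Prop.~\ref{Euler integral rep}) and the value $\Gamma(\tfrac12)=\sqrt{\pi}$, which is standard. The proof therefore reduces to two lines: apply Prop.~\ref{Euler integral rep}, then substitute $t=s^2$. If one wished to avoid Prop.~\ref{Euler integral rep} altogether, an alternative route would be to expand $(1-xs^2)^{-1/2}$ as a binomial series, integrate term by term using the Wallis-type identity $\int_0^1 s^{2n}(1-s^2)^{-1/2}\,ds=\tfrac{\pi}{2}\tfrac{(1/2)_n}{n!}$, and recognize the resulting series as $\tfrac{\pi}{2}F(\tfrac12,\tfrac12;1;x)$; but the Euler-integral approach is shorter given what the appendix has already set up.
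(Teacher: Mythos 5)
Your proposal is correct and follows exactly the same route as the paper's proof: apply Euler's integral representation (Prop.~\ref{Euler integral rep}) with $\alpha=\beta=\tfrac12$, $\gamma=1$, use $\Gamma(\tfrac12)=\sqrt{\pi}$ and $\Gamma(1)=1$, and substitute $t=s^2$. The paper states these steps in one line; you have merely written them out in full.
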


\begin{proof}
Use the integral representation of the hypergometric function given in Prop. \ref{Euler integral rep},  the integral substitution $t=s^2$,  with $\Gamma(\frac{1}{2})=\sqrt{\pi}$,  $\Gamma(1)=1$.
\end{proof}

\begin{prop}[\textit{"Pfaff's formula",  see Theorem $2.2.5$ of \cite{hypergeo}}]\label{hypergeometric trick}

\begin{align*}
F(\alpha,\beta;\gamma;x)=(1-x)^{-\alpha} F\bigg(\alpha, \gamma-\beta;\gamma;\frac{x}{x-1}\bigg).
\end{align*}
\end{prop}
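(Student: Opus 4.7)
The plan is to derive Pfaff's formula directly from the Euler integral representation stated in Proposition \ref{Euler integral rep}, which is the standard route and the one suggested by the placement of Proposition \ref{Euler integral rep} just before the statement. I will treat the generic case $\operatorname{Re}\gamma > \operatorname{Re}\beta > 0$ (so that both sides admit integral representations) and then appeal to analytic continuation in the parameters $\alpha,\beta,\gamma$ for the general statement.

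Starting from
\[
F(\alpha,\beta;\gamma;x) = \frac{\Gamma(\gamma)}{\Gamma(\beta)\Gamma(\gamma-\beta)}\int_0^1 t^{\beta-1}(1-t)^{\gamma-\beta-1}(1-xt)^{-\alpha}\,dt,
\]
the first step is the change of variable $t = 1-s$, which swaps the roles of $t$ and $1-t$ in the integrand and turns the Pochhammer-type weight $t^{\beta-1}(1-t)^{\gamma-\beta-1}$ into $s^{\gamma-\beta-1}(1-s)^{\beta-1}$. This is the substitution that effectively exchanges $\beta$ and $\gamma-\beta$, producing the second parameter $\gamma-\beta$ on the right-hand side of Pfaff's identity.

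The second step is a purely algebraic manipulation of the factor $(1-x(1-s))^{-\alpha}$. I rewrite
\[
1-x(1-s) = (1-x) + xs = (1-x)\Bigl(1 - \tfrac{x}{x-1}\,s\Bigr),
\]
which pulls out the prefactor $(1-x)^{-\alpha}$ cleanly and leaves the new integrand with a linear factor of the form $(1-ys)^{-\alpha}$ where $y = x/(x-1)$. After these two operations the integral becomes
\[
(1-x)^{-\alpha}\,\frac{\Gamma(\gamma)}{\Gamma(\gamma-\beta)\Gamma(\beta)}\int_0^1 s^{(\gamma-\beta)-1}(1-s)^{\beta-1}\Bigl(1-\tfrac{x}{x-1}s\Bigr)^{-\alpha}\,ds,
\]
which is exactly the Euler integral representation of $F\bigl(\alpha,\gamma-\beta;\gamma;\tfrac{x}{x-1}\bigr)$, since the conditions $\operatorname{Re}\gamma > \operatorname{Re}(\gamma-\beta) > 0$ hold together with $\operatorname{Re}\gamma > \operatorname{Re}\beta > 0$.

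The only delicate point is checking that the cut-plane domain behaves well under $x\mapsto y = x/(x-1)$: this fractional linear map sends $\mathbb{C}\setminus[1,+\infty)$ to $\mathbb{C}\setminus[1,+\infty)$ (it fixes $0$, sends $\infty \mapsto 1$, and the ray $[1,+\infty)$ to itself), so applying Euler's representation on the right-hand side is legitimate in the same cut plane where the left-hand side is defined. The restriction $\operatorname{Re}\gamma > \operatorname{Re}\beta > 0$ used for the integral representation is then removed by analytic continuation in $\alpha,\beta,\gamma$, since both sides are analytic in these parameters (with $\gamma\notin\{0,-1,-2,\dots\}$) and agree on a non-empty open set. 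I do not anticipate serious obstacles: the proof is essentially a one-line substitution plus a bookkeeping of the Gamma-factors and the cut plane.
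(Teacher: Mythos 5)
Your proof is correct and follows exactly the route the paper indicates: its proof is the one-line instruction to apply the Euler integral representation (Prop.~\ref{Euler integral rep}) with the substitution $t=1-s$, which is precisely what you carry out in detail. The extra care you take with the factorization $1-x(1-s)=(1-x)\bigl(1-\tfrac{x}{x-1}s\bigr)$, the cut-plane image under $x\mapsto x/(x-1)$, and the analytic continuation in the parameters is all sound and simply fills in what the paper leaves implicit.
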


\begin{proof}
Use \ref{Euler integral rep} and the integral substitution $t=1-s$.
\end{proof}

\addcontentsline{toc}{chapter}{Conclusions}

\bibliography{bibliography}
\bibliographystyle{acm}

\end{document}